\documentclass[a4paper,USenglish]{lipics-v2018}

\nolinenumbers

\usepackage{microtype}
\hideLIPIcs

\bibliographystyle{plainurl}

\usepackage{xspace}
\usepackage{mathtools}
\newcommand{\F}{\ensuremath{\mathcal{F}}\xspace}
\newcommand{\Q}{\ensuremath{\mathcal{Q}}\xspace}
\newcommand{\Oh}{\mathcal{O}}
\newcommand{\notcontainment}{\ensuremath{\mathsf{NP \not\subseteq coNP/poly}}\xspace}
\newcommand{\containment}{\ensuremath{\mathsf{NP  \subseteq coNP/poly}}\xspace}
\newcommand{\yes}{\textsc{yes}\xspace}

\newcommand{\minFdel}{\ensuremath{\textsc{opt}_\F}\xspace}
\newcommand{\minFdelsol}{\ensuremath{\textsc{optsol}_\F}\xspace}
\newcommand{\minFdelsolwith}{\ensuremath{\minFdelsol\textsc{st}}\xspace}
\newcommand{\FDeletion}{\textsc{$\mathcal{F}$-Deletion}\xspace}
\newcommand{\tw}{\ensuremath{\mathrm{\textsc{tw}}}\xspace}
\newcommand{\td}{\ensuremath{\mathrm{\textsc{td}}}\xspace}

\newcommand{\pw}{\ensuremath{\mathrm{\textsc{pw}}}\xspace}
\newcommand{\bound}{\ensuremath{b}\xspace}
\newcommand{\lab}{\ensuremath{L}\xspace}
\newcommand{\graphs}{\ensuremath{\mathcal{G}}\xspace}
\newcommand{\lgraphs}{\ensuremath{\mathcal{G}^X}\xspace}
\newcommand{\bgraphs}{\ensuremath{\mathcal{G}_t}\xspace}
\newcommand{\lbgraphs}{\ensuremath{\mathcal{G}^X_t}\xspace}
\newcommand{\congraphs}{\ensuremath{\mathcal{G}_{\textsc{con}}}\xspace}
\newcommand{\lcongraphs}{\ensuremath{\mathcal{G}^X_{\textsc{con}}}\xspace}
\newcommand{\attlbgraphs}{\ensuremath{\mathcal{G}^X_{t,\textsc{att}}}\xspace}
\newcommand{\attbgraphs}{\ensuremath{\mathcal{G}_{t,\textsc{att}}}\xspace}
\newcommand{\nonisobgraphs}[1]{\ensuremath{\mathcal{G}_{#1,\textsc{def}(#1)}}\xspace}

\newcommand{\numberof}{\ensuremath{N}\xspace}

\newcommand{\components}{\ensuremath{\textsc{\#cc}}\xspace}
\newcommand{\iscon}{\ensuremath{\textsc{isCon}}\xspace}
\newcommand{\setcomponents}{\ensuremath{\textsc{cc}}\xspace}

\newcommand{\extend}{\ensuremath{\textsc{ext}}\xspace}
\newcommand{\forget}{\ensuremath{\textsc{forget}}\xspace}
\newcommand{\multipieces}{\ensuremath{\textsc{mpcs}}\xspace}
\newcommand{\pieces}{\ensuremath{\textsc{pcs}}\xspace}

\newcommand{\splitPi}{\ensuremath{\textsc{split}}\xspace}

\newcommand{\leqm}{\ensuremath{\preceq_m}\xspace}
\newcommand{\leqb}{\leqm\xspace}
\newcommand{\leql}{\leqm\xspace}
\newcommand{\leqlb}{\leqm\xspace}

\newcommand{\folio}{\ensuremath{\textsc{folio}}\xspace}
\newcommand{\foliotstar}{\ensuremath{\textsc{folio}^*_{\Q,t}}\xspace}
\newcommand{\foliotplusonestar}{\ensuremath{\textsc{folio}^*_{\Q,t+1}}\xspace}

\newcommand{\Fdel}{solution\xspace}

\newcommand{\piaone}{\ensuremath{\Pi_{A_1}}\xspace}
\newcommand{\piatwo}{\ensuremath{\Pi_{A_2}}\xspace}
\newcommand{\pib}{\ensuremath{\Pi_{B}}\xspace}
\newcommand{\Rhat}[1]{\ensuremath{\hat{\mathcal{R}}^{\piaone,\piatwo,#1}}\xspace}
\newcommand{\Rtilde}{\ensuremath{\widetilde{\mathcal{R}}^{\piatwo,\piaone}}\xspace}
\newcommand{\Qhat}[1]{\ensuremath{\hat{\mathcal{Q}}^{\piaone,\piatwo,#1}}\xspace}
\newcommand{\Qtilde}{\ensuremath{\widetilde{\mathcal{Q}}^{\piatwo,\piaone}}\xspace}

\newcommand{\shrink}{\hspace{-0.4cm}}

\theoremstyle{plain}
\newtheorem{innerclaim}[theorem]{Claim}

\newtheorem{proposition}[theorem]{Proposition}
\newtheorem{observation}[theorem]{Observation}
\newtheorem{procedure}[theorem]{Procedure}
\let\plainqed\qedsymbol
\newcommand{\claimqed}{$\lrcorner$}
\newenvironment{innerproof}[1][\proofname]{\begin{proof}[#1]\renewcommand{\qedsymbol}{\claimqed}}{\end{proof}\renewcommand{\qedsymbol}{\plainqed}}

\newcommand{\defparproblem}[4]{\par
 \vspace{3mm}
\noindent\fbox{
 \begin{minipage}{0.96\textwidth}
 \begin{tabular*}{\textwidth}{@{\extracolsep{\fill}}lr} #1 & {\bf{Parameter:}} #3 \vspace{1mm} \\ \end{tabular*}
 {\textbf{Input:}} #2
	\vspace{1mm}\\%
 {\textbf{Question:}} #4
 \end{minipage}
 }
 \vspace{3mm}
\par
}


\title{Polynomial Kernels for Hitting Forbidden Minors  under Structural Parameterizations}


\author{Bart M.\,P. Jansen}{Eindhoven University of Technology\\{P.O. Box 513, 5600 MB Eindhoven, The Netherlands}}{b.m.p.jansen@tue.nl}{http://orcid.org/0000-0001-8204-1268}{}

\author{Astrid Pieterse}{Eindhoven University of Technology\\{P.O. Box 513, 5600 MB Eindhoven, The Netherlands}}{a.pieterse@tue.nl}{http://orcid.org/0000-0003-3721-6721}{}

\authorrunning{B.\,M.\,P. Jansen and A. Pieterse}

\Copyright{Bart M.\,P. Jansen and Astrid Pieterse}

\subjclass{%
\ccsdesc[500]{Theory of computation~Graph algorithms analysis},
\ccsdesc[500]{Theory of computation~Parameterized complexity and exact algorithms}}

\keywords{Kernelization,
\F-minor free deletion,
Treedepth modulator,
Structural parameterization}

\category{}

\relatedversion{}

\supplement{}

\funding{This work was supported by NWO Veni grant ``Frontiers in Parameterized Preprocessing'' and NWO Gravitation grant ``Networks''. }


\EventEditors{John Q. Open and Joan R. Access}
\EventNoEds{2}
\EventLongTitle{42nd Conference on Very Important Topics (ESA 2018)}
\EventShortTitle{ESA 2018}
\EventAcronym{CVIT}
\EventYear{2016}
\EventDate{December 24--27, 2016}
\EventLocation{Little Whinging, United Kingdom}
\EventLogo{}
\SeriesVolume{42}
\ArticleNo{23}

\begin{document}

\maketitle

\begin{abstract}
We investigate polynomial-time preprocessing for the problem of hitting forbidden minors in a graph, using the framework of kernelization. For a fixed finite set of connected graphs~$\F$, the \textsc{$\F$-Deletion} problem is the following: given a graph~$G$ and integer~$k$, is it possible to delete~$k$ vertices from~$G$ to ensure the resulting graph does not contain any graph from~$\F$ as a minor? Earlier work by Fomin, Lokshtanov, Misra, and Saurabh~[FOCS'12] showed that when~$\F$ contains a planar graph, an instance~$(G,k)$ can be reduced in polynomial time to an equivalent one of size~$k^{\Oh(1)}$. In this work we focus on structural measures of the complexity of an instance, with the aim of giving nontrivial preprocessing guarantees for instances whose solutions are large. Motivated by several impossibility results, we parameterize the \textsc{$\F$-Deletion} problem by the size of a vertex modulator whose removal results in a graph of constant treedepth~$\eta$. 

We prove that for each set~$\F$ of connected graphs and constant~$\eta$, the \textsc{$\F$-Deletion} problem parameterized by the size of a treedepth-$\eta$ modulator has a polynomial kernel. Our kernelization is fully explicit and does not depend on protrusion reduction or well-quasi-ordering, which are sources of algorithmic non-constructivity in earlier works on \textsc{$\F$-Deletion}. Our main technical contribution is to analyze how models of a forbidden minor in a graph~$G$ with modulator~$X$, interact with the various connected components of~$G-X$. Using the language of labeled minors, we analyze the fragments of potential forbidden minor models that can remain after removing an optimal \textsc{$\F$-Deletion} solution from a single connected component of~$G-X$. By bounding the number of different types of behavior that can occur by a polynomial in~$|X|$, we obtain a polynomial kernel using a recursive preprocessing strategy. Our results extend earlier work for specific instances of \textsc{$\F$-Deletion} such as \textsc{Vertex Cover} and \textsc{Feedback Vertex Set}. It also generalizes earlier preprocessing results for \textsc{$\F$-Deletion} parameterized by a vertex cover, which is a treedepth-one modulator. 
\end{abstract}

\section{Introduction}
How, and under which circumstances, can a polynomial-time algorithm prune the easy parts of an NP-hard problem input, without changing its answer? This question can rigorously be answered using the notion of kernelization~\cite{Bodlaender09,GuoN07a,Kratsch14} which originated in parameterized complexity theory~\cite{CyganFKLMPPS15,DowneyF13} where it can be naturally framed. After choosing a \emph{complexity parameter} for the NP-hard problem of interest, which associates to every input~$x \in \Sigma^*$ an integer~$k \in \mathbb{N}$ that expresses its difficulty under the chosen type of measurement, the theory postulates that a good preprocessing algorithm can be captured by the notion of a \emph{polynomial kernelization}: a polynomial-time algorithm that, given a parameterized instance~$(x,k) \in \Sigma^* \times \mathbb{N}$, outputs an instance~$(x',k')$ with the same answer whose size is bounded polynomially in~$k$. Not all parameterized problems admit polynomial kernelizations, and one can find meaningful ways to preprocess an NP-hard problem by studying those parameterizations for which it does. The study of kernelization has blossomed over the last decade, resulting in a myriad of interesting techniques for obtaining polynomial kernelizations~\cite{BodlaenderFLPST16,Fernau16,Gutin16,KratschW12,Misra16a}, as well as frameworks for proving the non-existence of polynomial kernelizations under complexity-theoretic assumptions~\cite{Bodlaender09,BodlaenderDFH09,DellM14,Drucker12,FortnowS11}.

Originally, the study of kernelization focused on the \emph{natural parameterizations} of (the decision variants of) search problems, where the complexity parameter~$k$ measures the size of the solution. A classic example~\cite{ChenKJ01,NemhauserT75} is that an instance~$(G,k)$ of the \textsc{$k$-Vertex Cover} problem, which asks whether an undirected graph~$G$ has a vertex cover of size~$k$, can efficiently be reduced to an equivalent instance with at most~$2k$ vertices. This guarantees that efficient pruning can be done on large inputs that have small vertex covers. However, such guarantees are meaningless when the smallest vertex cover contains more than half the vertices. By choosing a parameter that measures the structure of the input graph, rather than the size of the desired solution, one can hope to develop provably good preprocessing procedures even for inputs whose solutions are large. An early example of this approach was given by Jansen and Bodlaender~\cite{JansenB13}, who showed that an instance of the \textsc{Vertex Cover} problem can efficiently be reduced to size~$\Oh(\ell^3)$, where~$\ell$ is the size of a smallest \emph{feedback vertex set} in~$G$: \textsc{Vertex Cover} parameterized by the size of a feedback vertex set has a cubic-vertex kernel. The result effectively conveys that large instances of \textsc{Vertex Cover} that are~$\ell$ vertex-deletions away from being acyclic, can be shrunk to size~$\Oh(\ell^3)$ in polynomial time.

\subparagraph*{Problem statement}
To understand the power of polynomial-time preprocessing algorithms over inputs to NP-hard problems that exhibit some structural regularities, but whose solutions are generally large, we set out to answer the following question:

\begin{quote}
For which structural parameterizations of NP-hard graph problems is it possible to obtain polynomial kernelizations?
\end{quote}

Our goal is to answer this question for a rich class of problems, in terms of a rich class of structural parameterizations. Existing lower bounds show that, in general graphs, it is unlikely that a logical characterization exists of the problems admitting polynomial kernelizations for structural parameterizations (cf.~\cite[\S 1]{FominJP14}), even though meta-theorems in terms of logical definability or finite integer index are possible when dealing with inputs from sparse graph families~\cite{BodlaenderFLPST16,GajarskyHOORRVS17}. We therefore target the class of \textsc{$\mathcal{F}$-Minor-Free Deletion} problems, henceforth abbreviated as \FDeletion problems, to capture a wide class of NP-hard graph problems. Such a problem is instantiated by specifying a finite set~$\mathcal{F}$ of forbidden minors. An input then consists of a graph~$G$ and integer~$k$, and asks whether it is possible to find a set~$Y \subseteq V(G)$ of size~$k$ such that~$G-Y$ contains no graph from~$\mathcal{F}$ as a minor. This is a rich class of problems: by choosing~$\mathcal{F} = \{K_2\}$ we obtain \textsc{Vertex Cover}, for~$\mathcal{F} = \{K_3\}$ we have \textsc{Feedback Vertex Set}, and for~$\mathcal{F} = \{K_5, K_{3,3}\}$ we obtain the problem of making a graph planar by vertex deletions. The kernelization complexity of the solution-size parameterization of \FDeletion has been the subject of intensive research~\cite{FominLMPS11,FominLMS12,GiannopoulouJLS17,KimLPRRSS16,Thomasse10}. In this work we attempt to find the widest class of structural parameterizations for which \FDeletion admits polynomial kernels, continuing a long line of investigation into structural parameterizations for \textsc{Vertex Cover}~\cite{BougeretS16,FominS16,JansenB13,Kratsch16,KratschW12,MajumdarRS15}, \textsc{Feedback Vertex Set}~\cite{JansenRV14,Majumdar17}, and other \FDeletion problems~\cite{FominJP14,GajarskyHOORRVS17}.

When it comes to measuring graph complexity, a natural choice is to consider a \emph{width measure} such as treewidth. Alas, it has long been known that even \textsc{Vertex Cover}, the simplest \FDeletion problem, does not admit a polynomial kernelization when parameterized by the treewidth of the input graph.\footnote{Bodlaender et al.~\cite[Theorem 1]{BodlaenderDFH09} show a superpolynomial kernelization lower bound for \textsc{Independent Set} parameterized by treewidth. Since the parameter is not related to the solution size, this is equivalent to \textsc{Vertex Cover} parameterized by treewidth. The lower bound holds under the assumption that \notcontainment, which we implicitly assume when stating further lower bounds in this section.} Generally speaking, graph problems do not admit polynomial kernels under parameterizations that attain the maximum, rather than the sum, of the values of the connected components. We therefore use the \emph{vertex-deletion} distance to simple graph classes~$\mathcal{G}$ as the parameter. The aforementioned result by Jansen and Bodlaender~\cite{JansenB13} shows that \textsc{Vertex Cover} has a polynomial kernelization when parameterized by the vertex-deletion distance to an acyclic graph, i.e., to a graph of treewidth one. Unfortunately this formulation leaves little room for generalizations: no polynomial kernelization is possible parameterized by the distance to a graph of treewidth two~\cite[Theorem 11]{CyganLPPS14}, or even pathwidth two.\footnote{The lower bound is stated for distance to treewidth two, but the same proof works for pathwidth two.} We therefore cannot use the deletion distance to constant treewidth (\tw) or pathwidth (\pw) as our graph parameter, and use the deletion distance to constant \emph{treedepth} (\td) instead. The parameter treedepth has recently attracted much interest~\cite{ChenM14,ElberfeldGT16,ReidlRVS14}, sometimes allowing better upper bounds than are possible in terms of treewidth~\cite{GajarskyHOORRVS17,PilipczukW16}. It plays an important role in the study of structural sparsity~\cite{NesetrilO12}. All graphs~$G$ satisfy~$\td(G) \geq \pw(G) \geq \tw(G)$, so graphs of constant treedepth are more restricted than those of constant treewidth. We therefore study the following problem for a fixed set~$\mathcal{F}$ of connected graphs and constant~$\eta \geq 1$.

\defparproblem{\FDeletion parameterized by treedepth-$\eta$ modulator}
{A graph~$G$, integer~$k$, and a modulator~$X \subseteq V(G)$ such that~$\td(G-X) \leq \eta$.}
{$|X|$.}
{Is there a set~$Y \subseteq V(G)$ of size~$k$ such that~$G-Y$ is $\mathcal{F}$-minor-free?}

The restriction that~$\F$ contains only connected graphs is needed to ensure that a solution on a disconnected graph can be formed from solutions on its connected components, which we require in some of our proofs. This restriction was also considered in previous work~\cite{FominLMS12} on kernelization, but can be avoided when targeting single-exponential FPT algorithms~\cite{KimLPRRSS16}.

For technical reasons, we assume that a modulator~$X$ is given in the input. If no modulator is known, one can compute an approximate modulator and use it as~$X$. For example, Gajarsk\'y et al.~\cite[Lemma 4.2]{GajarskyHOORRVS17} showed that a modulator of size at most~$2^\eta$ times the optimum can be found in quadratic time. Our problem setting is related to that of Gajarsk\'y et al.~\cite{GajarskyHOORRVS17}. They studied kernelization for a general class of graph problems that includes \FDeletion, parameterized by a constant-treedepth modulator, but under the additional restriction that the input graph has bounded expansion or is nowhere dense. Under this severe restriction they obtained kernelizations of linear size for a wide range of problems. This prompted Somnath Sikdar during the 2013 Workshop on Kernelization~\cite{Worker2013} to ask which types of problems admit polynomial kernelizations in \emph{general graphs}, when parameterized by a constant-treedepth modulator; we address this question in this work.

\newtheorem*{thm:main:statement}{Theorem \ref{thm:main}}
\newcommand{\mainthm}{For every fixed finite set~$\mathcal{F}$ of connected graphs and every constant~$\eta$, the \FDeletion problem parameterized by a treedepth-$\eta$ modulator has a polynomial kernelization.}

\subparagraph*{Our results}
Our main result proves the existence of polynomial kernelizations for \FDeletion parameterized by a modulator whose removal leaves a graph of constant treedepth.
\begin{theorem} \label{thm:main}
\mainthm
\end{theorem}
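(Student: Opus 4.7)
The plan is to proceed by induction on the treedepth parameter~$\eta$. The base case~$\eta = 1$ amounts to \FDeletion parameterized by a vertex cover of~$G$, which is covered by earlier work mentioned in the introduction. For the inductive step, given~$(G, k, X)$ with~$\td(G - X) \le \eta$, I would reduce to an equivalent instance~$(G', k', X')$ such that~$\td(G' - X') \le \eta - 1$ and~$|X'| \le |X|^{\Oh(1)}$, after which the inductive hypothesis yields a polynomial kernel. The bridge from parameter~$\eta$ to~$\eta - 1$ rests on bounding the number of connected components of~$G - X$ by a polynomial in~$|X|$: selecting the root of each component's depth-$\eta$ elimination forest yields a set~$R$ of polynomial size such that~$X \cup R$ is a treedepth-$(\eta - 1)$ modulator. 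A recursive invocation then also compresses the individual components.

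The core of the proof is therefore a marking scheme that reduces the number of components of~$G - X$ to~$|X|^{\Oh(1)}$. For each component~$C$, I would first compute an optimum~\FDeletion solution~$Y_C \subseteq V(C)$. To~$C$ I would then associate a \emph{type}, consisting of~$|Y_C|$ together with the labeled folio of~$C - Y_C$: the collection of all bounded-size minors of~$C - Y_C$, in which each vertex of a branch set is labeled by a summary of its attachment to~$X$. Because the graphs in~$\F$ have constant size, any branch set that participates in a forbidden~\F-minor model of~$G$ meeting both~$C$ and~$X$ is described by a label ranging over only constantly many vertices of~$X$, so the number of distinct types is bounded by~$|X|^{\Oh(1)}$.

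The kernel then keeps, for each type, only a small number of representative components; the remaining components of each type are deleted, and~$k$ is decremented by the sum of their~$|Y_C|$ values. Correctness is argued by an exchange argument phrased in the language of labeled minors: any forbidden~\F-minor model in~$G$ that uses a removed component~$C$ can be rerouted through a retained component of the same type, because their labeled folios coincide and so every fragment realized in the deleted component is realized identically in a retained one. Conversely, any optimum deletion set may without loss of generality agree with the precomputed~$Y_C$ inside a deleted component. Combining these directions proves that removal does not change the answer, and iterating over all types gives the desired bound.

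The main obstacle will be establishing the exchange argument rigorously in the presence of forbidden minor models that traverse many components of~$G - X$ simultaneously, linking their fragments through paths inside~$X$. Controlling this global interaction is precisely where the labeled-minor formalism announced in the abstract must do the work: it must support consistent composition of fragments contributed by different components, and it must guarantee that the optimum solutions chosen per component remain globally compatible. Once this composition lemma is in place together with a polynomial bound on the number of labeled folios, the marking scheme follows and the outer induction on~$\eta$ delivers the kernel.
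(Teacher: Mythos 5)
The outer scaffolding matches the paper: induct on~$\eta$, reduce the number of connected components of~$G-X$ to~$|X|^{\Oh(1)}$, enlarge the modulator with one treedepth-decreasing vertex per surviving component to drop to treedepth~$\eta-1$, and recurse. The content is entirely in the component-bounding step, and there your argument has a genuine gap.

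You propose to type each component~$C$ by a \emph{single} precomputed optimum~$Y_C$ together with the labeled folio of~$C-Y_C$, and you claim the number of types is~$|X|^{\Oh(1)}$. That count is wrong: while the number of candidate \emph{fragments} (small $X$-labeled graphs with boundedly-large labelsets) is~$|X|^{\Oh(1)}$, the folio is a \emph{subset} of these, so the number of distinct folios is~$2^{|X|^{\Oh(1)}}$. A marking scheme that keeps a few representatives per folio-type would retain exponentially many components. The paper avoids this by never iterating over full folios: it iterates only over subsets~$\Q$ of \emph{constant} size~$\gamma$, which keeps the number of marking rounds polynomial, and the justification that constant-size~$\Q$ suffices is precisely the Main Lemma (Lemma~\ref{lem:main}). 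You flag "controlling this global interaction'' as the main obstacle, but the proposal supplies no substitute for that lemma.

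Even setting aside the counting issue, fixing one optimum~$Y_C$ per component and typing by~$\folio(C-Y_C)$ is not the right invariant, and the sentence "any optimum deletion set may without loss of generality agree with the precomputed~$Y_C$ inside a deleted component'' is circular: it asserts exactly what must be proved. A component typically has many optimal solutions leaving \emph{different} folios, and a globally optimal solution may be forced to pick a local optimum different from~$Y_C$ in a retained component in order to break fragments that would otherwise combine through~$X$; in that case the retained component need not realize the fragments your exchange argument wants to reroute through it. The paper's correctness proof (Claim~\ref{claim:putback:component}) instead fixes an optimum~$\widehat{Y}$ on~$\widehat{G}=G-V(C^*)$, defines~$\Q$ as the fragments realized in \emph{few} remaining components of~$\widehat G - X$ after removing~$\widehat{Y}$, proves~$\Q$ is saturated (using that otherwise one contracts a~$K_{n_\F}$), invokes Lemma~\ref{lem:main} to get a small~$\Q^*$, and then uses the marking of~$\nummarked$ components per small~$\Q^*$ plus a counting argument to exhibit a local optimum~$Y^*$ in~$C^*$ that breaks~$\Q$; rerouting then works because any fragment~$Y^*$ leaves behind is, by definition of~$\Q$, realizable in many other components, one of which is untouched by a minimal minor model (Lemma~\ref{lem:minormodel:countcomp}). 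Without a result of the strength of Lemma~\ref{lem:main} --- bounding a witnessing~$\Q^*$ independently of~$|X|$ and~$|V(C)|$ --- neither the polynomial type count nor the exchange argument goes through.
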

This answers a question posed by Bougeret and Sau~\cite{BougeretS16} (cf.~\cite{BougeretS18}). They obtained polynomial kernels for \textsc{Vertex Cover} parameterized by a constant-treedepth modulator, and asked whether their result can be extended to the \textsc{Feedback Vertex Set} problem. As \textsc{Feedback Vertex Set} is an \FDeletion problem for~$\F = \{K_3\}$, Theorem~\ref{thm:main} shows that this is indeed the case. Theorem~\ref{thm:main} greatly generalizes an earlier result of Fomin, Jansen, and Pilipczuk~\cite[Corollary 1]{FominJP14}, who proved that \FDeletion parameterized by a \emph{vertex cover} has a polynomial kernel for every fixed~$\mathcal{F}$; note that a vertex cover is precisely a treedepth-$1$ modulator.

Our kernelization is fully explicit and does not depend on protrusion replacement techniques or well-quasi-ordering, which are sources of algorithmic non-constructivity in other works~\cite{FominLMPS11,FominLMS12} on kernelization for \FDeletion. Moreover, our general theorem allows~$\F$ to be any set of connected graphs, including nonplanar ones. In contrast, the kernelization for the solution-size parameterization by Fomin et al.~\cite{FominLMS12} only applies when~$\mathcal{F}$ contains at least one planar graph. Hence they only capture problems where, after removing a solution, the remaining graph has constant treewidth~\cite{RobertsonS86}. In our case, even though the parameter value is expressed in terms of a modulator to a graph of constant treedepth and therefore constant treewidth, the graphs that result after removing an optimal solution may have unbounded treewidth. This occurs, for example, when using~$\mathcal{F} = \{K_5, K_{3,3}\}$ to capture the \textsc{Vertex Planarization} problem. (Whether the solution-size parameterization of \textsc{Vertex Planarization} has a polynomial kernel is a notorious open problem~\cite{FominLMS12}.)

The degree of the polynomial in the kernel size bound grows very quickly with~$\eta$. We prove that this is unavoidable, even for the simplest case of \textsc{Vertex Cover}.

\newtheorem*{thm:lowerbound:statement}{Theorem \ref{thm:lowerbound}}
\newcommand{\lowerbound}{For every~$\eta \geq 6$, the \textsc{Vertex Cover} problem parameterized by the size of a given treedepth-$\eta$ modulator~$X$ does not admit a kernelization of bitsize~$\Oh(|X|^{2^{\eta-4}-\varepsilon})$ for any~$\varepsilon > 0$, unless \containment.}

\begin{theorem} \label{thm:lowerbound}
\lowerbound
\end{theorem}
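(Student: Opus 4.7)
My plan is to prove the lower bound by a \emph{cross-composition of degree} $d := 2^{\eta - 4}$, following the framework of Bodlaender, Jansen, and Kratsch. Such a composition takes $t$ instances of an NP-hard source problem, each of size at most $N$, and produces a single \textsc{Vertex Cover} instance $(G, k)$ equipped with a treedepth-$\eta$ modulator $X$ satisfying $|X| = O(t^{1/d} \cdot N^{O(1)})$; any kernelization of bitsize $O(|X|^{d - \varepsilon})$ for the composed instance would then, by the standard argument, contradict \notcontainment. A convenient choice of source problem is \textsc{Vertex Cover} itself on instances of a prescribed size, since we only need the NP-hardness of the source (the polynomial exponent is generated by the composition).

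For the composition, I would view the $t = r^d$ input instances as indexed by $d$-tuples $(i_1, \ldots, i_d) \in [r]^d$. The modulator $X$ would consist of $d$ groups of $r$ \emph{selector} vertices together with $O(N)$ shared \emph{universe} vertices encoding the common vertex set of the source instances. After removing $X$, the graph $G - X$ decomposes into $r^d$ vertex-disjoint \emph{verification components}, one per tuple, each attached only to the $d$ selectors identifying that tuple and (via edges into $X$) to the shared universe. An appropriate choice of budget $k$ forces a minimum vertex cover to pick exactly one selector per group, thereby \emph{activating} a unique verification component which in turn enforces, via its edges into the universe, that the corresponding source instance admits a vertex cover of the prescribed size.

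The treedepth of $G - X$ equals the maximum treedepth of a single verification component. Encoding one \textsc{Vertex Cover} instance on the universe via indicator trees in the spirit of Dell and van Melkebeek's~\cite{DellM14} lower bound for \textsc{Vertex Cover} uses about four levels of treedepth, matching the constant $-4$ in the target exponent. The remaining $\eta - 4$ levels are spent on a recursive \emph{binary-splitting} sub-gadget whose design supports a doubling of the number of distinguishable coordinates per extra treedepth level, yielding an effective $2^{\eta - 4}$-dimensional selection space and thus $r^{2^{\eta - 4}} = t$ encoded instances.

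The chief obstacle is the recursive combinatorial design of the splitting sub-gadget, together with the bookkeeping required to verify simultaneously that (i) $G - X$ has treedepth at most $\eta$, (ii) every feasible selector choice activates exactly one verification component, and (iii) the vertex cover budget translates faithfully between the composed instance and the source instances. The doubling behavior $2^{\eta - 4}$ is consistent with a balanced binary-tree hierarchy of splitters, and the restriction $\eta \geq 6$ reflects that at least the four base levels plus two doubling levels must be available before the construction begins producing nontrivial polynomial lower bounds.
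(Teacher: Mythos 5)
Your plan is a cross-composition of degree $2^{\eta-4}$, which is a genuinely different framework from what the paper uses: the paper instead gives a direct compression lower bound by reducing from $q$-\textsc{cnf-sat} with $q := 2^{\eta-4}$ and invoking the Dell--van~Melkebeek theorem that $q$-\textsc{cnf-sat} on $n$ variables cannot be compressed to bitsize $\Oh(n^{q-\varepsilon})$ unless \containment. Concretely, the paper reuses an earlier construction (Jansen's thesis, Thm.~5.3) that maps a $q$-\textsc{cnf-sat} instance on $n$ variables to a \textsc{Vertex Cover} instance with a modulator $X$ of size $2n$ whose removal leaves a disjoint union of ``clause gadgets'' of size $q$; the only new ingredient is a two-line induction showing that a path of $2^r$ triangles has treedepth at most $r+3$, hence a size-$2^{\eta-4}$ clause gadget has treedepth at most $\eta$. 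That reuse is what makes the argument short. Your cross-composition route would, if carried out, yield the same exponent, but it requires you to invent the gadget family from scratch, which is where the proof is actually missing.

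The concrete gap is in your ``binary-splitting sub-gadget.'' You assert, without design or proof, that each extra unit of treedepth ``doubles the number of distinguishable coordinates,'' and you yourself name this the chief obstacle rather than resolving it. This is precisely the content of the theorem: a verification component that is attached to $d = 2^{\eta-4}$ selectors must distinguish $d$ independent coordinates while having treedepth only $\eta = \log_2 d + 4$, so the exponential gap between $d$ and $\eta$ is exactly what needs a construction, and you have not provided one. There is also no argument that such a component, once designed, correctly enforces the \textsc{Vertex Cover} semantics: you state requirements (i)--(iii) but do not verify any of them. Separately, the attribution of ``indicator trees'' to the Dell--van~Melkebeek \textsc{Vertex Cover} lower bound and the accounting for the additive ``$4$'' are not grounded in an actual gadget; in the paper that ``$4$'' arises from the clause gadget being a path of triangles plus one apex vertex, which is a specific structure with a specific treedepth bound, not a generic ``four levels for instance encoding.'' As written, the proposal is a plausible-looking plan whose central combinatorial step is acknowledged but unperformed, so it does not constitute a proof. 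If you want to pursue the cross-composition route, the cleanest path is probably to rediscover the triangle-path gadget: it simultaneously provides the $\log$-treedepth property you want and the vertex-cover semantics (a vertex cover of a triangle path must omit at most one triangle vertex per triangle), which is exactly what makes the paper's reused construction work.
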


\subparagraph*{Techniques}
To obtain a polynomial kernel for an instance~$(G,X,k)$ of \FDeletion, the main challenge is to understand how the connected components~$\mathcal{C}$ of~$G-X$ interact through their connections to the modulator~$X$. Using the language of labeled minors, we analyze how minor models of a forbidden graph in~$\mathcal{F}$ may intersect the various components of~$G-X$. Using these insights, we are able to characterize which components of~$\mathcal{C}$ affect the structure of optimal solutions in an essential way. On a high level, the kernelization strategy is as follows. We use the fact that a single constant-treedepth component can be analyzed efficiently, to identify a subset~$\mathcal{C'}$ of~$\mathcal{C}$ that contains~$|X|^{\Oh(1)}$ essential components under our characterization. We prove that the remaining ones can be safely removed, because their interaction with the rest of the instance can be ignored. Formally speaking, we show that any optimal solution on~$G' := G[X \cup \bigcup _{C \in \mathcal{C'}} C]$ can be lifted to a solution on~$G$ by including~$\Delta = \sum _{C \in \mathcal{C} \setminus \mathcal{C'}} \minFdel(C)$ additional vertices:~$(G,X,k)$ is a \textsc{yes}-instance if and only if~$(G', X, k-\Delta)$ is. This effectively shows that there is an optimal solution~$Y$ on~$G$ in which the non-essential components act in isolation:~$Y$ does not delete more vertices from such a component, than would be deleted by a solution on the graph~$G[C]$.

The overall kernelization follows straight-forwardly from this pruning of non-essential components by a recursive approach, similarly as in earlier work~\cite{BougeretS16,GajarskyHOORRVS17}. 
The main challenge is therefore to understand which components are essential and which are not, and this is where our contribution lies. We present a stand-alone combinatorial lemma that captures our key insight in this direction. To state it, we introduce some terminology.

We work with a nonstandard notion of labeled graphs. For a finite set~$X$, an \emph{$X$-labeled graph} is a graph in which each vertex is assigned a (possibly empty) subset of~$X$ as its labelset; we stress that multiple vertices may carry the same label on their labelset. The minor relation on graphs extends to labeled graphs in a natural way: a labeled graph~$H$ is a minor of a labeled graph~$G$, if~$H$ can be obtained from~$G$ by repeatedly deleting an edge, deleting a vertex, deleting a label from the labelset of a vertex, or contracting an edge. When contracting an edge~$\{u,v\}$ into a single vertex~$w$, the labelset of~$w$ is formed as the union of the labelsets of~$u$ and~$v$.

For a collection~$\mathcal{S}$ of vertex subsets~$Y$ of an $X$-labeled graph~$C$, and a set of $X$-labeled graphs~$\Q$, we say that all~$Y \in \mathcal{S}$ \emph{leave a $\Q$-minor} in~$C$, if for all~$Y \in \mathcal{S}$ the graph~$C-Y$ contains some graph~$H \in \Q$ as a labeled minor. We say that a set~$\Q$ of $X$-labeled graphs is \emph{$\theta$}-saturated for an integer~$\theta$, if for each subset~$X' \subseteq X$ of size~$\theta$, the graph consisting of one vertex with labelset~$X'$ belongs to~$\Q$. Our main lemma states that if all optimal solutions to \FDeletion on~$C$ leave a $\Q$-minor for some suitably saturated~$\Q$, then there is a small subset~$\Q^*$ for which the same holds.

\newtheorem*{lemma:main:statement}{Lemma \ref{lem:main}}
\newcommand{\mainlemma}{Let~$\mathcal{F}$ be a finite set of (unlabeled) connected graphs, let~$X$ be a set of labels, let~$\Q$ be a~$(\min_{H \in \mathcal{F}}|V(H)|)$-saturated set of connected~$X$-labeled graphs of at most~$\max_{H \in \F}|E(H)|+1$ vertices each, and let~$C$ be an $X$-labeled graph. If all optimal solutions to \FDeletion on~$C$ leave a $\Q$-minor, then there is a subset~$\Q^*\subseteq\Q$ whose size depends only on~$(\mathcal{F}, \td(C))$, such that all optimal solutions leave a~$\Q^*$-minor.
}

\begin{lemma}[Main lemma] \label{lem:main}
\mainlemma
\end{lemma}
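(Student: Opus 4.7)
I plan to prove the lemma by induction on $\eta = \td(C)$, mirroring the recursive kernelization later in the paper. Write $c := \max_{H\in\F}|E(H)|+1$ and $\theta := \min_{H\in\F}|V(H)|$. Since every graph in \F\ and \Q\ is connected, optimal $\F$-deletion sets on $C$ are disjoint unions of component-wise optima, and any $\Q$-minor of $C-Y$ lies inside a single connected component of $C-Y$. A product argument then shows that if every optimal $Y$ of $C$ leaves a $\Q$-minor, the same hypothesis already holds on some single connected component of $C$, and a bounded $\Q^*$ for that component also works for $C$. So I may assume $C$ is connected.

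For the base case $\eta \le 1$, the graph $C$ is edgeless and \FDeletion on $C$ has a unique optimal solution ($\emptyset$ if $K_1\notin\F$, and $V(C)$ otherwise), so any one $H\in\Q$ that is a minor of $C-Y$ serves as $\Q^{*}$. For the inductive step $\eta\ge 2$, fix a treedepth decomposition of $C$ of depth $\eta$ rooted at some vertex $v$ and let $C_1,\dots,C_m$ denote the components of $C-v$, each of treedepth at most $\eta-1$. I partition the optimal deletion sets $Y$ according to whether $v \in Y$ (Type A) or $v \notin Y$ (Type B).

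In Type A, a standard swap argument that uses the connectedness of every $H\in\F$ shows that $Y_i := Y\cap V(C_i)$ is a minimum $\F$-deletion set of $C_i$: otherwise substituting a smaller optimum of $C_i$ into $Y$ would yield a smaller deletion set for $C$, since $C-Y$ decomposes as $\bigsqcup_i(C_i-Y_i)$ once $v$ is removed. Because every $H\in\Q$ is connected, any $\Q$-minor left by $Y$ lies inside a single $C_i - Y_i$. For each $i$ I would group the Type A solutions by which component carries the witnessing minor and apply the induction hypothesis to $C_i$ with the same \Q\ to obtain $\Q_i^{*}\subseteq\Q$ of size at most $f(\F,\eta-1)$; their union $\Q_A^{*}$ handles all Type A solutions.

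Type B is the delicate case: $v$ survives in $C-Y$ and a minor model of some $H\in\Q$ may have a single branch set that contains $v$ and glues together fragments in several $C_i - Y_i$. Since $|V(H)|\le c$, the model has at most $c$ branch sets and touches at most $c$ components. I plan to enumerate the bounded number of ``sketches'' describing how the branch sets distribute over components and how they meet at $v$, and for each sketch apply the induction hypothesis to each participating $C_i$ with a modified labeled-minor family encoding the component's partial-model obligation; $\theta$-saturation of \Q\ is what keeps these modified families saturated, since a partial model that accumulates $\theta$ labels can be absorbed into a saturated single-vertex element of \Q. Setting $\Q^{*} := \Q_A^{*}\cup\Q_B^{*}$ yields a recurrence of the form $f(\F,\eta)\le g(\F)\cdot f(\F,\eta-1)$ with $f(\F,1)$ bounded by a constant depending only on \F, closing the induction. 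The hardest part will be formalizing ``sketches'' precisely and verifying that each modified labeled-minor family used in a recursive call inherits $\theta$-saturation and the per-graph vertex bound, with the number of sketches depending only on \F; a subtler point even in Type A is that the sub-family on $C_i$ one recurses with must be chosen so that every \emph{relevant} optimum of $C_i$ still leaves a minor from it.
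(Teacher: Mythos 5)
Your high-level strategy matches the paper's: both induct by picking a treedepth-decreasing vertex $v$ of a connected $C$ and split into ``$v$ deleted'' and ``$v$ kept'' cases, and your observation that $\theta$-saturation caps the accumulated labels is exactly the mechanism the paper uses to keep the label complexity in check. But there are two genuine gaps. First, in Type~A you take $\Q_A^* := \bigcup_i \Q_i^*$ over \emph{all} components $C_i$ of $C-v$; the number $m$ of components is unbounded, so this union has size up to $m\cdot f(\F,\eta-1)$, which is \emph{not} a function of $\F$ and $\td(C)$ alone. This is fixable (once $v\in Y$ is mandated, the subsolutions $Y_i$ mix and match freely, so if every $C_i$ had some optimal solution breaking all $\Q$-minors you could combine them to contradict the hypothesis on $C$; hence a \emph{single} $i$ exists for which the induction hypothesis actually fires, and $\Q_A^* := \Q_i^*$ for that one $i$ suffices), but your proposal as written does not make this argument and does not notice the blow-up.

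Second, and more seriously, the Type~B ``sketch'' plan is precisely where the proof becomes hard and your proposal leaves it open. The difficulty is not only the local enumeration of how a single model of some $H\in\Q$ distributes over components---that is bounded by $|X|+|V(H)|+|E(H)|$ components by a Lemma~\ref{lem:minormodel:countcomp}-type argument. The real issue is that after you fix $v$ as a kept vertex, the recursion on the $C_i$'s must be coordinated: for a given optimal $Y$, the ``partial-model obligations'' left in the different $C_i$'s vary with $Y$, and you need a bound \emph{not only} on a $\Q^*$ for each $C_i$ but also on the number of distinct possible ``remainders'' (sets of boundaried fragments a component can leave behind when its local solution does \emph{not} leave a $\Q$-minor). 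Without such a bound the number of sketches you have to union over depends on the number of components again. The paper's Lemma~\ref{lem:subset_Q} is substantially more elaborate than Lemma~\ref{lem:main} precisely for this reason: it carries a three-part decomposition $G_A\oplus G_B\oplus G_C$, prohibition sets $\Pi_A,\Pi_B,\Pi_C$ with the $\splitPi$/$\odot$ calculus, a prescribed remainder $R_B$, and simultaneously bounds both $|\Q^*|$ and the number $|\mathcal{R}_N|$ of remainders with no $\Q$-minor, using auxiliary potentials $\mu,\nu,\xi$ that are shown to strictly decrease when peeling off a component in the disconnected subcase. Your plan has no analogue of $|\mathcal{R}_N|$ or the potential argument, so the recurrence $f(\F,\eta)\le g(\F)\cdot f(\F,\eta-1)$ you claim will not close as stated; you would need to strengthen the induction hypothesis to a statement similar in spirit to Lemma~\ref{lem:subset_Q} before the recursion can be controlled.
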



In several aspects, the statement in the lemma is best-possible. In particular, we will show in Section \ref{sec:main-lemma-overview} that the dependence of the size of~$\Q^*$ on~$\td(G)$ rather than~$\tw(G)$ is essential and that the precondition that~$\Q$ is~$\Oh(1)$-saturated cannot be avoided.

Lemma~\ref{lem:main} is the cornerstone in our understanding of which components of~$G-X$ are essential. In our applications of the lemma, the graph~$C$ consists of a connected component of~$G-X$ whose labels encode the adjacency of those vertices to the modulator~$X$. The set~$\Q$ contains potential fragments of models of forbidden $\mathcal{F}$-minors, again labeled by adjacency to~$X$, which we may be interested in destroying in~$C$ so that connections through~$X$ cannot form $\mathcal{F}$-minors with fragments that remain in other components of~$G-X$. The lemma then essentially says that if it is not possible to select a solution that deletes a minimum number of vertices from~$C$ while simultaneously destroying all fragments in~$\Q$, then there is a bounded-size subset of fragments~$\Q^*$ that cannot all be destroyed by such a solution. The full importance of Lemma~\ref{lem:main} will become clear in Section~\ref{sec:kernel}.

\subparagraph*{Organization}
Section~\ref{sec:basic:prelims} provides basic preliminaries. In Section~\ref{sec:main-lemma-overview}, we give some of the main ideas of the proof of Lemma~\ref{lem:main}. In Section~\ref{sec:kernel} we show how Theorem~\ref{thm:main} follows from a procedure that identifies relevant components. We give the procedure and its correctness proof later in the same section, while relying on Lemma~\ref{lem:main}. The proof of Lemma~\ref{lem:main} is long and technical. In the appendix, we first develop a framework for boundaried labeled graphs and establish some useful auxiliary lemmata (Section~\ref{sec:prelims}) and finally use these to prove the main lemma (Section~\ref{sec:main:lemma}). Theorem~\ref{thm:lowerbound} is proven in Section~\ref{sec:lb} in the appendix. The proofs of statements marked ($\bigstar$) can be found in the appendix, Section~\ref{sec:omitted:proofs}. 

\section{Preliminaries} \label{sec:basic:prelims}
For a positive integer~$n$ we use~$[n]$ as a shorthand for~$\{1,\ldots, n\}$. For a set~$S$, let $2^S$ to denote the set of all subsets of~$S$. All graphs we consider are finite, undirected, and simple. A graph~$G$ consists of a vertex set~$V(G)$ and edge set~$E(G) \subseteq \binom{V(G)}{2}$. The open neighborhood of a vertex~$v$ is denoted~$N_G(v)$. For a vertex set~$S \subseteq V(G)$, its open neighborhood is~$N_G(S) := \bigcup_{v \in S} N_G(v) \setminus S$. For an edge~$\{u,v\}$ in a graph~$G$, \emph{contracting~$\{u,v\}$} results in the graph~$G'$ obtained from~$G$ by removing~$u$ and~$v$, and replacing them by a new vertex~$w$ with~$N_{G'}(w) = N_G(\{u,v\})$. For a vertex set~$S \subseteq V(G)$, we use~$G - S$ to denote the graph obtained from~$G$ by deleting all vertices in~$S$ and their incident edges. The subgraph of~$G$ induced by vertex set~$S$ is denoted~$G[S]$.

\begin{definition}[treedepth] \label{def:treedepth}
Treedepth is defined as follows. The trivial one-vertex graph has treedepth~$1$. The treedepth of a disconnected graph~$G$ with connected components~$C^1, \ldots, C^t$ is~$\max_{i \in [t]} \td(C^i)$. The treedepth of a connected graph~$G$ is~$\min_{v \in V(G)} \td(G - \{v\}) + 1$.
\end{definition}

\begin{definition}[labeled graph]
Let $X$ be a set. An \emph{$X$-labeled graph} $G$ is a graph $G$ together with label function $\lab_G \colon V(G) \rightarrow 2^X$, assigning a (potentially empty) subset of labels to each vertex in $G$. The labeled graph~$G$ is $\theta$-restricted if each vertex has at most $\theta$ labels.
\end{definition}

If an edge~$\{u,v\}$ is contracted in a labeled graph to obtain a new vertex~$w$, then the labelset of~$w$ is defined as~$\lab_G(u) \cup \lab_G(v)$.

\begin{definition}[minor model] \label{def:minor:model:plain}
A \emph{minor model} of a graph $H$ in a graph $G$ is a mapping $\varphi \colon V(H) \rightarrow 2^{V(G)}$ assigning a \emph{branch set} $\varphi(v) \subseteq V(G)$ to each vertex $v \in V(H)$, such that:
\begin{itemize}
\item $G[\varphi(v)]$ is nonempty and connected for all $v \in V(H)$,
\item $\varphi(v) \cap \varphi(u) = \emptyset$ for all $u \neq v \in V(H)$, and
\item if $\{u,v\} \in E(H)$, then there exist $u' \in \varphi(u)$ and $v' \in \varphi(v)$ such that $\{u',v'\}\in E(G)$.
\end{itemize}
The third condition implies that one can find an \emph{edge mapping}~$\psi \colon E(H) \rightarrow E(G)$ such that:
\begin{itemize}
\item For all $\{u,v\} \in E(H)$, edge~$\psi(\{u,v\})$ has one endpoint in~$\varphi(u)$ and the other in~$\varphi(v)$.
\end{itemize}
We will often use the existence of this edge mapping in our proofs.
\end{definition}

For~$S \subseteq V(H)$ we define~$\varphi(S) := \bigcup_{v \in S}\varphi(v)$, and we define~$\varphi(V(H))$ as the \emph{range} of the minor model. A minor model~$\varphi$ of~$H$ in~$G$ is called \emph{minimal} if no minor model~$\varphi'$ exists with~$\varphi'(V(H)) \subsetneq \varphi(V(H))$.

\begin{definition}[labeled minor model] \label{def:minor:model:labeled}
A \emph{labeled minor model} of an $X$-labeled graph $H$ in an $X$-labeled graph $G$ is a mapping~$\varphi$ as in Definition~\ref{def:minor:model:plain}, that additionally satisfies the following: for all $v \in V(H)$ and~$\ell \in \lab_H(v)$ there exists $v' \in \varphi(v)$ such that $\ell \in \lab_G(v')$.
\end{definition}

If~$G$ contains a (labeled) minor model of~$H$, then we say that~$G$ contains~$H$ as a (labeled) minor and denote this as~$H \leqm G$. Observe that~$G$ contains~$H$ as a (labeled) minor if and only if~$H$ can be obtained from~$G$ by deleting edges and vertices (and potentially labels), and contracting edges.

\begin{lemma}[$\bigstar$] \label{lem:minormodel:countcomp}
Let~$G$ and~$H$ be unlabeled graphs, let~$X \subseteq V(G)$, and let~$\varphi$ be a minimal minor model of~$H$ in~$G$. Then~$\varphi(V(H))$ intersects at most~$|X| + |V(H)| + |E(H)|$ connected components of~$G - X$.
\end{lemma}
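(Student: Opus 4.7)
The plan is to combine the structural rigidity of minimal minor models with a counting argument on a sparse ``skeleton'' subgraph of $G$.

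\textbf{Step 1 (Structural consequence of minimality).} I would first establish that for every $v \in V(H)$ and every $w \in \varphi(v)$ that is not a cut-vertex of $G[\varphi(v)]$, the vertex $w$ must be the \emph{unique} vertex of $\varphi(v)$ adjacent in $G$ to some vertex of $\varphi(u)$ for some $\{u,v\} \in E(H)$. If no such edge $\{u,v\}$ existed, or if $\varphi(v)$ contained another vertex with a neighbor in $\varphi(u)$, we could remove $w$ from $\varphi(v)$ (rerouting the edge mapping if necessary) and obtain a valid minor model of $H$ with a strictly smaller range, contradicting minimality. Since attachment points for distinct edges incident to $v$ must be distinct vertices of $\varphi(v)$, this forces $G[\varphi(v)]$ to have at most $\deg_H(v)$ non-cut-vertices. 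Since the leaves of any spanning tree of a graph $G'$ are non-cut-vertices of $G'$, I may pick a spanning tree $T_v$ of $G[\varphi(v)]$ with at most $\ell_v := \deg_H(v)$ leaves.

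\textbf{Step 2 (Reduction to a skeleton subgraph).} Define the skeleton $M \subseteq G$ on vertex set $\varphi(V(H))$ as $\bigcup_{v \in V(H)} T_v \cup \{\psi(e) : e \in E(H)\}$, which has $|E(M)| = |\varphi(V(H))| - |V(H)| + |E(H)|$ edges. Setting $X' := X \cap V(M)$, each component $C \in \mathcal{C}$ contains at least one vertex of $V(M) \setminus X'$ (because $C$ is a component of $G-X$ intersecting $\varphi(V(H))$ and avoids $X$), and hence contains the connected component of $M - X'$ through this vertex. Since distinct $C$'s lie in distinct components of $G - X$ and each component of $M - X'$ is contained in a unique such $C$, we get $|\mathcal{C}| \leq c_{M - X'}$, the number of components of $M - X'$.

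\textbf{Step 3 (Counting components of $M-X'$).} Writing $s_v := |X \cap \varphi(v)|$ and $\sigma_v := \sum_{x \in X \cap \varphi(v)} \deg_{T_v}(x)$, the forest identity applied to $T_v - X$ gives $c_{T_v - X} = 1 - s_v + \sigma_v - e_{X \cap \varphi(v)}^{T_v}$. The handshake lemma on $T_v$ together with the leaf bound $\ell_v \leq \deg_H(v)$ from Step 1 yields $\sigma_v \leq 2s_v + \deg_H(v) - 2$, so $c_{T_v - X} \leq s_v + \deg_H(v) - 1$. Starting from the disjoint union of the forests $T_v - X$ and adding the $|E(H)|$ cross-edges of $M$, each cross-edge can only merge components, and a charging argument shows that sufficiently many merges occur to bring the total down to $|X'| + |V(H)| + |E(H)|$. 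Using $|X'| \leq |X|$ yields the claimed bound $|\mathcal{C}| \leq |X| + |V(H)| + |E(H)|$.

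\textbf{Main obstacle.} The hard part is making Step 3 tight: the naive sum $\sum_v (s_v + \deg_H(v) - 1)$ equals $|X'| + 2|E(H)| - |V(H)|$, which exceeds the target $|X| + |V(H)| + |E(H)|$ when $H$ is edge-dense. The resolution must exploit that the $|E(H)|$ cross-edges of $M$ whose endpoints both lie in $R := V(M) \setminus X'$ mandatorily merge many tree-components, quantifiable via the cycle rank of $M$. A cleaner alternative is induction on $|V(H)|$: remove a vertex $v^\star$ from $H$, apply the inductive hypothesis to a minimal minor model of $H - v^\star$ obtained by restricting $\varphi$, and show that reinstating $\varphi(v^\star)$ contributes at most $1 + \deg_H(v^\star)$ new components, matching the decrease $1 + \deg_H(v^\star)$ in the right-hand side.
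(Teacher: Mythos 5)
Your Steps 1 and 2 set up a reasonable skeleton subgraph $M$ and correctly reduce the problem to bounding the number of components of $M-X'$. The minimality observation in Step 1 is essentially a strengthening of what the paper observes (the paper fixes one spanning tree $T_v$ and argues only about its leaves; you argue about all non-cut-vertices), and after a cleanup of the wording it is sound: each non-cut-vertex that survives in a minimal model must be the unique attachment point of some $\psi$-edge incident to $v$, and since no edge has two distinct unique attachment points this gives at most $\deg_H(v)$ non-cut-vertices, hence any spanning tree $T_v$ has at most $\deg_H(v)$ leaves.

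The problem is that Step 3 is not actually a proof, and you say as much. The per-branch-set sum $\sum_v(s_v + \deg_H(v)-1) = |X'| + 2|E(H)| - |V(H)|$ overshoots the target $|X|+|V(H)|+|E(H)|$ as soon as $|E(H)| > 2|V(H)|$, and ``a charging argument shows that sufficiently many merges occur'' is precisely the part that is missing. There is no argument in the proposal that the $|E(H)|$ cross-edges cause the needed $|E(H)|-2|V(H)|$ distinct merges of tree-components after $X$ is removed; indeed some $\psi$-edges may be incident to $X'$ and thus vanish, and others may close cycles among already-merged components, so the needed merges are far from automatic. The proposed alternative via induction on $|V(H)|$ does not close the gap either: restricting $\varphi$ to $V(H)\setminus\{v^\star\}$ does \emph{not} yield a minimal model of $H - v^\star$ (vertices that were kept in $\varphi(u)$ only to reach $\varphi(v^\star)$ become superfluous), so the induction hypothesis as stated does not apply, and re-minimizing would destroy the relationship with the original $\varphi$. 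The paper sidesteps this entirely by classifying components of $G-X$ directly rather than summing per-$v$ bounds: a component is ``terminal'' if it contains an endpoint of some $\psi$-edge or a one-vertex tree, giving at most $|V(H)| + |E(H)|$ because a $\psi$-edge lies inside a single component; and the remaining ``nonterminal'' components for a given $v$ inject into the edges of the tree obtained from $T_v$ by contracting everything outside $X_v := X \cap \varphi(v)$, hence number at most $|X_v|-1$, which sums to at most $|X|$ over $v$. That per-$v$ acyclicity argument is the ingredient your Step 3 is missing.
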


We denote the size of an optimal \FDeletion solution on~$G$ by~$\minFdel(G)$, and the set of optimal solutions by~$\minFdelsol(G)$. In our bounds, we use the notation~$\Oh_{z}(1)$ for some identifier(s)~$z$ to denote a constant that only depends on~$z$.

\begin{lemma}[$\bigstar$] \label{lem:compute:kernel}
Let~$\F$ be a fixed set of (unlabeled) graphs, let~$\eta \geq 1$ be a constant, and let~$X$ be a set. For any set~$\Q$ of $X$-labeled graphs and host graph~$C$ with~$\td(C) \leq \eta$, one can:
\begin{itemize}
	\item compute~$\minFdel(C)$ in~$\Oh_{\F, \eta}(|V(C)|)$ time;
	\item determine whether there is a solution~$Y \in \minFdelsol(C)$ such that~$C-Y$ contains no graph from~$\Q$ as a labeled minor, in time~$f(L, \sum _{H \in \Q}|V(H)|, \eta) \cdot |V(C)|$ for some function~$f$.
\end{itemize}
Here~$L$ counts the number of elements of~$X$ that appear in the labelset of at least one vertex in at least one graph of~$\Q$.
\end{lemma}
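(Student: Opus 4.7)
The plan is to use that $\td(C) \leq \eta$ implies $\tw(C) \leq \eta$ (via $\td \geq \pw \geq \tw$), compute a tree decomposition of $C$ of width at most $\eta$ in time $\Oh_\eta(|V(C)|)$ using Bodlaender's linear-time algorithm for recognizing bounded-treewidth graphs, and then apply the optimization version of Courcelle's theorem on graphs of bounded treewidth.

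For the first bullet, I would express ``$Y \subseteq V(C)$ is an $\F$-deletion solution'' as an MSO formula $\phi_\F(Y)$. For each fixed $H \in \F$, the statement $H \leqm C - Y$ is MSO-expressible: existentially quantify over branch sets $V_1, \ldots, V_{|V(H)|} \subseteq V(C) \setminus Y$, assert pairwise disjointness, nonemptiness, and the connectedness of each $C[V_i]$ (connectivity being MSO-expressible), and for each $\{i,j\} \in E(H)$ demand the existence of an edge with one endpoint in $V_i$ and the other in $V_j$. Setting $\phi_\F(Y) := \bigwedge_{H \in \F} \neg (H \leqm C - Y)$, the optimization extension of Courcelle's theorem computes $\minFdel(C) = \min\{|Y| : \phi_\F(Y)\}$ in time $f(\F, \eta) \cdot |V(C)|$, matching the first claim.

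For the second bullet, I would view $C$ as a relational structure augmented with $L$ unary predicates $P_1, \ldots, P_L$, one for each label that actually occurs in some graph of $\Q$; labels in $\lab_C(v)$ that appear in no $H \in \Q$ can be discarded without affecting the answer, since they cannot witness any labeled minor from $\Q$. Labeled minor containment $H \leqm C - Y$ for an $X$-labeled $H \in \Q$ is then MSO-expressible by additionally demanding, for each $v \in V(H)$ and each $\ell \in \lab_H(v)$, that the branch set $V_v$ contain a vertex $v'$ with $P_\ell(v')$. Letting $\psi(Y) := \phi_\F(Y) \wedge \bigwedge_{H \in \Q} \neg (H \leqm C - Y)$, I would apply Courcelle's theorem again to decide whether there exists $Y$ with $|Y| = \minFdel(C)$ satisfying $\psi(Y)$, which runs in time $f(L, \sum_{H \in \Q}|V(H)|, \eta) \cdot |V(C)|$.

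The main subtlety is keeping the MSO formula size (and hence the constant hidden by Courcelle's theorem) bounded in terms of $L$ and $\sum_{H \in \Q}|V(H)|$ rather than the potentially much larger $|X|$ or $|\Q| \cdot |X|$. Introducing predicates only for the $L$ labels that actually appear in $\Q$ achieves exactly this: describing each $H \in \Q$ costs $\Oh(|V(H)|^2)$ symbols over those $L$ predicates, so the overall formula has size $\Oh(L + \sum_{H \in \Q}|V(H)|^2)$, and the running-time constant depends only on $(\F, \eta)$ in the first bullet and on $(L, \sum_{H \in \Q}|V(H)|, \eta)$ in the second, as required.
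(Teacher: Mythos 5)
Your proposal matches the paper's own proof in all essentials: both reduce to bounded treewidth, invoke the linear MSOL optimization framework of Arnborg, Lagergren, and Seese (i.e., the optimization variant of Courcelle's theorem), express $\F$-minor-freeness and $\Q$-labeled-minor-freeness in MSOL over a structure carrying only the $L$ relevant label predicates, and then compare $\minFdel(C)$ against the constrained optimum. The one small phrasing slip — ``decide whether there exists $Y$ with $|Y| = \minFdel(C)$ satisfying $\psi(Y)$'' — should be read as ``compute $\min\{|Y| : \psi(Y)\}$ and compare it to $\minFdel(C)$,'' which is precisely what the paper does.
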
 

\section{Overview of the main lemma}\label{sec:main-lemma-overview}

In this section we discuss Lemma~\ref{lem:main}, whose long and technical proof is deferred to the appendix. The strength of the lemma comes from the fact that the bound on~$|\Q^*|$ is \emph{independent} of the size of the graph~$C$ and of the number of labels~$|X|$ used on labelsets of vertices of~$C$. 

 \begin{figure}
 \centering
 \includegraphics{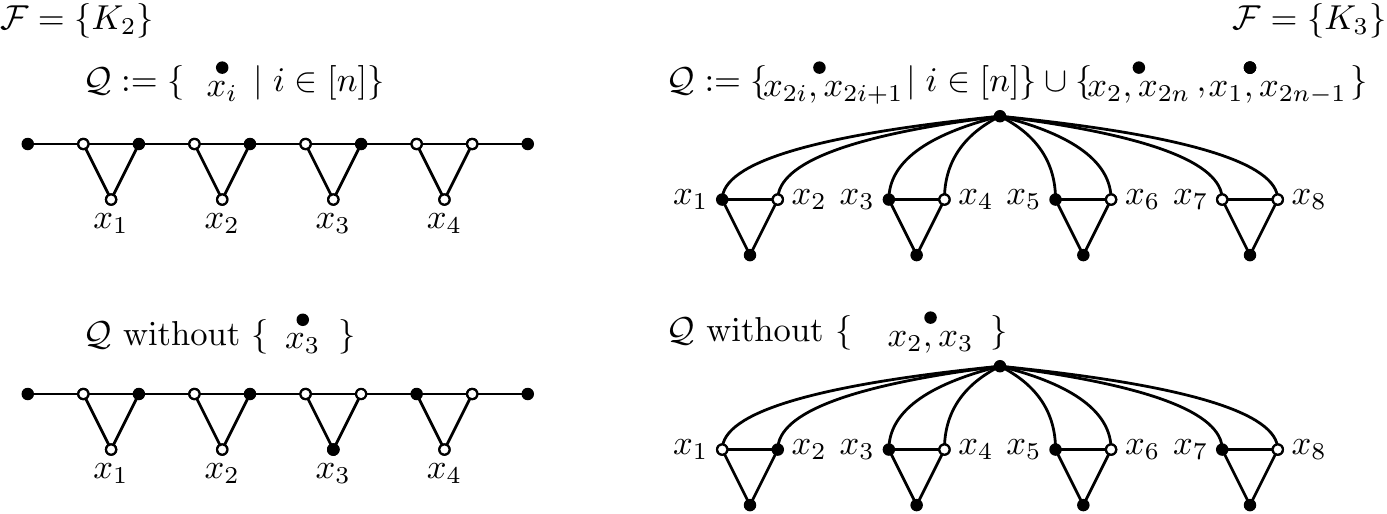}
\caption{Two constructions of graphs and sets $\Q$ for $n = 4$, where no optimal $\F$-deletion breaks~$\Q$, but for any $Q \in \Q$ there exists an optimal $\F$-deletion breaking $\Q \setminus Q$. Top: any solution breaking both~$\F$ and~$\Q$ (white vertices at the top) is larger than~$\minFdel$, but for any~$Q \in \Q$ there is a solution of size~$\minFdel$ breaking both~$\F$ and~$\Q \setminus \{Q\}$ (white vertices at the bottom).}
 \label{fig:assumptions-main-lemma-needed}
 \end{figure}

 The statement of Lemma \ref{lem:main} is best-possible in several ways. First of all, the dependence of $|\Q^*|$ on $\td(G)$ instead of $\tw(G)$ is essential.
  In Figure \ref{fig:assumptions-main-lemma-needed}~(left), a construction of a graph of treewidth $2$ together with a set $\Q$ is shown. In this graph, no optimal $\{K_2\}$-deletion (\textsc{Vertex Cover}) breaks all graphs in $\Q$. However, for any $Q \in \Q$ there is an optimal vertex cover breaking $\Q\setminus \{Q\}$. The example in Figure \ref{fig:assumptions-main-lemma-needed} can easily be extended to arbitrary $n$, showing that there is a set $\Q$ with $|\Q| = n$ such that no optimal vertex cover breaks $\Q$, yet there is no $\Q^* \subsetneq \Q$ such that no optimal vertex cover breaks $\Q^*$. Since $|\Q|$ is not bounded in terms of $\tw(G)=2$ and $\F = \{K_2\}$, this shows that $\td(G)$ cannot be replaced by $\tw(G)$.

 Secondly, the assumption that $\Q$ is $(\min_{H \in \F} |V(H)|)$-saturated cannot be avoided already for $\F = \{K_3\}$ (corresponding to \textsc{Feedback Vertex Set}). In Figure \ref{fig:assumptions-main-lemma-needed} (right) we show an example of a graph of treedepth $4$ and a set $\Q$ of size $2n+2$ that consist of single vertices of two labels each, where we again cannot properly bound the size of $\Q^*$. The example is shown for $n=4$ but can easily be generalized to arbitrary $n$, without increasing the treedepth. For any $\Q^* \subsetneq \Q$ there exists an optimal $\F$-deletion breaking $\Q^*$, while $|\Q|$ is not bounded in terms of $\td(G)$ and $\F$.

The proof of Lemma~\ref{lem:main} follows an inductive strategy that mimics how a recursive algorithm would solve \FDeletion on a bounded-treedepth graph~$C$. We pick a vertex~$v$ whose removal decreases the treedepth, and branch on whether~$v$ is part of the solution or not. If so, we remove~$v$ and recurse on a graph of smaller treedepth; if not, then we continue looking for solutions in which~$v$ is forbidden to be removed. The process builds up a set~$S$ with the property that removing~$S$ decreases the treedepth by~$|S|$, and we are only interested in solutions disjoint from~$S$. This proceeds while~$C-S$ remains connected; the branching depth is bounded since~$|S| \leq \td(C)$. When~$C-S$ becomes disconnected, we must take a more involved approach. We recurse on each of the connected components of~$C - S$ separately and find \FDeletion solutions there. But solutions for different components of~$C-S$ may not combine into a solution for~$C$, since various fragments of~$\F$-minors left behind in different components of~$C-S$, may be combined through their connections to~$S$ to form a forbidden minor. For this reason, when we recurse on connected components of~$C-S$ we place additional restrictions on the solutions chosen there, to ensure they also break \emph{fragments} of~\F-minors in such a way that the solutions can be properly combined. 

Our approach to bound the size of~$\Q^*$ is built on top of this inductive strategy. While branching over various ways to form an \FDeletion solution, we additionally branch on what fragments of labeled $\Q$-minors are left behind by the solution in the various components of~$C-S$. By exploiting the saturatedness of~$\Q$ in a crucial way, we obtain the desired bound on~$|\Q^*|$. The formalization of these ideas requires an extensive theory of how fragments of a forbidden minor in various components of~$C-S$ may combine to form a forbidden minor in~$C$, which is developed in Appendix~\ref{sec:prelims}.

\section{Kernelization for \texorpdfstring{\FDeletion}{F-Deletion}} \label{sec:kernel}
\newcommand{\rrgraphs}{\ensuremath{\mathcal{H}}\xspace} 
\newcommand{\qsize}{\ensuremath{\gamma}\xspace} 
\newcommand{\nummarked}{\ensuremath{\tau}\xspace} 
\newcommand{\qThreshold}{\ensuremath{\rho}\xspace}

In this section we describe the recursive approach to kernelize the \FDeletion problem using a constant-treedepth modulator. The correctness of this strategy will crucially depend on Lemma~\ref{lem:main}. Lemma~\ref{lem:kernel:recursive} identifies essential components in the input.

\begin{lemma} \label{lem:kernel:recursive}
Let~$\F$ be a finite set of connected graphs and let~$\eta \geq 1$ be a constant. There is a polynomial-time algorithm that, given a graph~$G$ along with a modulator~$X \subseteq V(G)$ such that~$\td(G-X) \leq \eta$, outputs an induced subgraph~$G'$ of~$G$ together with an integer~$\Delta$ such that~$\minFdel(G) = \minFdel(G') + \Delta$ and~$G' - X$ has at most~$|X|^{\Oh_{\F, \eta}(1)}$ connected components.
\end{lemma}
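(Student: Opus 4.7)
The plan is to isolate $|X|^{\Oh_{\F,\eta}(1)}$ essential components of $G-X$ using the labeled-minor framework provided by Lemma~\ref{lem:main}, and to charge the remaining components' cost into $\Delta$. First I convert each connected component $C$ of $G-X$ into an $X$-labeled graph by setting $\lab_C(v) := N_G(v) \cap X$ for every $v \in V(C)$. Put $h := \max_{H \in \F}|V(H)|$, $e := \max_{H \in \F}|E(H)|$, and $\theta := \min_{H \in \F}|V(H)|$. Define a fragment set $\Q$ as the union of (i) every connected $X$-labeled graph on at most $e+1$ vertices with labelsets of size at most $h$, and (ii) every single-vertex graph whose labelset is a $\theta$-subset of $X$ (to enforce $\theta$-saturation as required by Lemma~\ref{lem:main}). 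A direct count yields $|\Q| \leq |X|^{\Oh_\F(1)}$, and $\Q$ is constructible in polynomial time.

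For each subset $\Q^* \subseteq \Q$ of size at most $f(\F,\eta)$ (the bound on $|\Q^*|$ in Lemma~\ref{lem:main}), of which there are $|X|^{\Oh_{\F,\eta}(1)}$, I use Lemma~\ref{lem:compute:kernel} to identify components $C$ for which every $Y_C \in \minFdelsol(C)$ leaves a $\Q^*$-minor in $C-Y_C$, and mark up to $h+e+1$ such components per $\Q^*$. Let $\mathcal{C}'$ be the marked set; then $|\mathcal{C}'| \leq |X|^{\Oh_{\F,\eta}(1)}$. Define $G' := G[X \cup \bigcup_{C \in \mathcal{C}'}V(C)]$ and $\Delta := \sum_{C \in \mathcal{C}\setminus \mathcal{C}'}\minFdel(C)$; both are computable in polynomial time by Lemma~\ref{lem:compute:kernel}.

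The inequality $\minFdel(G) \geq \minFdel(G') + \Delta$ is immediate: any $Y \in \minFdelsol(G)$ restricts to a valid $\F$-deletion of $G'$ and, for each unmarked $C$, satisfies $|Y \cap V(C)| \geq \minFdel(C)$. For the reverse inequality, I take an optimal $Y' \in \minFdelsol(G')$ and, for each unmarked $C$, pick $Y_C \in \minFdelsol(C)$ such that $C - Y_C$ contains no $\Q$-minor; such $Y_C$ exists because otherwise Lemma~\ref{lem:main} would furnish a small $\Q^* \subseteq \Q$ witnessing $C$'s essentialness, so $C$ would have been marked under $\Q^*$ (or its role would be absorbed by the representatives marked under $\Q^*$, via the swap discussed below).

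The main obstacle is proving that $Y^\star := Y' \cup \bigcup_{C \text{ unmarked}} Y_C$ contains no $\F$-minor in $G$. Suppose a minimal minor model $\varphi$ of some $H \in \F$ survives. By Lemma~\ref{lem:minormodel:countcomp}, the range of $\varphi$ meets at most $|X|+h+e$ components of $G-X$. If $\varphi$ uses only marked components (and $X$), it already lives in $G' - Y'$, contradicting $Y' \in \minFdelsol(G')$. Otherwise $\varphi$ intersects some unmarked $C$; the restriction of $\varphi$ to $C$, after contracting each branch-set piece to a single $X$-labeled vertex, is a labeled minor of $C-Y_C$. Combining the size bound for minimal minor models with $\theta$-saturation (invoked when a single piece is attached to $\geq \theta$ distinct $X$-vertices) shows that this labeled minor contains some graph of $\Q$, contradicting our choice of $Y_C$. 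The residual case of an unmarked $C$ that is essential for a $\Q^*$ whose marking capacity is saturated is resolved by rerouting $\varphi$'s role on $C$ to an unused marked representative of the same $\Q^*$, which is always available because $\varphi$ visits at most $h+e$ components of $G-X$ while $h+e+1$ representatives are marked; this swap reduces to the "only marked components" case above.
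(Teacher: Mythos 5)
Your overall architecture matches the paper: label components by adjacency to $X$, enumerate bounded-size subsets of a fragment universe, mark a few components per subset, keep the marked ones, and charge the rest into $\Delta$. The paper's proof of Claim~3 (the "put back one unmarked component" claim) also proceeds by a replacement argument that reroutes a forbidden minor model away from the unmarked component and invokes Lemma~\ref{lem:minormodel:countcomp} and the saturation of the fragment set. So far so good. But there is a genuine gap in how you justify the existence of a good local solution $Y_C$ in an unmarked component, and a related miscalibration of the marking threshold.

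You claim that for each unmarked $C$ there is an optimal $Y_C\in\minFdelsol(C)$ such that $C-Y_C$ has no $\Q$-minor, where $\Q$ is your fixed universe of all small labeled fragments plus $\theta$-saturation singletons. This is false in general, and the marking does not make it true: Lemma~\ref{lem:main} only gives a bounded $\Q^*\subseteq\Q$ such that no optimal solution breaks $\Q^*$, and your procedure marks just $h+e+1$ components per $\Q^*$. If, say, $1000$ components are essential for the same $\Q^*$, all but $h+e+1$ of them are unmarked yet have no optimal solution breaking $\Q^*\subseteq\Q$, so no $Y_C$ as you described exists. Your "rerouting/swap" is meant to handle this, but it does not close the hole: the swap would require the marked representatives $C^j$, after removing $Y'\cap V(C^j)$, to realize the same labeled fragment $H_L$ that the minor model consumed from $C-Y_C$ --- but $Y'$ need not act locally optimally in $C^j$, and even if it did, "leaves some $\Q^*$-minor" does not give you the specific fragment you need. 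You also misquote Lemma~\ref{lem:minormodel:countcomp}: a minimal model touches at most $|X|+|V(H)|+|E(H)|$ components, not $h+e$, so $h+e+1$ marked representatives per $\Q^*$ is off by a factor of $\Theta(|X|)$ even for the argument you sketch.

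The paper sidesteps all of this by treating one unmarked component $C^*$ at a time and, crucially, by defining the forbidden-fragment set $\Q$ \emph{relative to an optimal solution} $\widehat{Y}$ on $\widehat{G}:=G-V(C^*)$: $\Q$ consists precisely of those small labeled fragments that survive in \emph{fewer than} $\rho=|X|+\max_{H\in\F}(|V(H)|+|E(H)|)$ components of $\widehat{G}-X$ after removing $\widehat{Y}$. Fragments left behind in many components need not be killed inside $C^*$ because a replacement component is always available; only the "rare" fragments matter. The paper then shows an optimal $Y^*$ in $C^*$ breaking this adaptive $\Q$ exists, using a marking threshold $\tau=|X|+1+\gamma\cdot\rho\in\Theta(|X|)$ per $\Q^*$: if not, $\tau$ other components were marked, at most $\gamma\cdot\rho$ of them can still realize some $\Q^*$-fragment under $\widehat{Y}$ (by the definition of $\Q$), so $\widehat{Y}$ overspends its local budget in at least $|X|+1$ components, contradicting optimality (take $X$ plus local optima instead). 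Both the $\Theta(|X|)$ marking budget and the solution-dependent $\Q$ are load-bearing; your proposal has neither, which is why the existence of $Y_C$ and the swap argument do not go through.
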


Before proving this lemma, we show how it implies Theorem~\ref{thm:main}.

\begin{thm:main:statement}
\mainthm
\end{thm:main:statement}
\begin{proof}
Consider an input~$(G,X,k)$ to \FDeletion. The proof is by induction on~$\eta$.

(\textbf{$\eta = 1$}) If~$\td(G-X) = 1$, then~$G-X$ is an independent set and any connected component of~$G-X$ contains one vertex. Apply Lemma~\ref{lem:kernel:recursive} to find an induced subgraph~$G'$ of~$G$ and integer~$\Delta$ such that~$\minFdel(G) = \minFdel(G') + \Delta$, which implies that~$(G,X,k)$ has answer \yes if and only if~$(G', X, k - \Delta)$ has answer \yes. Now~$G'-X$ has~$|X|^{\Oh_{\F, 1}(1)}$ single-vertex connected components. It follows that~$G'-X$ has at most~$|X| + |X|^{\Oh_{\F,1}(1)}$ vertices, which is polynomial in~$|X|$ for fixed~$\F$. Hence~$(G',X,k - \Delta)$ forms a polynomial kernel.

(\textbf{$\eta > 1$}) For~$\eta > 1$, we apply Lemma~\ref{lem:kernel:recursive} on the input~$(G,X,k)$ and find~$G'$ and~$\Delta$ as above. We will augment the modulator~$X$ into a superset~$X'$ to ensure that~$\td(G'-X') < \eta$. To this end, we consider each connected component~$C$ of~$G'-X$. If~$C$ consists of a single vertex then its treedepth is already smaller than~$\eta > 1$. Otherwise,~$C$ is a connected graph with more than one vertex, and by Definition~\ref{def:treedepth} there is a vertex~$x_C$ such that~$\td(C - \{x_C\}) < \td(C)$. Since the \textsc{Treedepth} problem parameterized by the target width is fixed-parameter tractable~\cite{ReidlRVS14}, and~$\eta$ is a constant, we can find such a vertex~$x_C$ by trying all options for~$x_C$ and computing the treewidth of the resulting graph in~$f(\eta) \cdot n^{\Oh(1)}$ time. (Alternatively, we can compute a treedepth-decomposition of~$C$ using the algorithm of Reidl et al.~\cite{ReidlRVS14} and take its root as~$x_C$.) We initialize~$X'$ as~$X$. For each component~$C$ of~$G'-X$ with treedepth larger than one, we add the corresponding treedepth-decreasing vertex~$x_C$ to~$X'$.

Since Lemma~\ref{lem:kernel:recursive} guarantees that the number of connected components of~$G'-X$ is polynomial in~$|X|$ for fixed~$\F$ and~$\eta$, the resulting modulator~$X'$ has size polynomial in~$|X|$. Moreover, it guarantees that~$\td(G'-X') < \eta$. Hence we now have an instance~$(G',X',k - \Delta)$ of \FDeletion parameterized by a treedepth-$(\eta-1)$ modulator, with the same answer as~$(G,X,k)$. We apply the kernel for the parameterization by a treedepth-($\eta-1$) modulator, which outputs an instance~$(G^*,X^*,k^*$) with the same answer as~$(G',X',k - \Delta)$ and therefore as~$(G,X,k)$. By induction, the size of~$G^*$ is bounded by some polynomial in~$|X'|$, which in turn is bounded by a polynomial in~$|X|$. Hence~$G^*$ has size~$|X|^{\Oh_{\F, \eta}(1)}$ for some suitably chosen constant, and we output~$(G^*,X^*,k^*)$ as the result of the kernelization.
\end{proof}

Now we prove Lemma~\ref{lem:kernel:recursive}.

\begin{proof}[Proof of Lemma \ref{lem:kernel:recursive}]
Let~$\mathcal{C}$ be the connected components of~$G-X$. To reduce their number, we have a single reduction rule stated in terms of labeled graphs. With each connected component~$C \in \mathcal{C}$, we naturally associate an $X$-labeled graph~$C_L$ by assigning a vertex~$v \in V(C)$ the labelset~$N_G(v) \cap X$. We are interested in which of these labeled graphs have optimal \FDeletion solutions that also hit certain fragments of potential \F-minor-models. We therefore define a set~$\rrgraphs$ which is a superset of the relevant fragments. 
We use~$\|\F\|$ as a shorthand for~$\max_{H \in \F} |V(H)|$. Let~$\rrgraphs$ consist of the connected $\|\F\|$-restricted $X$-labeled graphs that have at most~$m_\F := \max_{H \in \F} |E(H)|$ edges. We consider two $X$-labeled graphs to be identical if there is an isomorphism between them that respects the labelsets.

\begin{innerclaim}
$|\rrgraphs| \in |X|^{\Oh_{\F}(1)}$.
\end{innerclaim}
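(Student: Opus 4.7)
The plan is to decompose the count by first bounding the number of unlabeled skeleta and then, for each skeleton, bounding the number of ways to attach labelsets. Since elements of $\mathcal{H}$ are connected with at most $m_\F$ edges, each such graph has at most $m_\F + 1$ vertices (any connected graph on $n$ vertices has at least $n-1$ edges). The number of unlabeled graphs on at most $m_\F + 1$ vertices is a constant depending only on $\F$, so we may fix attention to one skeleton and multiply by this $\Oh_\F(1)$ factor at the end.

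Next, I would count labelings. Because $\mathcal{H}$ consists of $\|\F\|$-restricted graphs, each vertex carries a labelset that is a subset of $X$ of size at most $\|\F\|$. The number of such subsets is
\[
\sum_{i=0}^{\|\F\|}\binom{|X|}{i} \;\leq\; (\|\F\|+1)\,|X|^{\|\F\|} \;=\; \Oh_\F\!\bigl(|X|^{\|\F\|}\bigr).
\]
Assigning an independent labelset to each of the at most $m_\F+1$ vertices then yields at most $\Oh_\F\!\bigl(|X|^{\|\F\|}\bigr)^{m_\F+1} = |X|^{\Oh_\F(1)}$ labelings per skeleton. Multiplying by the constant number of skeleta gives the claimed bound $|\mathcal{H}| \in |X|^{\Oh_\F(1)}$; identifying labeled graphs up to label-preserving isomorphism only reduces the count, so the upper bound survives.

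There is no real obstacle here: the only thing to be careful about is to remember that the $\|\F\|$-restriction caps the exponent on $|X|$ at a constant per vertex, and that connectivity together with the edge budget caps the vertex count at a constant. Both observations are immediate from the definitions, so the claim follows from a straightforward product bound.
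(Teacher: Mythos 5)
Your proof is correct and follows essentially the same route as the paper: bound the number of unlabeled graph structures on at most $m_\F+1$ vertices by an $\Oh_\F(1)$ constant, bound the number of admissible labelsets per vertex by a polynomial in $|X|$ (the paper uses $(|X|+1)^{\|\F\|}$ where you use $\sum_{i\leq\|\F\|}\binom{|X|}{i}$, but these are equivalent up to constants depending on $\F$), and take the product. The final remark that quotienting by label-preserving isomorphism only decreases the count is a fine way to justify the upper bound.
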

\begin{innerproof}
Graphs in~$\rrgraphs$ have at most~$m_\F + 1$ vertices. There are less than~$2^{(m_\F+1)^2}$ distinct choices for the graph structure of a member of~$\rrgraphs$, since there are less than~$2^{n^2}$ different $n$-vertex graphs. For each vertex, there are less than~$(|X|+1)^{\|\F\|}$ choices for a labelset of size at most~$\|\F\|$. Hence each graph structure~$H$ can appear with less than~$((|X|~+~1)^{\|\F\|})^{|V(H)|} \leq (|X|+1)^{\|\F\| \cdot (m_\F+1)}$ different choices of labeling function, giving an overall bound~$|\rrgraphs| \leq 2^{(m_\F+1)^2} \cdot (|X|+1)^{\|\F\| \cdot (m_\F + 1)}$ that is polynomial in~$|X|$.
\end{innerproof}

Choose~$\qsize \in \Oh_{\F, \eta}(1)$ such that Lemma~\ref{lem:main} guarantees that for this choice of~$\F$ and the treedepth bound~$\eta$, one can always find~$\Q^* \subseteq \Q$ of size at most~$\gamma$. Let~$\qThreshold := |X| + \max_{H \in \F} (|V(H)| + |E(H)|)$, and~$\nummarked := |X| + 1 + \qsize \cdot \qThreshold \in \Oh_{\F, \eta}(|X|)$. Consider the following marking procedure.

\begin{procedure}
For each set~$\mathcal{Q} \subseteq \rrgraphs$ of size at most~$\gamma$, do the following. Let
\begin{equation*}
\mathcal{C}_\Q := \left \{ C \in \mathcal{C} \mid \forall Y \in \minFdelsol(G[C])\colon C_L - Y \text{ has a graph from~$\mathcal{Q}$ as a labeled minor} \right \}.
\end{equation*}
Mark~$\nummarked$ arbitrarily chosen components from~$\mathcal{C}_\Q$, or mark all of them if there are fewer than~$\nummarked$.
\end{procedure}

Let~$\mathcal{C'} \subseteq \mathcal{C}$ denote the marked components,~$G' := G[X \cup \bigcup_{C \in \mathcal{C'}} C]$, and let~$\Delta := \sum _{C \in \mathcal{C} \setminus \mathcal{C'}} \minFdel(G[C])$. The procedure can be executed in polynomial time, using variants of Courcelle's theorem to find the sets~$\mathcal{C}_\Q$. We explain how this is done in Lemma~\ref{lem:compute:kernel}. Since~$\gamma \in \Oh_{\F, \eta}(1)$, the number of subsets of~$\rrgraphs$ over which we iterate is polynomial in~$|\rrgraphs|$ and therefore in~$|X|$. Since the graphs in~$\Q$ are $\|\F\|$-restricted, the number of labels involved is constant for fixed~$\F$ and~$\eta$, and therefore Lemma~\ref{lem:compute:kernel} guarantees a polynomial running time.

\begin{innerclaim}
$|\mathcal{C}'| \leq |X|^{\Oh_{\F, \eta}(1)}$.
\end{innerclaim}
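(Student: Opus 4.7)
The plan is a direct counting argument, since the marking procedure only ever marks at most $\nummarked$ components per candidate set $\Q$, and the candidate sets $\Q$ are subsets of $\rrgraphs$ of bounded size. First I would bound the number of sets $\Q \subseteq \rrgraphs$ with $|\Q| \le \gamma$ that the outer loop of the procedure iterates over. Using the previous claim, $|\rrgraphs| \le |X|^{\Oh_\F(1)}$, so the number of such subsets is at most $\binom{|\rrgraphs|}{\le \gamma} \le |\rrgraphs|^\gamma \le |X|^{\gamma \cdot \Oh_\F(1)}$. Since $\gamma \in \Oh_{\F,\eta}(1)$, this is $|X|^{\Oh_{\F,\eta}(1)}$.

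Next I would observe that each iteration marks at most $\nummarked$ components, and that by construction $\nummarked = |X| + 1 + \gamma \cdot \qThreshold \in \Oh_{\F,\eta}(|X|)$. Since a component of $\mathcal{C}$ can become marked only during one of these iterations, the total number of marked components satisfies
\[
|\mathcal{C}'| \;\le\; \bigl|\{\Q \subseteq \rrgraphs \colon |\Q| \le \gamma\}\bigr| \cdot \nummarked \;\le\; |X|^{\Oh_{\F,\eta}(1)} \cdot \Oh_{\F,\eta}(|X|) \;=\; |X|^{\Oh_{\F,\eta}(1)},
\]
which is the desired bound.

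There is no real obstacle here: both $\gamma$ and the labelset-size bound $\|\F\|$ are constants once $\F$ and $\eta$ are fixed, so the bound on $|\rrgraphs|$ from the preceding claim, together with the polynomial dependence of $\nummarked$ on $|X|$, immediately yields the claim. The entire proof is a one-line computation; the only thing to be careful about is to absorb the extra factor $|X|$ contributed by $\nummarked$ into the polynomial exponent, which is straightforward since $\Oh_{\F,\eta}(|X|) \le |X|^{\Oh_{\F,\eta}(1)}$ for $|X| \ge 1$ (and the trivial case $|X| = 0$ can be handled separately).
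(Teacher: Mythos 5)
Your proof is correct and follows the same approach as the paper: bound the number of candidate sets $\Q$ iterated over by $|X|^{\Oh_{\F,\eta}(1)}$ using the prior claim on $|\rrgraphs|$ and the constant bound on $\gamma$, and multiply by the per-iteration marking budget $\nummarked \in \Oh_{\F,\eta}(|X|)$. The paper states this more tersely but the reasoning is identical.
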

\begin{innerproof}
The procedure loops over~$|X|^{\Oh_{\F, \eta}(1)}$ subsets~$\mathcal{Q}$. For each such set, we mark at most~$\nummarked = |X| + 1 + \qsize \cdot \qThreshold \in \Oh_{\F, \eta}(|X|)$ components.
\end{innerproof}

The pair~$(G', \Delta)$ is the desired outcome of Lemma~\ref{lem:kernel:recursive}. It remains to prove that~$\minFdel(G) = \minFdel(G') + \Delta$. This follows from Claim~\ref{claim:putback:component} by induction.

\begin{innerclaim} \label{claim:putback:component}
For any unmarked component~$C^* \in \mathcal{C} \setminus \mathcal{C'} \colon \minFdel(G) = \minFdel(G - V(C^*)) + \minFdel(G[C^*])$.
\end{innerclaim}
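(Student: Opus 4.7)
The direction $\minFdel(G) \geq \minFdel(G - V(C^*)) + \minFdel(G[C^*])$ is immediate: for any $Y \in \minFdelsol(G)$, both $Y \cap V(C^*)$ and $Y \setminus V(C^*)$ are $\F$-deletions of $G[C^*]$ and $G - V(C^*)$ respectively, since their complements are induced subgraphs of the $\F$-minor-free graph $G - Y$. Summing the resulting inequalities gives the bound. The challenge is the reverse inequality, which is where the unmarkedness of $C^*$ must be exploited.

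For the $\leq$ direction I would fix any $Y_2 \in \minFdelsol(G - V(C^*))$ and try to produce $Y_1 \in \minFdelsol(G[C^*])$ with $Y_1 \cup Y_2$ an $\F$-deletion of $G$; then $|Y_1 \cup Y_2| = \minFdel(G[C^*]) + \minFdel(G - V(C^*))$ is the desired witness. Suppose no such $Y_1$ exists. Then for every $Y_1 \in \minFdelsol(G[C^*])$ there is a minimal minor model $\varphi_{Y_1}$ of some $H \in \F$ in $G - Y_1 - Y_2$, and since $H$ is connected while $Y_1$, $Y_2$ are $\F$-deletions on their sides, $\varphi_{Y_1}$ must cross $V(C^*) \setminus Y_1$ and $V(G) \setminus V(C^*) \setminus Y_2$, communicating only through $X$. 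From $\varphi_{Y_1}$ I extract a connected $X$-labeled fragment $Q(Y_1) \in \rrgraphs$ by taking a connected piece of the image of $\varphi_{Y_1}$ inside $C^* \setminus Y_1$, contracting each branch-set piece to a single vertex, and labeling that vertex by its $X$-neighborhood in $G$; a minimal-model analysis guarantees $|V(Q(Y_1))| \leq m_\F + 1$, $|E(Q(Y_1))| \leq m_\F$, and $\|\F\|$-restrictedness, so $Q(Y_1) \in \rrgraphs$, and $Q(Y_1)$ is a labeled minor of $C_L - Y_1$.

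Next I set $\Q := \Q_{\text{sat}} \cup \{Q(Y_1) : Y_1 \in \minFdelsol(G[C^*])\}$, where $\Q_{\text{sat}}$ contains, for every $X' \subseteq X$ with $|X'| = \min_{H \in \F}|V(H)|$, the single-vertex graph with labelset $X'$. Then $\Q \subseteq \rrgraphs$ is $(\min_{H \in \F}|V(H)|)$-saturated and every optimal $Y_1$ leaves a $\Q$-minor in $C_L$. Because $\td(C_L) \leq \td(G - X) \leq \eta$, Lemma~\ref{lem:main} produces $\Q^* \subseteq \Q$ with $|\Q^*| \leq \gamma$ such that every optimal $Y_1$ still leaves a $\Q^*$-minor. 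Hence $C^* \in \mathcal{C}_{\Q^*}$, and since $C^*$ is unmarked while the procedure ranged over every $\Q' \subseteq \rrgraphs$ of size $\leq \gamma$ (in particular over $\Q^*$), the procedure marked $\nummarked$ members of $\mathcal{C}_{\Q^*}$ different from $C^*$, giving $|\mathcal{C}_{\Q^*} \setminus \{C^*\}| \geq \nummarked = |X| + 1 + \gamma\qThreshold$.

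I then derive the contradiction by exhibiting an $\F$-minor inside $G - V(C^*) - Y_2$, violating $Y_2 \in \minFdelsol(G - V(C^*))$. Each $Q \in \Q^* \setminus \Q_{\text{sat}}$ originates from some bad $Y_1$, and the corresponding $\F$-minor in $G - Y_1 - Y_2$ splits into the $Q$-realization inside $C^*$ plus an outside structure in $G - V(C^*) - Y_2$; by Lemma~\ref{lem:minormodel:countcomp} the outside structure meets at most $\qThreshold$ components of $G - X$, so summing over $\Q^*$ only $\leq \gamma\qThreshold$ components are ``used''. The remaining $\geq |X| + 1$ components of $\mathcal{C}_{\Q^*} \setminus \{C^*\}$ are ``free'', and absorbing at most $|X|$ of them into a slack budget created by $|Y_2 \cap X| \leq |X|$ leaves at least one $C_i$ on which $Y_2 \cap V(C_i) \in \minFdelsol(G[C_i])$. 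Being in $\mathcal{C}_{\Q^*}$, the graph $C_{i,L} - Y_2$ contains some $Q \in \Q^*$ as a labeled minor; substituting its $C_i$-realization for the $C^*$-realization in the witnessing bad $\F$-minor (handling the $Q \in \Q_{\text{sat}}$ case separately via the single-vertex star it produces) yields the forbidden $\F$-minor in $G - V(C^*) - Y_2$. The main obstacle will be making this substitution airtight: a fragment can be intrinsically disconnected, so that no single connected piece suffices on its own to realize all of $H$ after substitution, and the slack counting must be sharp enough to deliver a free $C_i$ where $Y_2 \cap V(C_i)$ is genuinely optimal for $G[C_i]$.
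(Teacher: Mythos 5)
Your outline diverges from the paper's proof at the definition of $\Q$, and that choice turns out to be fatal. You build $\Q$ from the failure side of $C^*$: for each optimal $Y_1$ on $G[C^*]$ you extract one connected fragment $Q(Y_1)$ from a bad minor model in $G - Y_1 - Y_2$ and add it to $\Q$. You then use Lemma~\ref{lem:main} and the marking to locate a component $C_i$, disjoint from those models, where $Y_2$ restricted to $C_i$ is optimal and $C_{i,L} - Y_2$ realizes some $Q \in \Q^*$, and you try to exhibit a forbidden minor in $G - V(C^*) - Y_2$ by substituting the $C_i$-realization of $Q$ for its $C^*$-realization. The obstacle you flag is the real break: the bad model's intersection with $C^*$ generally decomposes into several contracted connected pieces, your $\Q$ records only one of them per $Y_1$, and nothing in your setup guarantees the remaining pieces can be realized anywhere outside $C^*$ --- so the one-shot swap does not clear $C^*$, and your $\Q$ gives no handle with which to iterate.

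The paper defines $\Q$ from the other side. With $\widehat{Y} := Y_2$ fixed, $\Q$ collects the fragments that are \emph{rare}, i.e.\ realized in fewer than $\qThreshold$ components of $(G - V(C^*)) - X$ after removing $\widehat{Y}$. This rarity property is exactly what your $\Q$ lacks, and the paper uses it twice. When some optimal $Y^*$ on $G[C^*]$ breaks $\Q$, \emph{every} contracted piece $H_L$ of \emph{any} minimal minor model's intersection with $C^*$ in $G - (\widehat{Y} \cup Y^*)$ lies outside $\Q$, so it is realized in at least $\qThreshold$ components; Lemma~\ref{lem:minormodel:countcomp} bounds how many components the model touches, so one realizing component is disjoint from it, and swapping $H_L$ there strictly decreases $|\varphi(V(H)) \cap V(C^*)|$. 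An extremal choice of model then forces that intersection to be empty, which shows $\widehat{Y} \cup Y^*$ is a solution. When no such $Y^*$ exists, rarity gives the counting that at most $\qsize\qThreshold$ of the $\nummarked$ marked components realize $\Q^*$ after $\widehat{Y}$, leaving $|X|+1$ on which $\widehat{Y}$ over-pays, contradicting its optimality. Your $\Q$ supports neither argument: it carries no information about how often its members are realized elsewhere, and your proof never produces a $Y^*$ that breaks anything, so you have no license to swap out pieces other than the one you explicitly recorded. (Relatedly, the asserted $\|\F\|$-restrictedness of $Q(Y_1)$ is unproved in your setup --- in the paper this follows because $Y^*$ breaks the saturating singletons --- though this particular hole is repairable by noting that a piece with too many labels already forces a $\Q_{\text{sat}}$-minor after removing $Y_1$, so it need not be added to $\Q$.)
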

\begin{innerproof}
Let~$\widehat{G} := G - V(C^*)$. Clearly, any solution for the graph~$G$ can be partitioned into a solution for~$\widehat{G}$ and a solution for~$G[C^*]$, so that $\minFdel(G) \geq \minFdel(\widehat{G}) + \minFdel(G[C^*])$. We focus on proving the converse. Let~$\widehat{Y} \in \minFdelsol(\widehat{G})$ be an optimal solution on~$\widehat{G}$. Let~$X_0 := X \setminus \widehat{Y}$ and let~$\rrgraphs_0 \subseteq \rrgraphs$ contain those graphs for which the labelset of each vertex is contained in~$X_0$. Now define:
\begin{align}
\Q := \{H \in \rrgraphs_0 \mid & \text{~there are fewer than $\qThreshold$ components~$C$ of~$\widehat{G} - X$} \label{eq:def:q} \\
&\text{~whose $X$-labeled version~$C_L - \widehat{Y}$ contains~$H$ as $X$-labeled minor}\}. \nonumber
\end{align}
Intuitively, one may think of~$\Q$ as those labeled graphs (that represent potential fragments of forbidden $\F$-minors) that can be realized in only few ($\qThreshold \in \Oh_{\F}(|X|)$) components of~$\widehat{G}-X$ after removing the solution~$\widehat{Y}$. When lifting the solution~$\widehat{Y}$ in~$\widehat{G}$ to a solution in~$G$ by adding a solution in~$C^*$, it will be crucial to break all $X$-labeled minor models of~$\Q$ in~$C^*$; the fragments~$\rrgraphs_0 \setminus \Q$ that remain in \emph{many} different components turn out to be irrelevant.

For a subset~$X' \subseteq X_0$ of labels, let~$I_{X'}$ be the labeled graph consisting of a single vertex with labelset~$X'$. Let~$n_\F := \min_{H \in \F} |V(H)|$ and observe that~$n_\F \leq \qThreshold$. We prove:
\begin{equation} \label{eq:singletons:inq}
\forall X' \subseteq X_0, |X'| = n_\F \colon I_{X'} \in \Q.
\end{equation}
Suppose~$I_{X'} \notin \Q$ for suitable~$X'$. Then there are~$\qThreshold \geq n_\F$ components of~$\widehat{G} - X$ that have~$I_{X'}$ as labeled minor after removing the solution~$\widehat{Y}$. Take~$n_\F$ such components~$C_L^1, \ldots, C_L^{n_\F}$, and associate each one to a distinct vertex of~$X' \subseteq V(\widehat{G}) \setminus \widehat{Y}$. The fact that~$I_{X'}$ is a labeled minor of~$C_L^i - \widehat{Y}$ for each~$i$, implies that in each such component there is a connected vertex subset~$S_i \subseteq V(C^i_L) \setminus \widehat{Y}$ such that each label of~$X'$ appears at least once on a vertex of~$S_i$. Considering the corresponding vertex subset in~$\widehat{G} - \widehat{Y}$ and taking into account that the labeling of~$C^i_L$ represents adjacency to~$X$ in~$G$, this implies that we can contract each~$S_i$ into a single vertex~$s_i$ that becomes adjacent to all vertices of~$X'$. Then contract each~$s_i$ into a distinct vertex of~$X'$: these minor operations on graph~$\widehat{G}-\widehat{Y}$ turn~$X'$ into a clique of size~$n_\F$. Hence any graph on~$n_\F$ vertices is a minor of~$\widehat{G}-\widehat{Y}$, contradicting that~$\widehat{G}-\widehat{Y}$ is~$\F$-minor-free since~$\F$ has a graph on~$n_\F$ vertices. So \eqref{eq:singletons:inq} holds.

Now consider the unmarked component~$C^*$ in the statement of Claim~\ref{claim:putback:component}, and consider its labeled version~$C^*_L$. We say that a vertex set~$Y$ \emph{breaks} the minor models of the $X_0$-labeled graphs $\Q$ in $C^*_L$, or simply \emph{breaks} $\Q$ in $C^*_L$, if $C^*_L - Y$ does not contain any graph in $\Q$ as a labeled minor. We first show the following.
\begin{equation} \label{eq:break:q}
\exists Y^* \in \minFdelsol(G[C^*]) \colon \text{~$Y^*$ breaks $\Q$ in~$C^*_L$}.
\end{equation}
To establish~\eqref{eq:break:q}, assume that no solution of size~$\minFdel(G[C^*])$ in~$G[C^*]$ breaks~$\Q$. We will use Lemma~\ref{lem:main}, together with our marking scheme, to argue for a contradiction. Observe that~\eqref{eq:singletons:inq} implies that~$\Q$ is an~$n_\F$-saturated set of~$X_0$-labeled graphs. If no optimal solution on~$G[C^*]$ breaks~$\Q$, then by Lemma~\ref{lem:main} there is a set~$\Q^* \subseteq \Q$ of size at most~$\gamma$ such that no optimal solution on~$G[C^*]$ breaks~$\Q^*$. Since the assumption that~\eqref{eq:break:q} does not hold means that the unmarked~$C^*$ was eligible to be marked for the set~$\mathcal{C}_{\Q^*}$ in our procedure above, it has marked~$\nummarked$ other components~$C^1, \ldots, C^\nummarked \in \mathcal{C}_{\Q^*}$ of~$G-X$. For each~$i \in [\nummarked]$, there is no \FDeletion solution of size~$\minFdel(G[C^i])$ in~$G[C^i]$ that breaks~$\Q^*$ in the labeled version~$C^i_L$. Since~$\Q^* \subseteq \Q$, by~\eqref{eq:def:q} we have for each graph~$H \in \Q^*$ that there are fewer than~$\qThreshold$ components~$C^i$ among~$C^1, \ldots, C^\nummarked$ for which~$C^i_L - \widehat{Y}$ contains~$H$ as a labeled minor. Since~$|\Q^*| \leq \gamma$, it follows that there are at most~$\gamma \cdot \qThreshold$ indices~$i \in [\nummarked]$ for which~$C^i_L - Y$ contains some graph from~$\Q^*$ as a labeled minor. But since~$\nummarked = |X| + 1 + \gamma \cdot \qThreshold$, there are at least~$|X|+1$ components~$C^i_L$ in which all~$\Q^*$-minors are broken by~$\widehat{Y}$. Since no \emph{optimal} solution breaks~$\Q^*$ in the marked components, we have~$|\widehat{Y} \cap V(C^i)| > \minFdel(G[C^i])$ for at least~$|X|+1$ components. But this contradicts that~$\widehat{Y}$ is an optimal solution to \FDeletion on~$\widehat{G}$: since~$\F$ consists of connected graphs, we can form a solution~$\widehat{Y}'$ by taking~$X$ together with a set of size~$\minFdel(\widehat{G}[C])$ from each component~$C$ of~$\widehat{G}-X$. Since~$|\widehat{Y}' \cap V(C)| \leq |\widehat{Y} \cap V(C)|$ for all~$C \in \mathcal{C}$, with strict inequality for at least~$|X|+1$ components, we have~$|\widehat{Y}'| < |\widehat{Y}|$. This contradicts that~$\widehat{Y}$ is an optimal solution and establishes~\eqref{eq:break:q}.

Hence there exists a \Fdel $Y^*$ in $C^*_L$ breaking $\Q$ of size $\minFdel(G[C^*])$. We prove:
\begin{equation} \label{eq:updated:solution}
\widehat{Y} \cup Y^*\text{~is a solution to \FDeletion on~$G$.}
\end{equation}
This will complete the proof of Claim~\ref{claim:putback:component}, since~$|\widehat{Y} \cup Y^*| = \minFdel(\widehat{G}) + \minFdel(G[C^*])$. Assume for a contradiction that~$\widetilde{G} := G - (\widehat{Y} \cup Y^*)$ contains some graph~$H \in \F$ as a minor. Consider a minimal minor model of~$H$ in~$\widetilde{G}$, which is given by a vertex mapping~$\varphi \colon V(H) \to 2^{V(\widetilde{G})}$, and let~$\psi \colon E(H) \to E(\widetilde{G})$ be a corresponding edge mapping.

Out of all possible minimal minor models of~$H$ in~$\widetilde{G}$, select a model~$(\varphi, \psi)$ that minimizes the quantity~$|\varphi(V(H)) \cap V(C^*)|$. Observe that if~$\varphi(V(H)) \cap V(C^*) = \emptyset$, then~$\varphi$ is also a valid model in~$\widehat{G} - \widehat{Y}$, contradicting that~$\widehat{Y}$ is a solution to \FDeletion on~$\widehat{G}$. So in the remainder we consider the case that the minor model contains at least one vertex of~$C^*$. We will build a minimal minor model of~$H$ in~$G$ using strictly fewer vertices of~$C^*$, thereby contradicting the choice of~$(\varphi, \psi)$. 
\begin{figure}
\centering
\includegraphics{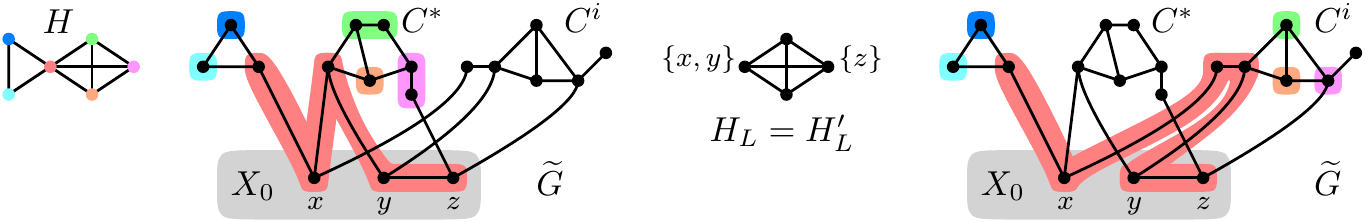}
\caption{This figure shows how to define $H_L$ based on $H$ and $\widetilde{G}$, and how to modify the minor model of $H$ in $\widetilde{G}$ such that it uses fewer vertices of $C^*$, in the proof of \eqref{eq:updated:solution} in Claim \ref{claim:putback:component}.}
\label{fig:replacement-argument}
\end{figure}
Consider the $X_0$-labeled subgraph~$H'_L$ of~$\widetilde{G}$ obtained by the following procedure, which is illustrated in Figure~\ref{fig:replacement-argument}:
\begin{enumerate}
	\item Start from the $X_0$-labeled subgraph of~$\widetilde{G}$ induced by~$\bigcup _{v \in V(H)} \varphi(v) \cap V(C^*)$, where each vertex~$u$ has labelset~$N_G(u) \cap X_0$. As observed above, this subgraph is not empty.
	\item Remove all edges from this subgraph, except those in the range of~$\psi$ and those that connect two vertices that belong to a common branch set under~$\varphi$.
	\item Contract every edge between two vertices that belong to a common branch set of~$\varphi$, obtaining an $X_0$-labeled graph~$H'_L$. (Recall that labelsets merge during edge contraction.)
\end{enumerate}
Observe that~$H'_L$ has at most~$|E(H)|$ edges, since each edge remaining in~$H'_L$ corresponds to an edge in the range of~$\psi$. We claim that~$H'_L$ is an $n_\F$-restricted graph: the labelset of each vertex has size less than~$n_\F$. To see this, observe that if some vertex of~$H'_L$ has a labelset~$X' \subseteq X_0$ of size at least~$n_\F$, then the pre-image of this vertex corresponds to a connected vertex subset~$A$ of~$\varphi(V(H)) \cap V(C^*)$ such that~$|N_G(A) \cap X_0| \geq n_\F$. Since~$(\varphi, \psi)$ is a minor model in~$\widetilde{G} = G - (\hat{Y} \cup Y^*)$, this would imply that~$C_L^* - Y^*$ has the one-vertex graph~$I_{X'}$ with labelset~$X'$ as a labeled minor. But~$I_{X'} \in \Q$ by \eqref{eq:singletons:inq}, while~$Y^*$ breaks all labeled $\Q$-minors in~$C^*_L$ by definition; a contradiction. Hence~$H'_L$ is indeed~$n_\F$-restricted.

Let~$H_L$ be an arbitrary connected component of~$H'_L$. Since~$H_L$ is connected,~$n_\F$-restricted, and contains at most~$|E(H)|$ edges, we have~$H_L \in \mathcal{H}_0$. As~$H_L$ clearly occurs as a labeled minor of~$C^*_L - Y^*$, while~$Y^*$ breaks~$\Q$ in~$C^*_L$, we have~$H_L \notin \Q$. By definition of~$\Q$, this implies there are at least~$\qThreshold$ connected components~$C^1, \ldots, C^\qThreshold$ of~$\widehat{G} - X$ such that~$C^i_L - \widehat{Y}$ contains~$H_L$ as $X_0$-labeled minor for each~$i \in [\qThreshold]$. By Lemma~\ref{lem:minormodel:countcomp}, the minimal model~$(\varphi, \psi)$ in~$\widetilde{G}$ intersects at most~$|X| + |V(H)| + |E(H)| \leq \qThreshold$ components of~$\widetilde{G} - X$ and therefore of~$G - X$. Since~$\varphi(V(H))$ also intersects~$C^* \notin \{C^1, \ldots, C^\qThreshold\}$, it follows that some~$C^i$ is disjoint from the range of~$(\varphi, \psi)$.

To finish the argument, fix~$C^i$ such that~$\varphi(V(H)) \cap V(C^i) = \emptyset$ and~$C^i_L - \widehat{Y}$ contains~$H_L$ as $X_0$-labeled minor. Let~$T$ denote the vertices of~$\varphi(V(H)) \cap V(C^*)$ whose contraction in the process above resulted in the connected component~$H_L$ of~$H'_L$. Then it is straightforward to verify that~$G[(\varphi(V(H)) \setminus T) \cup (C^i - \widehat{Y})]$ contains~$H$ as a minor. The role that vertices of~$T$ played in the minor model~$(\varphi, \psi)$ can be replaced by the vertices of~$C^i_L - \widehat{Y}$: each edge of~$\psi$ that was realized between vertices of~$T$ yielded an edge of~$H_L$ which is realized by a labeled $H_L$-minor in~$C^i_L - \widehat{Y}$; each fragment of a branch set that was realized within~$C^*$ yielded a vertex of~$H_L$ that is realized in the $H_L$-minor in~$C^i_L - \hat{Y}$; and finally the connectivity of the branch sets is ensured because the labeling ensures that for all fragments of branch sets in~$T$ that were adjacent to vertices of~$X - \widehat{Y} = X_0$, the branch set of the $H_L$-minor in~$C^i - \widehat{Y}$ realizing that fragment is also adjacent to all those vertices of~$X_0$. Hence there is a \emph{minimal} $H$-minor in~$\widetilde{G}$ whose range is a subset of~$(\varphi(V(H)) \setminus T) \cup (C^i - \widehat{Y})$. Since~$T \subseteq C^*$ is not empty, this contradicts our choice of~$(\varphi, \psi)$ as a minimal $H$-model minimizing the intersection with~$C^*$.
\end{innerproof}
This concludes the proof of Lemma~\ref{lem:kernel:recursive}.
\end{proof} 

\section{Conclusion} \label{sec:conclusion}
Our goal in this paper was to obtain polynomial kernelizations for a wide range of graph problems, in terms of a rich class of structural parameterizations. We obtained polynomial kernelizations for \FDeletion problems parameterized by a constant-treedepth modulator. The kernelization algorithm as presented here is only of theoretical interest. While the kernel size is polynomial for fixed~$\mathcal{F}$ and~$\eta$, the degree of the polynomial grows very quickly with~$\mathcal{F}$ and~$\eta$. It would be desirable to have a \emph{uniformly polynomial} kernel size, of the form~$f(\mathcal{F}, \eta) |X|^c$ for some constant~$c$ and function~$f$. Unfortunately, Theorem~\ref{thm:lowerbound} shows that even for the simplest choice of~$\mathcal{F}$, corresponding to the \textsc{Vertex Cover} problem, the degree of the polynomial must depend exponentially on~$\eta$ and no uniformly polynomial kernelization exists. The bad news also extends in the other direction: when taking the simplest choice for~$\eta$ and working with a treedepth-one modulator (a vertex cover), the degree of the polynomial in the kernel size for \FDeletion must depend on~$\mathcal{F}$~\cite[Theorem 1.1]{GiannopoulouJLS17} and a uniformly-polynomial kernel does not exist. 


\bibliography{H-min-free-paper}
\clearpage

\appendix

\section{Omitted proofs from Section \ref{sec:basic:prelims}} \label{sec:omitted:proofs}
\begin{proof}[Proof of Lemma \ref{lem:minormodel:countcomp}]
Consider a minimal minor model~$\varphi$ and let~$\psi \colon E(H) \to E(G)$ be a corresponding edge mapping. For each~$v \in V(H)$, the graph~$G[\varphi(v)]$ is connected by definition; let~$T_v$ be a spanning subtree of~$G[\varphi(v)]$. 

For each tree~$T_v$ that consists of more than one vertex, all leaves of~$T_v$ are incident on an edge in the range of~$\psi$: if~$u \in V(G)$ is a leaf of~$T_v$ not incident on an edge in the range of~$\psi$, then replacing~$\varphi(v)$ by~$\varphi(v) \setminus \{u\}$ preserves connectivity of the branch set and validity of the edge mapping~$\psi$. This yields a minor model of~$H$ in~$G$ of smaller range, contradicting the minimality of~$\varphi$.

We call a connected component~$C$ of~$G - X$ a \emph{terminal component for~$v$} if one of the following holds:
\begin{itemize}
	\item component~$C$ contains a vertex of~$T_v$ incident on an edge in the range of~$\psi$, or
	\item $T_v$ is a single-vertex tree contained in~$C$ (which occurs when~$v$ is isolated in~$H$).
\end{itemize}
A component~$C$ of~$G-X$ is a \emph{terminal component} if it is a terminal component for some~$v \in V(H)$. Observe that an edge~$\psi(e)$ cannot have endpoints in two different components of~$G - X$, as the presence of such an edge would mean that they are connected and form a single component. Hence each edge of~$H$ contributes at most one terminal component, implying that the total number of terminal components is at most~$|V(H)| + |E(H)|$.

Call a connected component~$C$ of~$G - X$ a \emph{nonterminal component for~$v$} if~$T_v$ contains a vertex of~$C$, but~$C$ is not a terminal component for~$v$. Intuitively, the minor model uses~$C$ to connect two vertices of~$\varphi(v) \cap X$. For~$v \in V(H)$ define~$X_v := X \cap T_v = X \cap \varphi(v)$. We bound the number of nonterminal components for~$v$ by~$|X_v| - 1$, as follows. 

Consider the graph~$T'_v$ on vertex set~$X_v$ obtained from~$T_v$ by repeatedly contracting any edge that has at most one endpoint in~$X_v$, which is possible since~$T_v$ is connected. If~$C$ is a nonterminal component for~$v$, then each component of~$T_v \cap C$ has at least two $T_v$-neighbors in~$X_v$ since~$T_v$ has no leaves in~$C$ by our observation above. Hence in the contraction process that turns~$T_v$ into~$T'_v$, the contraction of a nonterminal component~$C$ contributes at least one edge to~$T'_v$. No other component can contribute this same edge, as that would contradict the fact that~$T_v$ is acyclic. Hence the number of nonterminal components for~$v$ is bounded by the number of edges of~$T'_v$. As any contraction of an acyclic graph is acyclic, it follows that~$T'_v$ is an acyclic graph on vertex set~$X_v$. Hence it has at most~$|X_v| - 1$ edges, yielding the desired bound on the number of nonterminal components for~$v$. 

Since each vertex of~$X$ belongs to at most one branch set, we have $\sum_{v \in V(H)} |X_v| \leq |X|$ and hence the total number of nonterminal components is at most~$|X|$. As each component of~$G-X$ that intersects the range of~$\varphi$ is a terminal or nonterminal component for some~$v \in V(H)$, this proves Lemma~\ref{lem:minormodel:countcomp}.
\end{proof}

\begin{proof}[Proof of Lemma \ref{lem:compute:kernel}]
This essentially follows from the fact that graphs of bounded treedepth have bounded treewidth, together with known meta-theorems for graphs of bounded treewidth. The second task is the more interesting one; it can be achieved by comparing the quantity~$\minFdel(C)$ to the smallest size of a vertex subset of~$C$ that breaks all~$\F$-minors and all~$\Q$-minors. These quantities can be computed in linear time for fixed~$\F$ and~$\eta$ using the facts that~$C$ has constant treewidth, and that the questions can be phrased in the framework of linear Monadic Second Order Logic (MSOL) optimization extremum problems by Arnborg, Lagergren, and Seese~\cite[Theorem 5.6]{ArnborgLS91}. The existence of a labeled minor in a labeled graph can be formulated in MSOL on labeled graphs. This allows the problem of finding a smallest vertex subsets that removes all such minors to be formulated as finding the smallest vertex subset that satisfies an MSOL formula whose length depends only on~$\F$ and the total size of the graphs in~$\Q$, and where the number of labels in the instance depends only on the number~$L$ of labels used on graphs in~$\Q$; note that labels of~$C$ that do not occur on any graph in~$\Q$ can be safely forgotten.
\end{proof}

\section{Framework for boundaried labeled graphs} \label{sec:prelims}

In this section we set up a careful framework of definitions for working with labeled boundaried graphs. We establish several of their properties, which will be used extensively in the (technical) proof of the main lemma in Section~\ref{sec:main:lemma}.

\subsection{Labeled and boundaried graphs}

\begin{definition}[boundaried graph]
A \emph{$t$-boundaried graph} $G$ is a graph with boundary set $B \subseteq V(G)$ together with an injective boundary function $\bound_G \colon B \rightarrow [t]$. We denote the boundary $B$ of $G$ by $\delta(G)$. For $S \subseteq V(G)$, let $b_G(S) := \{b_G(u) \mid u \in (S \cap \delta(G))\}$ be the (possibly empty) set of boundary labels that are present in $S$. We use~$b_G(v)$ as a shorthand for~$b_G(\{v\})$.
\end{definition}
Note that by the above definition, every boundary vertex has a unique number between $1$ and $t$, but not every integer between $1$ and $t$ corresponds to a vertex in the boundary of $G$. Furthermore, every $t$-boundaried graph, is by definition also a $(t+1)$-boundaried graph.

For ease of presentation, we will often not define the boundary function of a $t$-boundaried graph explicitly. Only when relevant will we consider the values of~$\bound_{G}(v)$, but it should be understood that all $t$-boundaried graphs have such a boundary function. Edges between boundary vertices cannot be contracted. When contracting an edge incident with exactly one boundary vertex, the vertex resulting from the contraction takes over the boundary role.

An \emph{$X$-labeled boundaried graph} has both a label and a boundary function associated with it. As with the boundary function, we will only make the presence of the labeling function explicit when it is relevant to do so. When contracting an edge~$\{u,v\}$ in a labeled graph, the vertex resulting from the contraction is labeled by the union of the labelsets of~$u$ and~$v$.

\begin{definition}[isomorphism]\label{def:isomorphism} We extend the definition of \emph{graph isomorphism} to boundaried labeled graphs as follows. We say two $t$-boundaried $X$-labeled graphs $G$ and $G'$ are \emph{isomorphic} if there is an isomorphism $f \colon V(G) \rightarrow V(G')$ for which the following additional conditions hold.
\begin{itemize}
\item For all $v \in \delta(G)$, $\bound_G(f(v)) = \bound_{G'}(v)$, and
\item for all $v \in \delta(G')$, $\bound_{G}(f^{-1}(v)) = \bound_{G'}(v)$, and
\item for all $v \in V(G)$, $\lab_G(v) = \lab_{G'}(f(v))$.\qedhere
\end{itemize}
\end{definition}

Intuitively, labeled boundaried graphs are isomorphic if there is an isomorphism that preserves the labelsets and the indices of the boundary vertices.

\begin{definition}[universe of graphs]\label{def:graphs}
Let~$X$ be a finite set and let~$t \geq 0$ be an integer. We define the following sets of graphs, where two graphs are considered to be identical if they are isomorphic according to Definition~\ref{def:isomorphism}.
\begin{enumerate}
	\item Let $\graphs$ be the set of all (finite, undirected, simple) graphs.
\item Let $\congraphs$ be the set of all connected graphs.
	\item Let $\lgraphs$ be the set of all $X$-labeled graphs.
	\item Let $\bgraphs$ be the set of all $t$-boundaried graphs.
	\item Let $\attbgraphs$ (for \emph{attached}) be the subset of \bgraphs for which each connected component contains at least one boundary vertex.
\item Let $\mathcal{G}_{t,\textsc{def}(i)}$ be the set $t$-boundaried graphs, such that there exists a vertex $v$ with $\bound(v) = i$.
\end{enumerate}
We will also consider combinations of these identifiers in the obvious way. For example,~$\lbgraphs$ is the set of all $t$-boundaried $X$-labeled graphs.
\end{definition}
When convenient, we will interpret an unlabeled graph as an $X$-labeled graph, and a graph without boundary as a $0$-boundaried graph.

\begin{definition}[\setcomponents]
Let $G \in \lbgraphs$ be a graph, let \setcomponents (short for \emph{connected components}) be defined as
\begin{align*}
\setcomponents(G) &:= \{C \in \lbgraphs \mid C \text{ is a connected component of }G\}\text{ and}\\
\components(G) &:= |\setcomponents(G)|. \qedhere
\end{align*}
\end{definition}

Note that contrary to Definition~\ref{def:graphs}, if~$G$ contains multiple connected components that are isomorphic, then each of these appears as a distinct object in~$\setcomponents(G)$.

\begin{definition}[boundaried minor model] \label{def:minor:model:boundaried}
A \emph{boundaried minor model} of a $t$-boundaried graph $H$ in a $t$-boundaried graph $G$ is a mapping~$\varphi$ as in Definition~\ref{def:minor:model:plain}, that additionally satisfies the following for all $v \in V(H)$:
\begin{equation*}
    \bound_G(\varphi(u)) =
    \begin{cases}
      \emptyset, & \text{if } u \notin \delta(H) \\
      \{\bound_H(u)\}, & \text{otherwise.}
    \end{cases}
  \end{equation*}
\end{definition}
A \emph{boundaried labeled minor model} simultaneously satisfies the conditions of Definitions~\ref{def:minor:model:labeled} and~\ref{def:minor:model:boundaried}. Refer to Figure \ref{fig:minor} for an illustration of a boundaried labeled minor model.

\begin{figure}
\centering
\includegraphics{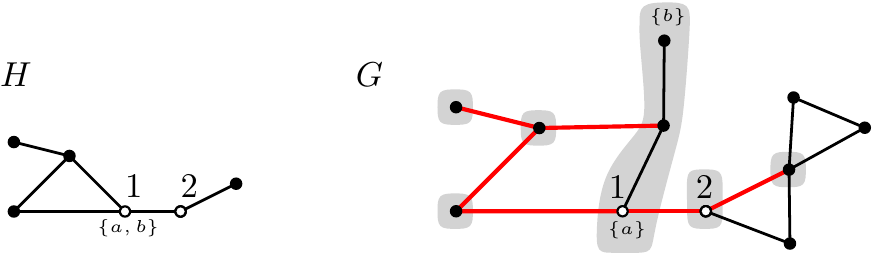}
\caption{Illustration of Definition \ref{def:minor:model:boundaried}, showing $2$-boundaried $\{a,b\}$-labeled graphs $H$ and $G$ such that $H \leqlb G$, together with a boundaried labeled minor model of $H$ in $G$ and an edge model (the red edges).}
\label{fig:minor}
\end{figure}

Using this notion of boundaried minors, we can define solutions to constrained versions of the \FDeletion problem.

\begin{definition}[\minFdel]
Let $G \in \bgraphs$ with boundary set~$S$, let~$\F \subseteq \graphs$, and let $\Pi$ be a set of $t$-boundaried graphs with boundary~$S$. Define
\begin{align*}
\minFdel(G,\Pi,S) := \min \{|Y| \mid \ &Y \subseteq V(G) \wedge Y \cap S = \emptyset, \text{ and}\\
& \text{$G-Y$ is \F-minor-free, and} \\
& G-Y \text{ has no graph in $\Pi$ as boundaried minor} \}.
\end{align*}
Define $\minFdel(G):= \minFdel(G,\emptyset, \emptyset)$, or simply the size of an optimal \F-minor free deletion in $G$. 
\end{definition}


When analyzing the structure of \FDeletion problems, it will be convenient to argue about all possible graphs that can be obtained as a minor. The following notion captures this concept.

\begin{definition}[\folio]
For $G \in \lbgraphs$, the \folio of $G$ consist of all minors of $G$:
\[\folio(G) := \{G' \in \lbgraphs \mid G' \leqlb G\}. \qedhere\]
The folio of an unlabeled graph, or an unboundaried graph, is defined analogously.
\end{definition}

\begin{definition}[\forget]
Let $G \in \lbgraphs$ and let $k \leq t$. Define
$\forget(G, k)$ as the $k$-boundaried $X$-labeled graph $G'$ obtained from~$G$ by setting $\bound_{G'}(v) = \bound_G(v)$ for all $v \in V(G)$ for which $\bound_G(v) \leq k$, and forgetting the boundary status of the higher-indexed boundary vertices. Define $\forget(G) := \forget(G,0)$.

For a set of graphs $S$, define $\forget(S,k) := \{\forget(G,k) \mid G \in S\}$ and $\forget(S) := \forget(S,0)$.
\end{definition}

Observe that the~$\forget$ operation is only used to forget the boundary status of a vertex; it is not used to omit labels from a labelset.

We now introduce the \extend  (for \emph{extend}) operation, which is illustrated in Figure~\ref{fig:extend}. It is useful when translating the question whether~$H$ is an ordinary minor of~$G$, into a question whether an extended version of~$H$ is a minor of a $t$-boundaried version of~$G$ in which $t$ distinct vertices have been selected as the boundary. The presence of the boundary can then be used in a divide-and-conquer type of approach.

\begin{figure}
\centering
\includegraphics{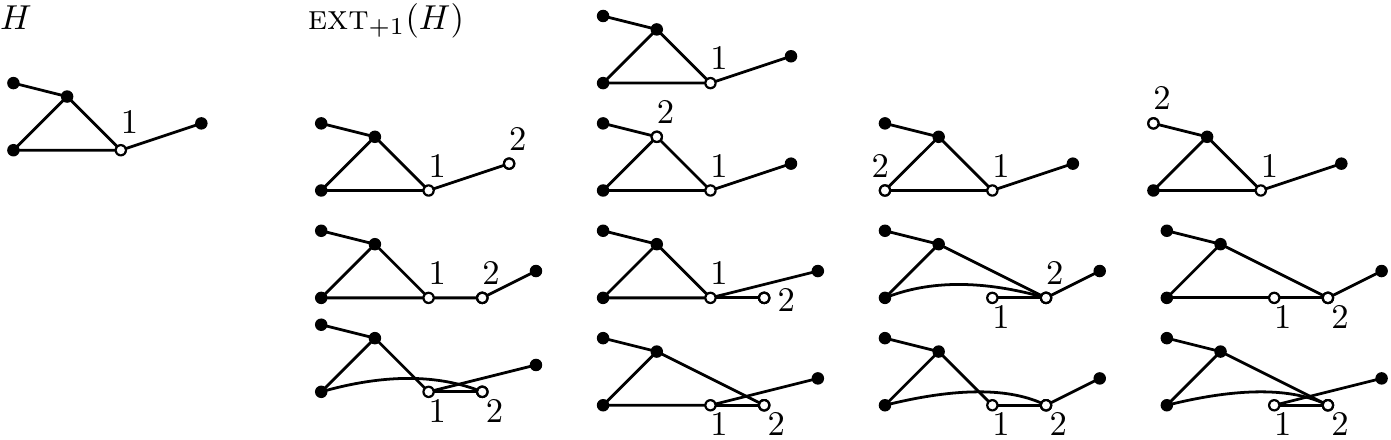}
\caption{Illustration of Definition~\ref{def:extend} (\extend). This figure shows $1$-boundaried graph $H$, together with $\extend_{+1}(H)$.}
\label{fig:extend}
\end{figure}

\begin{definition}[\extend] \label{def:extend}
Let $H \in \lbgraphs$ for some~$t \geq 0$. Let $\extend_{+1}(H)$ (short for \emph{extend}) be the set of all $(t+1)$-boundaried graphs $H'$ that can be obtained from $H$ by using exactly one of the following steps:
\begin{itemize}
\item Let $H'$ be equal to $H$, thereby forming~$H'$ as a $(t+1)$-boundaried graph in which there is no $t+1$'th boundary vertex.
\item Take a vertex $v \in V(H)\setminus \delta(H)$ and set $\bound_{H'}(v) := t+1$.
\item Split a vertex $u \in \delta(H)$ as follows. Let $V(H') := V(H) \cup \{v\}$. Let $\bound_{H'}(v) := t+1$. Add edge $\{u,v\}$ to $H'$.
    For any edge $\{u',u\} \in E(H)$ either keep it in $H'$ or replace it by edge $\{u',v\}$. For each label~$\ell$ on the labelset of~$u$, either keep it on~$u$ or move it to the labelset of~$v$.
\end{itemize}
Define $\extend_{+t'}(H)$ as the set of~$(t+t')$-boundaried graphs that can be obtained from~$H$ by applying exactly~$t'$ of such operations in a row. The extend operation for unlabeled graphs is defined analogously, with the exception that there are no labels to be divided in the uncontract step. For a set of graphs~$\Q$, define~$\extend_{+1}(\Q) := \bigcup _{Q \in \Q} \extend_{+1}(Q)$, and~$\extend_{+t}(\Q)$ analogously.
\end{definition}

Observe that, in terms of graph structure, the operations above can be reversed by minor operations that contract an edge (in the third case). This implies that, for all~$H \in \lbgraphs$ and~$H' \in \extend_{+t'}(H)$ we have that~$H \leqlb \forget(H', t)$: after forgetting the boundary status of the last~$t'$ boundary vertices, the resulting~$t$-boundaried labeled graph has~$H$ as a minor.

The next lemma follows quite easily from the above definitions. It shows that a $t$-boundaried graph~$H$ can be obtained from an unboundaried graph by~$t+1$ extend operations if and only if it can be obtained by~$t$ extend operations. It is later used to justify a technical step in the proof of the main lemma.

\begin{lemma}
\label{item:remove_boundary_vertex_reduce_extend} Let $G \in \lgraphs$, $H \in \lbgraphs$.  Then $H \in \extend_{+t+1}(G) \Leftrightarrow H \in \extend_{+t}(G)$.
\end{lemma}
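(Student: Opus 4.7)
The plan is to prove the equivalence by handling each direction separately, exploiting the fact that the ``do-nothing'' first bullet of Definition~\ref{def:extend} allows a sequence of $t$ extend operations to be padded to a sequence of $t+1$ operations, and that the other two bullets are forced to introduce a boundary vertex of the fresh index.

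For the direction $H \in \extend_{+t}(G) \Rightarrow H \in \extend_{+t+1}(G)$, I would take any witnessing sequence of $t$ extend operations producing $H$ from $G$ and append a single application of the first bullet of Definition~\ref{def:extend}. By construction this bullet leaves the underlying $X$-labeled graph and its existing boundary function unchanged, and merely reinterprets the resulting graph as a $(t+1)$-boundaried graph in which no vertex carries the index $t+1$. Since $H$ itself is such a graph (it is $t$-boundaried, hence also $(t+1)$-boundaried with no index-$(t+1)$ vertex by the convention from the definition of boundaried graphs), the extended sequence is a valid witness that $H \in \extend_{+t+1}(G)$.

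For the converse direction, I would take a sequence $\sigma_1,\ldots,\sigma_{t+1}$ of extend operations yielding $H$ from $G$, and inspect only the last operation $\sigma_{t+1}$. The key observation, read directly off Definition~\ref{def:extend}, is that the second and third bullets always introduce a boundary vertex whose boundary index equals the current $t+1$; the second bullet sets $\bound_{H'}(v):=t+1$ on an existing interior vertex, and the third bullet creates a fresh vertex $v$ with $\bound_{H'}(v):=t+1$. Because $H \in \lbgraphs$, its boundary function only uses labels in $[t]$ and hence no vertex of $H$ has boundary index $t+1$. Therefore $\sigma_{t+1}$ cannot have been an application of the second or third bullet and must have been the no-op first bullet. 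But then the graph produced after $\sigma_1,\ldots,\sigma_t$ is structurally identical to $H$, and differs only in being viewed as a $t$-boundaried rather than $(t+1)$-boundaried graph, which is precisely $H \in \lbgraphs$. This shows $H \in \extend_{+t}(G)$.

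I do not anticipate a serious obstacle here: the proof is essentially a bookkeeping argument on the definitions, and the only subtle point is to be explicit about the convention that every $t$-boundaried graph is a $(t+1)$-boundaried graph without an index-$(t+1)$ vertex. Once that is spelled out, both directions become immediate.
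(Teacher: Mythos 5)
Your proposal is correct and follows essentially the same argument as the paper: the forward direction just pads with the no-op first bullet, and the converse observes that since $H$ is $t$-boundaried and the second and third bullets necessarily define an index-$(t+1)$ boundary vertex, the final extend operation must be the no-op, so $H$ already lies in $\extend_{+t}(G)$.
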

\begin{proof}
If $H \in \extend_{+t}(G)$, it trivially follows that $H \in \extend_{+t+1}(G)$.

Suppose $H \in \extend_{+t+1}(G)$. Then by definition, there exists $H' \in \extend_{+t}(G)$ such that $H \in \extend_{+1}(H')$. Since the $H$ is a $t$-boundaried graph, it follows that the $(t+1)$'th boundary vertex is undefined and thus since $H \in \extend_{+1}(H')$, the only operation that leaves this vertex undefined is letting $H := H'$. It follows that $H \in \extend_{+t}(G)$.
\end{proof}

The following lemma gives a key property of the extend operation. It shows that if a labeled graph~$H$ with a boundary of size~$t'$ (possibly~$0$) is a minor of the labeled $t'$-boundaried graph~$\forget(G,t')$ obtained from~$G \in \lbgraphs$ by forgetting the boundary status of its last~$(t-t')$ boundary vertices, then there is an extension~$H' \in \extend_{+(t-t')}(H)$ that appears as a labeled $t$-boundaried minor in~$G$. To see that this extend operation is really necessary with our definition of boundaried minor models, consider Figure~\ref{fig:extend_needed}.

\begin{figure}
\centering
\includegraphics{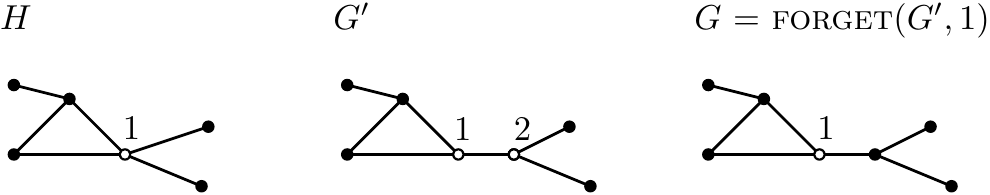}
\caption{This figure illustrates the reason for having an \extend operation, that can split vertices. We show graph $H$, and $2$-boundaried graph $G$. Here $H \leqlb G$, but $H$ cannot be obtained from graph $G'$ using minor operations (and adding a boundary vertex), since the edge between boundary vertices $1$ and $2$ cannot be contracted.}
\label{fig:extend_needed}
\end{figure}

\begin{lemma}\label{lem:extend:minor}
Let $H \in \graphs^X_{t'}$ for some integer~$t' \geq 0$ and let $G \in \lbgraphs$ with~$t \geq t'$ such that $H \leqlb \forget(G,t')$. Then there exists $H' \in \extend_{+(t-t')}(H) \subseteq \lbgraphs$ such that $H' \leqlb G$.
\end{lemma}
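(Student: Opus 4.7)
I plan to prove the lemma by induction on $t-t'$. In the base case $t=t'$ we have $\forget(G,t')=G$, so $H\leqlb G$ holds directly and $H':=H\in\extend_{+0}(H)$ is the desired graph. For the inductive step, apply the hypothesis with $t$ replaced by $t-1$: the relevant $(t-1)$-boundaried graph is $\forget(G,t-1)$, and since $\forget(\forget(G,t-1),t')=\forget(G,t')$, the hypothesis of the lemma is satisfied. This yields some $H''\in\extend_{+(t-1-t')}(H)$ with $H''\leqlb \forget(G,t-1)$. Because $\extend$ composes, $\extend_{+1}(\extend_{+(t-1-t')}(H))\subseteq \extend_{+(t-t')}(H)$, so it suffices to exhibit $H'\in\extend_{+1}(H'')$ with $H'\leqlb G$.

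Fix a labeled boundaried minor model $(\varphi,\psi)$ of $H''$ in $\forget(G,t-1)$. This is already a labeled minor model of $H''$ in $G$ when boundaries are ignored, and the only possible obstacle to its being a \emph{boundaried} minor model of $H''$ in $G$ is the vertex $v_t\in V(G)$ (if any) with $\bound_G(v_t)=t$, which lost its boundary status in $\forget(G,t-1)$. Three cases arise. If $v_t$ does not exist, or lies outside $\bigcup_u \varphi(u)$, then $\varphi$ already respects the $t$-boundaried structure of $G$ and I set $H':=H''$ using the ``leave the $(t+1)$-th boundary undefined'' option of $\extend_{+1}$. If $v_t\in\varphi(u)$ for some $u\notin\delta(H'')$, then the condition $\bound_{\forget(G,t-1)}(\varphi(u))=\emptyset$ combined with $v_t\in\varphi(u)$ forces $\varphi(u)\cap\delta(G)=\{v_t\}$; I then apply the ``promote'' option of $\extend_{+1}$, setting $\bound_{H'}(u):=t$, and the existing model witnesses $H'\leqlb G$.

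The remaining, and main, case is when $v_t\in\varphi(u)$ for some $u\in\delta(H'')$ with $\bound_{H''}(u)=j$; here $\varphi(u)$ contains both $v_j$ and $v_t$ as boundary vertices of $G$, so $u$ must be split in $H'$. Concretely, I take a spanning tree $T$ of the connected graph $G[\varphi(u)]$, find the unique $v_j$-to-$v_t$ path in $T$, and delete one edge $e=\{x,y\}$ on this path, splitting $T$ into subtrees $T_A\ni v_j$ and $T_B\ni v_t$ on disjoint vertex sets $A$ and $B$ with $A\cup B=\varphi(u)$. Then I invoke the ``split'' option of $\extend_{+1}$ on $u$: introduce a new vertex $v_{\mathrm{new}}$ with $\bound_{H'}(v_{\mathrm{new}}):=t$, add the edge $\{u,v_{\mathrm{new}}\}$, and for each $H''$-edge incident on $u$ as well as for each label in $\lab_{H''}(u)$, keep it on $u$ or transfer it to $v_{\mathrm{new}}$ according to which side of the split ($A$ or $B$) realizes it in the original model. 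Extend $\varphi$ to a model of $H'$ in $G$ by sending $u\mapsto A$ and $v_{\mathrm{new}}\mapsto B$, with $e$ realizing the new edge $\{u,v_{\mathrm{new}}\}$. The main obstacle is the boundary verification for this new model: the key equalities $\bound_G(A)=\{j\}$ and $\bound_G(B)=\{t\}$ follow from the original condition $\bound_{\forget(G,t-1)}(\varphi(u))=\{j\}$, which forces $\varphi(u)\cap\delta(G)\subseteq\{v_j,v_t\}$, combined with disjointness of $A,B$ and the placements $v_j\in A$, $v_t\in B$. Connectivity of the new branch sets is immediate from the spanning-tree construction, the label condition on $u$ and $v_{\mathrm{new}}$ is enforced by the redistribution step, and all remaining conditions on branch sets not involving $u$ transfer directly from $(\varphi,\psi)$.
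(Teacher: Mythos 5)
Your proof is correct and follows essentially the same strategy as the paper: induction on $t-t'$, the same three-way case analysis on the role of the $t$-th boundary vertex in the model, and invoking the split/uncontraction option of $\extend_{+1}$ in the main case. The only difference is cosmetic: to split the branch set $\varphi(u)$ into two connected pieces containing $v_j$ and $v_t$ respectively, you delete a spanning-tree edge on the $v_j$--$v_t$ path, whereas the paper deletes the vertex $v_t$ itself and assigns $\{v_t\}$ together with all resulting components not containing $v_j$ to the new branch set; both mechanisms yield the same partition properties and the remaining verification is identical.
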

\begin{proof}
We prove the statement by induction on~$t-t'$. If~$t = t'$ then the statement is trivial since~$\forget(G,t) = G$ and~$H \in \extend_{+0}(H)$, so the base case~$t-t'=0$ holds.

For the induction step, suppose that~$t > t'$. Recall that~$\forget(G, t-1)$ is the $(t-1)$-boundaried graph obtained by omitting the $t$'th vertex from the boundary, but keeping it in the graph. By induction, there is a graph~$\widehat{H} \in \extend_{+(t-1-t')}(H)$ such that~$\widehat{H} \leqlb \forget(G, t-1)$; let~$(\varphi, \psi)$ be a corresponding labeled boundaried minor model. If boundary vertex $t$ is not defined in $G$, thus if there exists no vertex $x\in\delta(G)$ with $\bound_G(x) = t$, then trivially $\forget(G, t-1) = G$ and $(\varphi, \psi)$ is a boundaried labeled minor model of $\widehat{H}$ in $G$. Thereby, we can choose $H' := \widehat{H}$.

 Otherwise, let~$x \in \delta(G)$ such that $\bound_G(x) = t$ be the $t$'th boundary vertex of~$G$, which does not belong to the boundary of~$\forget(G, t-1)$. We consider the role of~$x$.

(\textbf{Not in range}) If~$x$ is not in the range of~$\varphi$, then let~$H'$ be~$\widehat{H}$. Then~$H' \in \extend_{+1}(\widehat{H})$, and since~$\widehat{H} \in \extend_{+(t-1-t')}(H)$ it follows that~$H' \in \extend_{+(t-t')}(H)$. A minor model of~$H'$ in~$G$ is given by~$(\varphi,\psi)$.
	
(\textbf{Part of a non-boundary branch set}) If~$x \in \varphi(u)$ for some~$u \notin \delta(\widehat{H})$, that is,~$x$ is part of the branch set of a non-boundary vertex of~$\widehat{H}$, then obtain~$H'$ from~$\widehat{H}$ by setting~$\bound_{H'}(u) = t$ so that~$u$ becomes the $t$'th boundary vertex of~$H'$. Model~$(\varphi, \psi)$ shows that~$H' \leqlb G$.
	
(\textbf{Part of a boundary branch set}) Finally, if~$x \in \varphi(u)$ for some~$u \in \delta(\widehat{H})$, that is,~$x$ is part of the branch set of one of the at most~$t-1$ boundary vertices of~$\widehat{H}$, then the procedure is somewhat more delicate.
Let~$y \in \delta(G)$ such that~$\bound_{G}(y) = \bound_{\widehat{H}}(u)$.
Consider the connected components~$C_1, \ldots, C_k$ of~$G[\varphi(u) \setminus \{x\}]$; we may have~$k=1$. By definition of a boundaried minor model, we know that~$y \in  \varphi(u)$. Since the boundary vertices of~$G$ are all distinct, we have~$y \neq x$. Hence~$y$ is contained in one of the connected components of~$G[\varphi(u) \setminus \{x\}]$. Assume without loss of generality that~$y \in C_1$. We obtain~$H'$ from~$\widehat{H}$ by the reverse of an edge contraction operation, as follows.
\begin{itemize}
	\item Initialize~$H'$ as a copy of~$\widehat{H}$, into which a new boundary vertex~$v$ with $\bound_{H'}(v) = t$  is inserted.
	\item Add the edge~$\{u, v\}$.
	\item For each edge~$\{u, w\} \in E(\widehat{H})$, we may re-attach it to~$v$ instead of~$u$, based on the following distinction.
	By Definition~\ref{def:minor:model:plain}, we know that~$\psi(\{u, w\})$ has exactly one endpoint~$b$ in~$\varphi(u)$.
	\begin{itemize}
		\item If~$w' \notin C_1$, then replace the edge~$\{u, w\}$ in~$\widehat{H}$ by the edge~$\{v, w\}$ in~$H'$.
		\item If~$w' \in C_1$, then the edge is preserved in~$H'$.
	\end{itemize}
	\item The labelsets of all vertices of~$H'$ match the labelsets of the corresponding vertices in~$H$, with the exception of possibly~$u$. In addition, we define a labelset for~$v$. For each label~$\ell$ on the labelset of~$u$, if~$C_1$ contains a vertex carrying that label, then add~$\ell$ to the labelset of~$u$; otherwise, one of the vertices in~$\{\bound_G(t)\} \cup \bigcup_{j=2}^k C_j$ carries label~$\ell$ and we add~$\ell$ to the labelset of~$v$.
\end{itemize}
This concludes the construction of~$H'$. We have~$H' \in \extend_{+((t-1-t')+1)}(H') = \extend_{+(t-t')}(H')$ since it was obtained from~$\widehat{H}$ by the reverse of an edge contraction; note that contracting the edge~$\{u, v\}$ recovers~$\widehat{H}$. It remains to prove that~$H' \leqlb G$. Towards that end, we build a model~$(\varphi',\psi')$ of~$H'$ in~$G$, as follows.
\begin{itemize}
	\item For~$w \in V(H') \setminus \{u, v\}$, set~$\varphi'(w) := \varphi(w)$.
	\item Set~$\varphi'(u) := C_1$.
	\item Set~$\varphi'(v) := \{x\} \cup \bigcup _{j=2}^k C_j$.
	\item For each edge~$\{w,w'\} \in E(H')$ that is not incident on~$v$, set~$\psi'(\{w,w'\}) := \psi(\{w,w'\})$.
	\item For each edge~$\{v, w\} \in E(H') \setminus \{v,u\}$, set~$\psi'(\{v, w\}) := \psi(\{u, w\})$.
	\item Set~$\psi'(\{u, v\}) := \{x, x'\}$ for some vertex~$x' \in C_1$ that is adjacent in~$G$ to~$x$.
\end{itemize}
To see that a vertex~$x'$ as required in the last step exists, observe that vertex~$x$ is adjacent to at least one vertex of~$C_i$ for all~$i \in [k]$. This follows from the fact that~$G[\varphi(u)]$ is connected by definition, and~$C_1, \ldots, C_k$ are the components that result from that connected graph by removing vertex~$x$. The same argument shows that~$G[\varphi'(v)]$ is connected. We trivially satisfy the requirement that the branch set of~$v$ contains~$x$, and have~$y \in C_1 = \varphi'(u)$. Using these facts, it is easy to verify that~$(\varphi',\psi')$ is a valid labeled boundaried minor model of~$H$ in~$G$, which concludes the proof.
\end{proof}

\subsection{Summing pieces of a boundaried graph}
We will now give some useful properties of boundaried graphs.
Boundaried graphs can be summed together using the following notion.

\begin{definition}[$\oplus$] \label{def:oplus}
Let~$G_1, G_2 \in \lbgraphs$. Then~$G_1 \oplus G_2$ is defined as the $t$-boundaried graph obtained from the disjoint union of~$G_1$ and~$G_2$ by identifying  vertices $u \in V(G_1)$ and $v \in V(G_2)$ whenever $\bound_{G_1}(u) = \bound_{G_2}(v)$. Let the labelset of the new vertex be $L_{G_1}(u) \cup L_{G_2}(v)$. We stress that no parallel edges are introduced in this step. 

For a set~$P = \{p_1, \ldots, p_k\}$ of $t$-boundaried graphs, define~$\bigoplus_{p \in P} p$ as~$p_1 \oplus p_2 \oplus \ldots \oplus p_k$. For~$P = \emptyset$, we define~$\bigoplus_{p \in P} p$ as the empty graph.
\end{definition}
The following notion will be instrumental to analyze how a large $t$-boundaried graph can be formed by gluing together \emph{pieces} of smaller $t$-boundaried graphs. Refer to Figure~\ref{fig:pieces} for an illustration.

\begin{figure}
\centering
\includegraphics{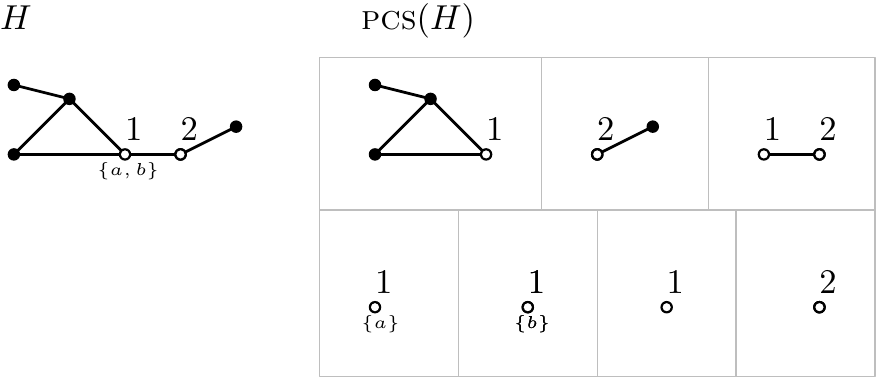}
\caption{This figure shows a $2$-boundaried $\{a,b\}$-labeled graph $H$, together with its set of pieces $\pieces(H)$.}
\label{fig:pieces}
\end{figure}

\begin{definition}[\pieces] \label{def:pcs}
Let $G \in \lbgraphs$. Let~$\pieces(G)$ (for \emph{pieces}) contain the following $t$-boundaried graphs.
\begin{itemize}
\item For all  vertices in $\delta(G)$, $\pieces(G)$ contains a graph $P$ consisting of a single vertex $u$ with $L_P(u) := \emptyset$ and $\bound_P(u):= \bound_G(v)$.
\item For all $v \in \delta(G)$, for all $x \in L_G(v)$, $\pieces(G)$ contains a graph $P$ consisting of a single vertex $u$ with $L_P(u) := \{x\}$ and $\bound_P(u):= \bound_G(v)$.
\item For every edge $\{u,v\} \in E(G)$ with $u,v\in\delta(G)$, $\pieces(G)$ contains a graph $P$ with vertices $x$ and $y$ and edge $\{x,y\}$. Define $\bound_P(x) := \bound_P(u)$, $\bound_P(y):= \bound_P(v)$, and $L_P(u) := L_P(v) = \emptyset$.
\item For every connected component $C$ of $G - \delta(G)$, define $C'$ as the set $C$ together with all vertices in $\delta(G)$ that are adjacent to $C$. Let $\pieces(G)$ contain a graph $P$ that is equal to $G[C']$ after removing all edges between boundary vertices. Remove all labels from the vertices in $\delta(P)$.
\end{itemize}
For unlabeled graphs,~$\pieces(G)$ is defined analogously by treating it as a $\emptyset$-labeled graph.
\end{definition}
From these definitions, it follows that~$\bigoplus_{p \in \pieces(G)} p = G$ for all~$G \in \lbgraphs$. The following lemma shows the key property of this definition of \pieces.

\begin{lemma}\label{lem:split_in_pieces}
Let $H$, $G_1$ and $G_2$ be $t$-boundaried $X$-labeled graphs.
\begin{align*}
H \leqlb G_1 \oplus G_2 \Leftrightarrow \exists P \subseteq &\pieces(H):
  \bigoplus_{p \in P} p \leqlb G_1 \text{ and }
  \bigoplus_{p \in \pieces(H)\setminus P} p \leqlb G_2.
 \end{align*}
\end{lemma}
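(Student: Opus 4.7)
The plan is to prove both implications, exploiting the identity $\bigoplus_{p \in \pieces(H)} p = H$ and the fact that $\oplus$ glues along matching boundary indices.

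For the $(\Leftarrow)$ direction, let $H_1 := \bigoplus_{p \in P} p$ and $H_2 := \bigoplus_{p \in \pieces(H) \setminus P} p$, so $H_1 \oplus H_2 = H$. Given labeled boundaried minor models $(\varphi_1, \psi_1)$ of $H_1$ in $G_1$ and $(\varphi_2, \psi_2)$ of $H_2$ in $G_2$, I would combine them by setting $\varphi(v) := \varphi_1(v) \cup \varphi_2(v)$ for each $v \in V(H)$ (interpreting $\varphi_i(v)$ as empty when $v \notin V(H_i)$), and defining $\psi$ edge-by-edge using whichever side supplies the edge. For a boundary vertex $v \in \delta(H)$ appearing in both $H_1$ and $H_2$, the union $\varphi_1(v) \cup \varphi_2(v)$ is connected in $G_1 \oplus G_2$ because both parts contain the common boundary vertex of $G_1 \oplus G_2$ at index $\bound_H(v)$. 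Disjointness, edge-mapping, boundary, and label conditions all follow straightforwardly from those of the two input models.

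For the $(\Rightarrow)$ direction, fix a labeled boundaried minor model $(\varphi, \psi)$ of $H$ in $G := G_1 \oplus G_2$. The key structural observation is that $E(G) = E(G_1) \cup E(G_2)$ contains no edge between $V(G_1) \setminus \delta(G_1)$ and $V(G_2) \setminus \delta(G_2)$. Hence, for each $u \in V(H) \setminus \delta(H)$, the connected set $\varphi(u)$ (which avoids $\delta(G)$) lies entirely in one of $V(G_1) \setminus \delta(G_1)$ or $V(G_2) \setminus \delta(G_2)$; propagating along $\psi$-edges inside a connected component $K$ of $H - \delta(H)$ forces all branch sets of vertices of $K$ to share a common side $c(K) \in \{1,2\}$. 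Each edge image $\psi(e)$ lies in exactly one $E(G_i)$, defining $c(e)$. I would then place into $P$: the component piece of $K$ whenever $c(K) = 1$; the edge piece of $e$ between two boundary vertices whenever $c(e) = 1$; the single-vertex piece for $v \in \delta(H)$ with empty labelset whenever the $G_1$-boundary vertex of index $\bound_H(v)$ exists; and the single-vertex piece for $v$ with label $\ell$ whenever some $w \in \varphi(v) \cap V(G_1)$ carries $\ell$ in $\lab_{G_1}$. Verification of $\bigoplus_{p \in P} p \leqlb G_1$ then proceeds by restricting $(\varphi, \psi)$: set $\varphi'(u) := \varphi(u)$ for non-boundary $u$ in a $P$-piece (automatically in $V(G_1)$) and $\varphi'(v) := \varphi(v) \cap V(G_1)$ for boundary vertices appearing in $P$-pieces, with edge mappings inherited from $\psi$.

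The main obstacle is verifying that this restricted $(\varphi', \psi')$ is a valid boundaried labeled minor model; specifically, that $\varphi(v) \cap V(G_1)$ is nonempty, connected in $G_1$, and contains the boundary vertex of $G_1$ at index $\bound_H(v)$. The critical sub-argument is that since $\varphi(v)$ contains exactly one boundary vertex $x$ of $G$, any walk inside $G[\varphi(v)]$ visiting $V(G_2) \setminus V(G_1)$ must enter and exit through $x$; short-circuiting such excursions produces a walk inside $G_1[\varphi(v) \cap V(G_1)]$. The same observation shows that $\varphi(v) \cap V(G_1) \neq \emptyset$ forces $x \in V(G_1)$, which then supplies the required $G_1$-boundary vertex. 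A secondary technical point is the consistency of label placement for single-vertex-with-label pieces, handled using $\lab_G(w) = \lab_{G_1}(w) \cup \lab_{G_2}(w)$ on shared boundary vertices and $\lab_G(w) = \lab_{G_i}(w)$ on exclusive vertices. The symmetric construction using the complementary side yields $\bigoplus_{p \in \pieces(H) \setminus P} p \leqlb G_2$.
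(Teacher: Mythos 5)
Your proof is correct and follows essentially the same approach as the paper's: for $(\Leftarrow)$ it glues the two models along the shared boundary branch sets, and for $(\Rightarrow)$ it classifies each piece by which side of $G_1 \oplus G_2$ realizes it (components of $H - \delta(H)$, boundary edges, isolated boundary vertices, single-label boundary vertices) and then restricts $(\varphi,\psi)$ to each $G_i$, using the uniqueness of the boundary vertex in $\varphi(v)$ to argue connectedness. The one small inaccuracy is the claim that $\psi(e)$ lies in \emph{exactly} one $E(G_i)$ for a boundary edge: since Definition~\ref{def:oplus} identifies duplicate boundary edges without introducing parallel copies, it may lie in both, in which case either assignment works; conversely, your criterion for label-carrying single-vertex pieces (testing $\lab_{G_1}$ rather than $\lab_G$ on a shared boundary vertex) is, if anything, slightly more careful than the paper's.
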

\begin{proof}
Suppose
$\exists P \subseteq \pieces(H):
  \bigoplus_{p \in P} p \leqlb G_1 \text{ and}
  \bigoplus_{p \in \pieces(H)\setminus P} p \leqlb G_2$. Let $H_1 := \bigoplus_{p \in P} p$ and let $H_2 := \bigoplus_{p \notin P} p$.

  We start by showing $H_1 \oplus H_2 \leqlb G_1 \oplus G_2$ by defining a minor model $\varphi$ of $H_1 \oplus H_2$ in $G_1 \oplus G_2$. Since $H = H_1 \oplus H_2$ it will follow that $H \leqlb G_1 \oplus G_2$.
  Let $\varphi_1$ be a minor model of $H_1$ in $G_1$ and let $\varphi_2$ be a minor model of $H_2$ in $G_2$. For any non-boundary vertex $v \in V(H_1 \oplus H_2)$ that was originally in $H_x$ for $x\in \{1,2\}$, it is easy to see that we may define $\varphi(v) := \varphi_x(v)$. Similarly, for any boundary vertex $v \in \delta(H)$ that only occurs in $H_x$ for $x \in \{1,2\}$, we define $\varphi(v) := \varphi_x(v)$.
  For any other boundary vertex $v \in \delta(H)$, let $v_1 \in \delta(H_1)$ and $v_2 \in \delta(H_2)$ such that $\bound_{H}(v) = \bound_{H_1}(v_1) = \bound_{H_2}(v_2)$.
  Define   $\varphi(v) : = \varphi_1(v_1) \cup \varphi_2(v_2)$. Verify that this branch set remains connected, since branch set $\varphi(v_x)$  for $x\in[2]$ contains the vertex in $G_x$ with boundary label $b_{H_1}(v_1)=b_{H_2}(v_2)$ by definition, and in $G$ these vertices were identified.

Suppose $H \leqlb G_1\oplus G_2$. Find a minor model $\varphi$ with edge mapping $\psi$ of $H$ in $G_1 \oplus G_2$. 
We now show how to define $P$. Let $p \in \pieces(H)$.
We consider four options:
\begin{itemize}
\item $p$ consists of an unlabeled, isolated boundary vertex $v\in\delta(H)$. If $\varphi(v)\cap \delta(G_1)\neq \emptyset$, add $p$ to $G_1$.
\item $p$ consists of one vertex $v \in \delta(H)$ with $\lab_H(v) = \{x\}$ for some $x\in X$. Let $u \in \varphi(v)$ be a vertex with $x \in \lab_G(u)$. If $u \in V(G_1)$, add $p$ to $P$.
\item $p$ consists of boundary vertices $u$ and $v$ connected by edge $\{u,v\}$. Let $\{u',v'\} := \psi(\{u,v\})$. If $u',v' \in V(G_1)$ and $\{u',v'\} \in E(G_1)$, add $p$ to $P$. Else, $\{u',v'\} \in E(G_2)$ and we do not add $p$ to $P$.
\item $p$ contains some non-boundary vertex  $v \notin \delta(H)$. Note that $\varphi(v)$ is either completely contained in $G_1$ or in $G_2$. If $\varphi(v) \subseteq V(G_1)$, add $p$ to $P$. Else, do nothing.
\end{itemize}
Let $H_1 := \bigoplus_{p \in P} p$ and let $H_2 := \bigoplus_{p \notin P} p$, we show how to find boundaried minor models of $H_i$ in $G_i$ for $i \in \{1,2\}$. We only show that $H_1 \leqb G_1$, the proof that $H_2 \leqb G_2$ is symmetric. Let $\varphi_1$ be defined as $\varphi_1(v) = \varphi(v) \cap V(G_1)$ for $v \in V(H_1)$. We show that $\varphi_1$ is a valid boundaried minor model of $H_1$.
\begin{itemize}
\item Connected -- Let $v \in V(H_1)$. If $v$ is not a boundary vertex, $\varphi_1(v) =\varphi(v)$ and all properties follow. Else, if $\varphi_1(v)$ is not connected, there are at least two parts in $G_1$, that were connected in $G$. But this is not possible, since then $\varphi(v)$ contained more than one boundary vertex.
\item Non-intersecting -- This follows immediately from the definition.
\item Edges -- Let $\{u,v\} \in E(H_1)$. If $u,v \notin \delta(H)$,  the result follows. Suppose $u,v \in \delta(H)$, then this edge exists because the part $p$ consisting of the two boundary vertices $u$ and $v$ connected by an edge (namely, $\{u,v\}$) was added to $P$. Thereby, $\psi(\{u,v\}) = \{u',v'\}$ with $\{u',v'\} \in E(G_1)$ and by $u',v' \in V(G_1)$ and the definition of $\varphi_1$ it follows that $u' \in \varphi_1(u)$ and $v' \in \varphi_1(v)$ as required.

    Suppose $u \notin \delta(H)$ and $v \in \delta(H)$. Consider $\psi(\{u,v\}) = \{u',v'\}$. Clearly, $\{u',v'\}$ is an edge in $G$. It remains to show that $u',v' \in V(G_1)$, then it follows by definition that $u'\in\varphi_1(u)$ and $v' \in \varphi_1(v)$. Clearly, $u' \in V(G_1)$ as $\varphi_1(u) \subseteq V(G_1)$ since $u$ is not a boundary vertex. Furthermore, $v' \in V(G_1)$ as $\{u',v'\}$ is an edge, and no edges go from $V(G_1)$ to $G \setminus V(G_1)$. Thereby, $u'$ and $v'$ are in  $V(G_1)$ as required.
\item Boundary -- Let $u \in \delta(H_1)$. Let $u' \in G$ be the vertex with the same boundary label, by definition $u' \in \varphi(u)$. Since $u \in \delta(H_1)$, it follows from the choice of $H_1$ that $u' \in V(G_1)$, and thereby $u' \in \varphi_1(u)$. The other properties follow immediately from $\varphi_1(u) \subseteq \varphi(u)$ for all $u \in V(H_1)$.
\item Labels -- Let $u \in V(H)$ such that $x \in \lab_H(u)$. If $u \notin \delta(H_1)$ the result follows. Suppose $u \in \delta(H_1)$. By point three of the construction, there exists $u' \in \varphi(u)$ such that $\lab_G(u) = \{x\}$ and $u' \in V(G_1)$, thereby $u' \in \varphi_1(u)$.\qedhere
\end{itemize}
\end{proof}

In combination with Lemma \ref{lem:split_in_pieces}, the above lemma shows that if we are given graphs $H \in \lgraphs$ and $G \in \lbgraphs$ such that $G = G_1 \oplus G_2$ with $H \leql \forget(G)$, then there exists $H' \in \extend_{+t}(G)$ such that $H' \leqlb G$. Furthermore, there exists $P \subseteq \pieces(H')$ such that $\bigoplus_{p\in P} \leqlb G_1$ and $\bigoplus_{p \notin P } p\leqlb G_2$. An example is shown in Figure \ref{fig:extend_needed_2}, where we see that the extend of $H$ is really necessary.

\begin{figure}
\centering
\includegraphics{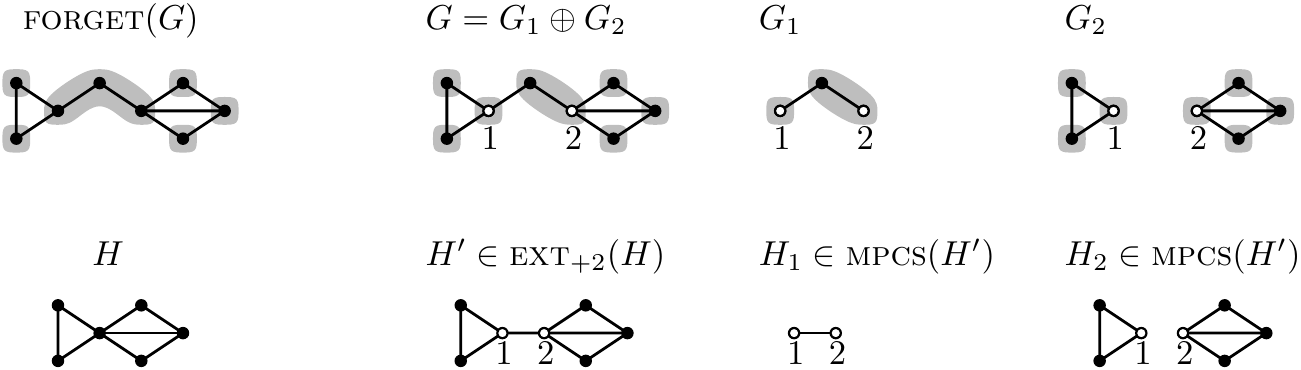}
\caption{
This figure illustrates how the \extend and \multipieces operations combine. It shows $G = G_1 \oplus G_2$ and $H$, the choice of $H' \in \extend_{+2}(H)$ such that $H' \leqlb G$, and the choice of $H_1, H_2 \in \multipieces_{+2}(H)$ such that $H_1 \leqlb G_1$ and $H_2 \leqlb G_2$. Minor models are shown for all the claimed minor relations. 
}
\label{fig:extend_needed_2}
\end{figure}

\begin{definition}[\multipieces] \label{def:mpcs}
Let $G \in \lbgraphs$. For this definition, let two graphs be equal if they are isomorphic, as defined in Definition \ref{def:isomorphism}. Define the \emph{multi pieces}, abbreviated as \multipieces as
\[\multipieces(G) := \{\bigoplus_{p \in P} p  \mid  P \subseteq \pieces(G) \wedge P \neq \emptyset\}.\]
For a set of graphs $\Q \subseteq \lbgraphs$ define
\[\multipieces(\Q) := \bigcup_{Q \in \Q} \multipieces(Q).\]
Let $\multipieces_{+t}(\Q) := \multipieces(\extend_{+t}(\Q))$.
\end{definition}
By these definitions, the set $\pieces(\Q)$ may contain graphs that are isomorphic to each other, but the set $\multipieces(\Q)$ can not.

The following observation follows immediately from the definition of \multipieces.

\begin{observation} \label{obs:alternative-definition-mpcs}
Let $G,H \in \lbgraphs$. Then $H \in \multipieces(G)$ if and only if $H$ can be obtained from $G$ by a sequence of the following operations.
\begin{enumerate}
\item \label{mpcs:remove-component} Remove a connected component of $G - \delta(G)$ with its edges to the boundary.
\item \label{mpcs:remove-edge} Remove an edge between two boundary vertices.
\item \label{mpcs:remove-label} Remove a label from the labelset of a boundary vertex.
\item \label{mpcs:remove-boundary-vertex} Remove an isolated boundary vertex.
\end{enumerate}
\end{observation}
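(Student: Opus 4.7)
The plan is to prove each direction by a case-by-case matching between operations in the list and omissions of individual pieces from $\pieces(G)$. The starting observation is that $\bigoplus_{p \in \pieces(G)} p = G$, which is immediate from Definition \ref{def:pcs}: isolated boundary vertex pieces supply the boundary vertices of $G$, labeled singleton pieces supply the labels on boundary vertices, edge pieces supply the edges between boundary vertices, and connected-component pieces supply the non-boundary vertices together with their attachments; the $\oplus$ operation reassembles them by identifying boundary vertices of the same index.

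For the forward direction, given $H = \bigoplus_{p \in P} p$ with nonempty $P \subseteq \pieces(G)$, I plan to transform $G$ into $H$ by performing operations corresponding to the omitted pieces in $\pieces(G) \setminus P$, in the following order: first operation 1 for every omitted connected-component piece, then operation 2 for every omitted edge piece, then operation 3 for every omitted labeled singleton piece, and finally operation 4 for every boundary vertex $v$ such that no piece in $P$ carries the index $\bound_G(v)$. The ordering matters only for the last step: after the earlier rounds, such a vertex $v$ is guaranteed to carry no labels, be incident to no edges, and have no interior neighbours, so operation 4 is applicable.

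For the backward direction, I plan to maintain the invariant that after processing a prefix of the operation sequence, the current graph equals $\bigoplus_{p \in P'} p$ for some $P' \subseteq \pieces(G)$, initially $P' = \pieces(G)$. Each operation then removes exactly one piece from $P'$: operation 1 removes a connected-component piece (any non-boundary vertex belongs to exactly one piece, since pieces are glued only at the boundary), operations 2 and 3 remove the corresponding edge and labeled singleton pieces, and operation 4 removes an isolated boundary vertex piece. The main compatibility point, which will be the most delicate to verify, is that operation 4 is applicable precisely when the isolated boundary vertex piece for $v$ is the last remaining piece in $P'$ carrying the index $\bound_G(v)$; this is guaranteed by the invariant together with the facts that $\oplus$ merges boundary vertices of the same index without creating parallel edges and that the boundary vertices of connected-component pieces carry empty labelsets by construction, so the decomposition into pieces is unambiguous throughout the process.
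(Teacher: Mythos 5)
The paper asserts this observation without proof, presenting it as immediate from Definition~\ref{def:mpcs}, so your explicit two-directional argument supplies exactly the bookkeeping the authors leave implicit; the piece-to-operation correspondence you build on top of $\bigoplus_{p\in\pieces(G)} p = G$ is the natural (and essentially unique) way to make it rigorous, and your forward-direction ordering correctly guarantees that a boundary vertex slated for removal is indeed isolated by the time operation~\ref{mpcs:remove-boundary-vertex} is reached.

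Two small things are worth tightening. First, your backward-direction claim that each operation removes exactly one piece from $P'$, and that operation~\ref{mpcs:remove-boundary-vertex} is applicable \emph{precisely} when the isolated-boundary-vertex piece is the last element of $P'$ carrying index $\bound_G(v)$, tacitly reads ``isolated boundary vertex'' as one with no incident edges \emph{and} an empty labelset. Under the weaker degree-zero reading, a removed boundary vertex may still carry labels, in which case operation~\ref{mpcs:remove-boundary-vertex} must simultaneously strip from $P'$ the isolated-vertex piece together with every labeled-singleton piece of that index; the invariant still holds, but more than one piece is dropped, so ``exactly one'' and ``precisely'' are slightly overstated unless you fix the stronger reading. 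Second, the operations applied exhaustively can yield the empty graph (for instance when $\delta(G)=\emptyset$ and every component is deleted), which the requirement $P\neq\emptyset$ excludes from $\multipieces(G)$; your invariant argument should note that it stops before $P'$ becomes empty. This degenerate case never arises where the paper invokes the observation, but the stated equivalence is asymmetric there and deserves a sentence.
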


The following lemma says that if a $(t+1)$-boundaried graph can be obtained by first extending a (ordinary) graph in~$\F$ for~$t$ times, then restricting the result to the sum of a subset of its pieces, and extending that sum once more to obtain~$H'$, then one can also obtain~$H'$ by first extending~$(t+1)$-times and then restricting to a subset of the pieces. Note that it applies to both labeled and unlabeled graphs, by using an empty label set for $X$.

\begin{lemma} \label{lem:mpcs:plusone}
Let~$\Q \subseteq \lgraphs$. If~$H \in \multipieces_{+t}(\Q)$ and~$H' \in \extend_{+1}(H)$, then~$H' \in \multipieces_{+(t+1)}(\Q)$.
\end{lemma}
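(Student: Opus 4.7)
The plan is to mirror the extend operation that produced $H' \in \extend_{+1}(H)$ by an analogous operation on the larger graph from which $H$ was assembled. By unpacking the hypothesis, there exist $Q \in \Q$, a graph $\widehat{Q} \in \extend_{+t}(Q)$, and a non-empty subset $P \subseteq \pieces(\widehat{Q})$ with $H = \bigoplus_{p \in P} p$. I aim to construct $\widehat{Q}' \in \extend_{+1}(\widehat{Q})$ together with a non-empty $P' \subseteq \pieces(\widehat{Q}')$ such that $H' = \bigoplus_{p \in P'} p$. Since $\widehat{Q}' \in \extend_{+1}(\extend_{+t}(Q)) \subseteq \extend_{+(t+1)}(Q)$ by definition, this will yield $H' \in \multipieces(\widehat{Q}') \subseteq \multipieces_{+(t+1)}(\Q)$ as required.

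I would then split into three cases according to which operation of Definition~\ref{def:extend} produces $H'$ from $H$. In the trivial case where $H'$ equals $H$ viewed as a $(t+1)$-boundaried graph with no vertex of boundary index $t+1$, I take $\widehat{Q}'$ similarly. Then $\delta(\widehat{Q}') = \delta(\widehat{Q})$ and $\pieces(\widehat{Q}') = \pieces(\widehat{Q})$, so $P' := P$ suffices.

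When $H'$ arises by promoting a non-boundary vertex $v \in V(H) \setminus \delta(H)$ to boundary index $t+1$, I observe that $v$ lies inside a unique connected component $C$ of $\widehat{Q} - \delta(\widehat{Q})$ that contributes a piece $p^* \in P$ containing $v$ (it cannot come from a singleton or edge piece because $v \notin \delta(H) \subseteq \delta(\widehat{Q})$). Promoting the same vertex $v$ in $\widehat{Q}$ yields $\widehat{Q}' \in \extend_{+1}(\widehat{Q})$ in which $C \setminus \{v\}$ decomposes into sub-components $C_1,\ldots,C_k$; each $C_i$, extended by its boundary neighbors in $\widehat{Q}'$ (which now include $v$) and stripped of boundary labels, becomes a new piece in $\pieces(\widehat{Q}')$. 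I assemble $P'$ from $P \setminus \{p^*\}$, the new component pieces arising from the $C_i$, and, for every label $\ell$ on $v$ that was present in $p^*$, the singleton piece of $\widehat{Q}'$ consisting of $v$ with labelset $\{\ell\}$. Finally, when $H'$ arises by splitting a boundary vertex $u \in \delta(H)$ into $u$ and a new boundary vertex $v$ of index $t+1$, with a chosen redistribution of the edges and labels on $u$, I apply a compatible split to $u$ in $\widehat{Q}$: each edge or label of $\widehat{Q}$ at $u$ that is expressed by some piece in $P$ and moved to $v$ in $H$ is routed to $v$ in $\widehat{Q}'$; edges and labels not expressed in $P$ may be assigned arbitrarily. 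The pieces of $P$ are then rewritten as pieces of $\widehat{Q}'$, where a component piece incident on $u$ is replaced by the (possibly split) component pieces of $\widehat{Q}'$ incident on $u$ and/or $v$, edges between boundary vertices are reattached to their new endpoints, and singleton label pieces at $v$ are added for labels that moved; this yields $P'$.

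The main obstacle will be the bookkeeping in the latter two cases: the pieces definition strips labels and inter-boundary edges from boundary vertices of component pieces, so these features must be recovered through singleton and edge pieces, and a single piece of $P$ may fragment into several pieces of $\widehat{Q}'$ under a promotion or split. Ensuring that $\bigoplus_{p \in P'} p$ reconstructs $H'$ exactly, in particular matching labelsets, boundary indices, and edge sets vertex by vertex, is the technical heart of the argument but amounts to a careful verification against Definition~\ref{def:pcs} rather than a substantive combinatorial difficulty.
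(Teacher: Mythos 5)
Your high-level strategy is the same as the paper's: unpack $H$ as a multipiece of some $\widehat{Q} \in \extend_{+t}(Q)$ with $Q \in \Q$, perform the mirrored extend operation on $\widehat{Q}$ to obtain $\widehat{Q}' \in \extend_{+1}(\widehat{Q}) \subseteq \extend_{+(t+1)}(\Q)$, and argue $H' \in \multipieces(\widehat{Q}')$. Where you diverge is in how that last containment is verified. You try to build an explicit subset $P' \subseteq \pieces(\widehat{Q}')$ with $\bigoplus_{p \in P'} p = H'$, whereas the paper sidesteps this bookkeeping by invoking Observation~\ref{obs:alternative-definition-mpcs}, the characterization of $\multipieces$ via a sequence of removal operations (remove a component, remove a boundary--boundary edge, remove a label, remove an isolated boundary vertex). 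With that characterization the paper only needs to check that each removal that turned $\widehat{Q}$ into $H$ can still be performed on $\widehat{Q}'$, which is considerably easier than tracking how a single piece fragments under an extend.

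The difference matters, because your sketch of the promotion case is incomplete as written and would not produce $H'$. Suppose the vertex $v$ being promoted to boundary index $t+1$ had an edge $\{v,w\}$ to some $w \in \delta(\widehat{Q})$. That edge lay inside the component piece $p^*$ (since $v$ was non-boundary in $\widehat{Q}$), so it is present in $H$ and hence in $H'$. But in $\widehat{Q}'$ it becomes an edge between two boundary vertices, so by Definition~\ref{def:pcs} it is removed from every component piece and survives only as a two-vertex edge piece. Your $P'$ as described never includes these edge pieces, so $\bigoplus_{p \in P'} p$ would be missing all of $v$'s edges to $\delta(\widehat{Q})$. A similar gap arises when the component $C$ containing $v$ is just $\{v\}$ and $v$ carries no label appearing in $p^*$: then no $C_i$ exists and no label piece is added, so $v$ never appears in $\bigoplus_{p \in P'} p$; you need the unlabeled singleton boundary-vertex piece for $v$. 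Both omissions are repairable, but they illustrate exactly the kind of delicate accounting that Observation~\ref{obs:alternative-definition-mpcs} lets the paper avoid.
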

\begin{proof}
By Definition~\ref{def:mpcs} we have~$H \in \multipieces(Q)$ for some~$Q \in \extend_{+t}(\Q)$. Consider the operation that extends~$H$ to~$H' \in \extend_{+1}(H)$. We will mimic this operation in~$Q$ to find some~$Q' \in \extend_{+(t+1)}(\Q)$ that forms a supergraph of~$H'$ respecting the boundary and labelsets; then we will argue that~$H'$ can be obtained as the sum of a subset of pieces of~$Q'$ and therefore~$H' \in \multipieces(Q') \subseteq \multipieces_{+(t+1)}(\Q)$.
\begin{itemize}
	\item If~$H'$ was simply a copy of $H$, then obtain let~$Q'$ equal~$Q$.
	\item If~$H'$ was obtained by setting~$\bound_{H'}(v) = t+1$ for some~$v \in V(H)$, then since~$H$ is the sum of some pieces of~$Q$ we have~$v \in V(Q)$. Obtain~$Q'$ from~$Q$ by setting~$\bound_{Q'}(v) = t+1$.
	\item Otherwise,~$H'$ was obtained by the reverse of an edge contraction operation on some boundary vertex~$\bound_{H}^{-1}(i)$ by adding a new boundary vertex~$v$ with $\bound_{H'}(v)=t+1$, adding the edge~$\{\bound_{H'}^{-1}(i), \bound_{H'}^{-1}(t+1)\}$, moving some of the labels~$L$ on~$\bound^{-1}_{H'}(i)$ to~$\bound_{H'}^{-1}(t+1)$, and finally changing some edges that were incident on~$\bound_{H'}^{-1}(i)$ to now be incident on~$\bound_{H'}^{-1}(t+1)$ instead; let~$T$ be the set of vertices that are adjacent to~$\bound_{H}^{-1}(i)$ in~$H$ but not adjacent to~$\bound_{H'}^{-1}(i)$ in~$H'$.
	
	We can apply the same operations on~$Q$. We initialize~$Q'$ as~$Q$, insert a new vertex~$v$ with $\bound_{Q'}(v) = t+1$, add the edge~$\{\bound_{Q'}^{-1}(i), \bound_{Q'}^{-1}(t+1)\}$, move the labels~$L$ from~$\bound_{Q'}^{-1}(i)$ to~$\bound_{Q'}^{-1}(t+1)$, remove the edges from~$T$ to~$\bound_{Q'}^{-1}(i)$ and make those vertices adjacent to~$\bound_{Q'}^{-1}(t+1)$ instead.
\end{itemize}
In all cases, we find~$Q' \in \extend_{+1}(Q)$ such that~$Q'$ is a supergraph of~$H'$ with the same boundary function and the same labelsets. 
By Observation \ref{obs:alternative-definition-mpcs} and $H \in\multipieces(Q)$, it follows $H$ can be obtained from $Q$ by a sequence of the described removal operations.
We show that~$H'$ can be obtained from~$Q'$ by a similar sequence of operations, implying $H'\in\multipieces(Q')$ by Observation \ref{obs:alternative-definition-mpcs}. This is easy to see for operations~\ref{mpcs:remove-edge}, \ref{mpcs:remove-label}, and~\ref{mpcs:remove-boundary-vertex}. If~$H$ was obtained from~$Q$ by (amongst others) removing the vertices~$C$ of a connected component of~$Q - \delta(Q)$, then no vertex of~$C$ was chosen as~$\bound_{H'}^{-1}(t+1)$ and we claim that~$C$ is also the vertex set of a connected component of~$Q' - \delta(Q')$. Observe that~$H$ being a component of~$Q - \delta(Q)$ implies that all neighbors of~$C$ in~$Q$ belong to the boundary~$\delta(Q)$. Therefore, the operation that took~$Q$ to~$Q'$ was splitting a boundary vertex (causing all neighbors of~$C$ to lie in the new boundary),  or turning a vertex not in~$C$ into a boundary vertex (which again does not affect the neighborhood of~$C$). Hence the neighborhood of~$C$ in~$Q'$ is a subset of the boundary, and all such~$C$ are connected components of~$Q' - \delta(Q')$. They therefore correspond to pieces of~$Q'$ that can be omitted from the sum of pieces if desired. It follows that~$H' \in \multipieces(Q')$, which concludes the proof.
\end{proof}

\begin{lemma}
\label{item:remove_boundary_vertex_reduce_mpcs} Let $G \in \lgraphs$ and $H \in \lbgraphs$. Then $H \in \multipieces_{+t+1}(G) \Leftrightarrow H \in \multipieces_{+t}(G)$.
\end{lemma}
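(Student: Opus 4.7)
The plan is to prove the two directions separately. For the backward direction, suppose $H \in \multipieces_{+t}(G)$, so $H \in \multipieces(Q)$ for some $Q \in \extend_{+t}(G)$. By applying the trivial option of Definition~\ref{def:extend} (letting $H'$ equal $H$) as one additional extend step, $Q$ also belongs to $\extend_{+t+1}(G)$, viewed as a $(t+1)$-boundaried graph whose $(t+1)$'th position is undefined. Since this does not change the set of pieces of $Q$, we have $H \in \multipieces(Q) \subseteq \multipieces_{+t+1}(G)$. Equivalently, this follows immediately from Lemma~\ref{lem:mpcs:plusone} with $H' = H \in \extend_{+1}(H)$.

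For the forward direction, the plan is to peel off the final extend step. Given $H \in \multipieces(Q)$ for some $Q \in \extend_{+t+1}(G)$, there exists $R \in \extend_{+t}(G)$ with $Q \in \extend_{+1}(R)$. If the extend step $R \to Q$ took the trivial option, then $Q = R$ and the conclusion is immediate. Otherwise it introduced a $(t+1)$'th boundary vertex $v^*$: either by turning a non-boundary vertex of $R$ into a boundary vertex (call this case (b)), or by splitting a boundary vertex $u$ of $R$ into $u$ and $v^*$ and redistributing some of $u$'s labels and incident edges to $v^*$ (case (c)). Since $H$ is $t$-boundaried, the subset $P_Q \subseteq \pieces(Q)$ satisfying $\bigoplus_{p \in P_Q} p = H$ cannot contain any piece touching $v^*$, as otherwise $v^*$ would survive in the sum.

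The plan is then to construct $P_R \subseteq \pieces(R)$ with $\bigoplus_{p \in P_R} p = H$ by a piece-by-piece translation of $P_Q$. Singleton pieces and label or edge pieces for boundary vertices other than $v^*$ have direct counterparts in $R$. In case (c), label and edge pieces on $u$ in $R$ whose label or edge was moved to $v^*$ in the split correspond to $v^*$-touching pieces in $Q$, hence are absent from $P_Q$ and are excluded from $P_R$; those that stayed on $u$ correspond directly. Component pieces of $R-\delta(R)$ correspond to component pieces of $Q-\delta(Q)$ sharing the same non-boundary vertex sets; the component containing $v^*$ (in case (b)) or any component with at least one incident edge to $u$ moved to $v^*$ (in case (c)) has all its $Q$-pieces touching $v^*$, and so is omitted from $P_R$.

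The main obstacle will be verifying that this translation indeed yields $\bigoplus_{p \in P_R} p = H$, especially in case (c). The bookkeeping is subtle: one must argue that whenever a component $C$ of $R-\delta(R)$ has any incident edge to $u$ that the split sent to $v^*$, the unique piece of $Q$ carrying $C$ has $v^*$ among its boundary vertices, hence is absent from $P_Q$, which forces $C$ and all its incident edges (including those that stayed on $u$) to be absent from $H$; consequently excluding $C$'s $R$-piece from $P_R$ exactly matches $H$. The corresponding checks for labels, edges between boundary vertices, and isolated singleton pieces are more direct. Once all these verifications are completed, $H \in \multipieces(R) \subseteq \multipieces_{+t}(G)$, finishing the proof.
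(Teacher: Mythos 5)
Your proposal is correct and follows essentially the same strategy as the paper's proof: the backward direction is immediate (and correctly reduced to Lemma~\ref{lem:mpcs:plusone}), and the forward direction peels off the last extend step $R \to Q$ and does the same three-way case split (trivial copy; promoting an existing vertex to boundary $t+1$; splitting a boundary vertex), using the crucial observation that because $H$ is only $t$-boundaried, no $v^*$-touching piece of $Q$ can belong to $P_Q$. The only difference is cosmetic: the paper mirrors the operations of Observation~\ref{obs:alternative-definition-mpcs} on $Q$, while you translate the piece-subset $P_Q$ into a piece-subset $P_R$ directly, which amounts to the same bookkeeping.
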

\begin{proof}
Suppose $H \in \multipieces_{+t}(G)$. Then there exists $G'$ such that $G' \in \extend_{+t}(G)$ and $H \in \multipieces(G')$. If is easy to see that $G' \in \extend_{+t+1}(G)$, and thereby $H \in \multipieces_{+t+1}(G)$.

\begin{figure}
\centering
\includegraphics{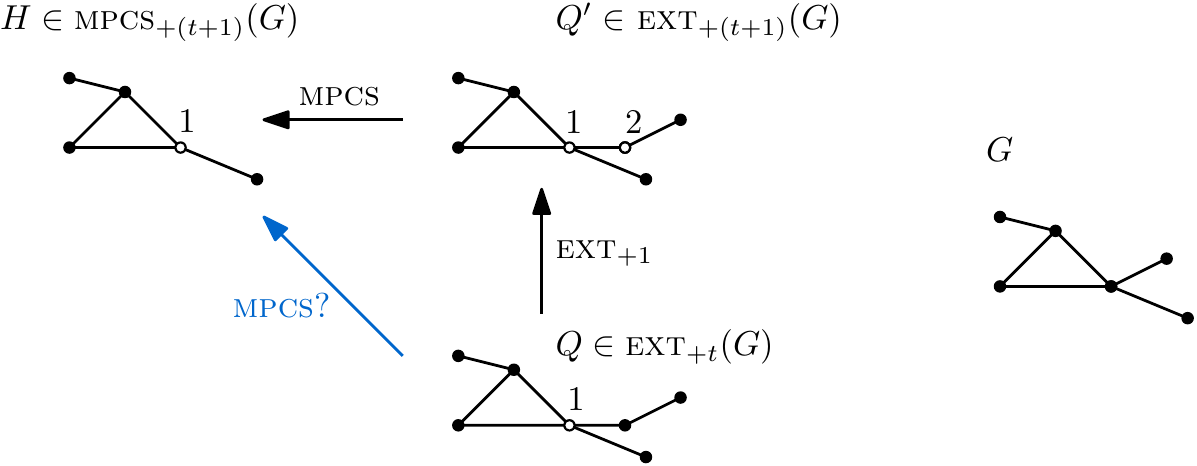}
\caption{A figure showing an example of $G$  and $H\in \multipieces_{+t+1}(G)$ for Lemma \ref{item:remove_boundary_vertex_reduce_mpcs} and the graphs $Q$ and $Q'$ used in the proof for $t=1$, together with the relations between them.}
\label{fig:mpcs-properties}
\end{figure}

For the opposite direction, suppose $H \in \multipieces_{+t+1}(G)$. By definition, we can find $Q' \in \extend_{+t+1}(G)$ such that $H \in \multipieces(Q')$. Now let $Q$ such that $Q \in \extend_{+t}(G)$ and $Q' \in \extend_{+1}(Q)$. See Figure \ref{fig:mpcs-properties} for an illustration. We will show that $H \in \multipieces(Q)$, to conclude the proof.  To show this, we will use the alternative definition for \multipieces, as given in Observation \ref{obs:alternative-definition-mpcs}.  We do a case distinction on the extend operation applied to obtain $Q'$ from $Q$.
\begin{itemize}
\item $Q' = Q$. Since $H \in \multipieces(Q')$, the result follows.
\item $Q'$ was obtained from $Q$ by setting $\bound_{Q'}(v) := t+1$ for an existing vertex $v$.  To obtain $H$ from $Q$, start by removing the connected component of $Q -\delta(Q)$ containing $v$ and its edges to the boundary. Since $H$ is a $t$-boundaried graph, $v$ is not a vertex of $H$ and all components of $Q' - \delta(Q')$ that contain a neighbor of $v$, were indeed removed from $H'$ in the process of obtaining $Q'$. Furthermore, $v$ itself and edges to other boundary vertices were indeed removed.

    If a connected component $S$ of $Q'-\delta(Q')$ was removed to obtain $H'$ that does not have connections to $v$, it is also a component of $Q - \delta(Q)$ and we remove it.

    Furthermore we forget any edges between boundary vertices, labels within the boundary, and (now isolated) boundary vertices as needed.
\item $Q'$ was obtained from $Q$ by splitting vertex $u\in \delta(Q)$ into $u$ and $v$ (such that $b_{Q'}(v) = t+1$) and redistributing labels and edges. Consider the steps used to obtain $H$ from $Q'$. Suppose component $S$ of $Q' - \delta(Q')$ was removed, then we also remove this component from $Q$ to obtain $H$. Note that $S$ is a component in $Q - \delta(Q)$, because $N_{Q'}(S) \subseteq \delta(Q')$ and all edges connecting to $v$ in $Q'$ connect to $u$ in $Q$, thus $N_Q(S) \subseteq \delta(Q)$.
    Note that the only components of $Q'-\delta(Q')$ that differ from those in $Q-\delta(Q)$ are the ones containing a neighbor of $v$ and these are always removed because $v$ is removed to obtain $H$, thus $v$ must then be isolated in $Q'$.

    If a boundary edge not incident on $v$ is removed, the same edge can be removed in $Q$. If a boundary edge $\{v,w\}$ is removed, then either $w=u$ and we do nothing, or $v\neq u$ and we remove edge $\{u,w\}$ from $Q$.

    Similarly, if a label is removed from a boundary vertex other than $v$, it can simply be removed from this vertex in $Q$ as well. If a label is removed from the labelset of $v$ in $Q'$, remove it from the labelset of $u$ in $Q$.

    If isolated boundary vertices (other than $v$) are removed, we can also removed them from $Q$ to obtain $H$.\qedhere
\end{itemize}
\end{proof}

The next lemma shows that when starting from a set of connected graphs~$\Q$, any multi piece that is obtained by extending a graph~$Q \in \Q$ is either attached, or consists of a real extension of~$Q$ (rather than a subset of its pieces).

\begin{lemma} \label{lem:mpcs:plust:attached:or:extend}
Let~$\Q \subseteq \lcongraphs$ and~$H \in \multipieces_{+t}(\Q)$. Then~$H \in \attlbgraphs$ or~$H \in \extend_{+t}(\Q)$.
\end{lemma}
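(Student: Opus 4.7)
The plan is to pick $Q \in \Q$ and $Q' \in \extend_{+t}(Q)$ witnessing $H \in \multipieces(Q')$ (such $Q'$ exists by Definition~\ref{def:mpcs}), and then split into the two cases $\delta(Q') = \emptyset$ (which will give $H \in \extend_{+t}(\Q)$) and $\delta(Q') \neq \emptyset$ (which will give $H \in \attlbgraphs$). Before the case split I would first establish that $Q'$ is connected: $Q$ is connected because $\Q \subseteq \lcongraphs$, and each of the three \extend steps preserves connectivity. Indeed, ``do nothing'' and ``designate an existing non-boundary vertex as boundary'' leave the edge set unchanged, while ``split a boundary vertex $u$'' inserts a new vertex $v$ together with the edge $\{u,v\}$ and possibly re-routes some incident edges/labels from $u$ to $v$, so $u$ and $v$ stay adjacent and all former neighbors of $u$ remain in the same component.

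For the case $\delta(Q') = \emptyset$, I would use the fact that extend operations never remove boundary vertices, so having an empty boundary after $t$ steps forces every step to be ``do nothing''. Thus $Q' = Q$ as a $t$-boundaried graph with empty boundary. Then $\pieces(Q')$ can only contain type-4 pieces (types 1--3 from Definition~\ref{def:pcs} all require a boundary vertex), namely one per connected component of $Q' - \delta(Q') = Q'$. Since $Q'$ is connected, there is exactly one such piece and it equals $Q'$, giving $\multipieces(Q') = \{Q'\}$ and hence $H = Q' = Q \in \extend_{+t}(\Q)$.

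For the case $\delta(Q') \neq \emptyset$, I would write $H = \bigoplus_{p \in P} p$ for some non-empty $P \subseteq \pieces(Q')$ and argue that every connected component of $H$ contains a vertex of $\delta(H)$. A non-boundary vertex of $H$ can only originate from a type-4 piece $P_C$ corresponding to a connected component $C$ of $Q' - \delta(Q')$, since the other three piece types consist exclusively of boundary vertices. Because $Q'$ is connected and has at least one boundary vertex, $C$ must have at least one edge to $\delta(Q')$, so $P_C$ contains a boundary vertex. Moreover $P_C$ is itself connected: by Definition~\ref{def:pcs} it is the induced subgraph on $C \cup (N_{Q'}(C) \cap \delta(Q'))$ with boundary-boundary edges stripped, and since $C$ is connected and each vertex of $N_{Q'}(C) \cap \delta(Q')$ is joined to $C$ by an edge of $Q'$ (which survives, not being boundary-boundary), connectivity is preserved. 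Therefore the non-boundary vertex is connected inside $H$ to some boundary vertex; components of $H$ made up solely of boundary vertices trivially contain a boundary vertex; hence $H \in \attlbgraphs$.

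The main subtlety I expect is verifying that boundary vertices contributed by a type-4 piece are genuinely boundary vertices of $H$: I must check, via Definition~\ref{def:pcs}, that $P_C$ retains the boundary indices of vertices in $N_{Q'}(C) \cap \delta(Q')$ (only their labels are stripped, not their boundary status), and, via Definition~\ref{def:oplus}, that $\oplus$ propagates these boundary indices into $\delta(H)$. Once this bookkeeping is in place, the dichotomy above closes the proof.
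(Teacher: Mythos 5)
Your proof is correct and follows essentially the same approach as the paper's: pick $Q' \in \extend_{+t}(Q)$ witnessing $H \in \multipieces(Q')$, then split on whether $\delta(Q')$ is empty (giving $Q' = Q$ and $H = Q$) or nonempty (giving $H \in \attlbgraphs$). You give a slightly more detailed argument for the attached case by tracking components of $H$ directly, where the paper simply observes that every piece of an attached $Q'$ contains a boundary vertex, but the logic is the same.
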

\begin{proof}
Since~$H \in \multipieces_{+t}(\Q)$, there is a graph~$Q \in \Q$ and a graph~$Q' \in \extend_{+t}(Q)$, such that~$H$ is obtained as~$\bigoplus_{p \in P} p$ for some subset~$P \subseteq \pieces(Q')$. Observe that if all connected components of~$Q'$ contain a boundary vertex, then the same is true for all pieces of~$Q'$ and hence for each multi piece of~$Q'$. So if~$Q' \in \attlbgraphs$, then~$H \in \attlbgraphs$ and we are done.


Suppose $Q' \notin \attlbgraphs$, it follows that $Q'$ was obtained from $Q$ by simply copying $Q$ in every step of \extend. Thereby, $Q' =Q$, $\delta(Q') = \emptyset$ and it is easy to see that $\multipieces(Q') = \{Q'\} =\{Q\}$. Thus, $H = Q$ and thereby $H \in \extend_{+t}(\Q)$.
\end{proof}

\subsection{Splitting and merging families of boundaried graphs}
The definitions in this section will be useful when analyzing how the task of breaking the models of a certain family~$\Pi$ of boundaried minors in a graph~$G_1 \oplus G_2$, can be divided into breaking certain minor models~$\Pi_1$ in~$G_1$ and breaking~$\Pi_2$ in~$G_2$.

\begin{definition}[\splitPi]
Let $\Pi \subseteq \bgraphs$. Define
$\splitPi_{\mathcal{F}}(\Pi)$ as the set of all pairs $(\Pi_1,\Pi_2)$ with~$\Pi_1, \Pi_2 \subseteq \bgraphs$ that can be obtained by the following procedure. Initialize $\Pi_1 := \Pi$ and $\Pi_2 := \Pi$. Continue as follows.
\begin{itemize}
\item For each $G \in \Pi$, consider each set $Y \subseteq \pieces(G)$, let $Y' := \pieces(G)\setminus Y$. Add $\bigoplus_{y \in Y} y$ to $\Pi_1$ or add $\bigoplus_{y \in Y'} y$ to $\Pi_2$ (or both).
\item For each $i \in \{1,2\}$ and $G \in \multipieces_{+t}(\mathcal{F})$, if some minor of $G$ is in $\Pi_i$, then  add $G$ to $\Pi_i$.\qedhere
\end{itemize}
\end{definition}

The subscript of $\splitPi_{\mathcal{F}}(\Pi)$ will be omitted when it is clear from the context.
Observe that by the above definition $\Pi_1 \supseteq \Pi$ and $\Pi_2 \supseteq \Pi$ for any $(\Pi_1,\Pi_2)\in\splitPi(\Pi)$.

\begin{lemma}\label{lem:splitPi}
Let $\Pi$ be a set of $t$-boundaried graphs and let $G_1$ and $G_2$ be $t$-boundaried graphs.  Let $G := G_1\oplus G_2$. Then $G$ does not have any graph in $\Pi$ as a minor if and only if there exist $(\Pi_1,\Pi_2) \in \splitPi(\Pi)$ such that $G_i$ has no graph in $\Pi_i$ as a minor for $i \in [2]$.
\end{lemma}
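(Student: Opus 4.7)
The plan is to establish the biconditional by using Lemma~\ref{lem:split_in_pieces} as the main tool, since that lemma precisely characterizes when a graph is a boundaried minor of $G_1 \oplus G_2$ in terms of distributing its pieces across the two sides.

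For the backward direction, I would argue by contraposition: assume some $H \in \Pi$ satisfies $H \leqlb G_1 \oplus G_2$. Then Lemma~\ref{lem:split_in_pieces} yields a subset $P \subseteq \pieces(H)$ with $\bigoplus_{p \in P} p \leqlb G_1$ and $\bigoplus_{p \in \pieces(H) \setminus P} p \leqlb G_2$. When the procedure generating $\splitPi(\Pi)$ considered the pair $(H, P)$ in its first step, it was required to add at least one of these two summed graphs to the corresponding $\Pi_i$. Either choice leads to some $\Pi_i$ containing a graph that is a minor of $G_i$, contradicting the assumption on $(\Pi_1,\Pi_2)$.

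For the forward direction, I would start from the assumption that $G$ has no minor in $\Pi$ and explicitly construct a suitable pair $(\Pi_1,\Pi_2)$ by running the procedure and committing to the following deterministic choice: for each $H \in \Pi$ and each $Y \subseteq \pieces(H)$, Lemma~\ref{lem:split_in_pieces} applied contrapositively to $H \not\leqlb G$ guarantees that at least one of $\bigoplus_{y \in Y} y \not\leqlb G_1$ or $\bigoplus_{y \in \pieces(H)\setminus Y} y \not\leqlb G_2$ holds, and I add the corresponding sum to the corresponding $\Pi_i$. After step~1 finishes, every newly added graph is by construction not a minor of the associated $G_i$, and the initial content $\Pi \subseteq \Pi_i$ is also avoided by $G_i$ because $G_i$ embeds as a boundaried labeled subgraph of $G$ under the identification $G = G_1 \oplus G_2$, so any minor of $G_i$ would give a minor of $G$.

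For the closure in step~2, I would argue that it preserves the avoid-minor property by transitivity of the labeled boundaried minor relation: if a graph $G'$ is added to $\Pi_i$ because some minor $M$ of $G'$ already lies in $\Pi_i$, then $G' \leqlb G_i$ would force $M \leqlb G_i$ by composing minor models, contradicting that $G_i$ avoids $M$. Iterating, the closure never introduces a new $\Pi_i$-minor of $G_i$, so the constructed pair $(\Pi_1,\Pi_2)$ satisfies the required property. The main obstacle is purely conceptual rather than technical: choosing the right deterministic rule in step~1 so that the resulting split is consistent with what $G_1$ and $G_2$ actually realize; once that rule is fixed, both directions reduce to direct applications of Lemma~\ref{lem:split_in_pieces} together with monotonicity of the minor relation, and the remainder is bookkeeping.
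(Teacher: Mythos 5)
Your proof is correct and follows essentially the same route as the paper: both directions are reduced to Lemma~\ref{lem:split_in_pieces}, with the backward direction by contradiction and the forward direction by an explicit construction of a pair $(\Pi_1,\Pi_2)$ that keeps only non-minors of the respective side. You are slightly more explicit than the paper in two places where the paper leaves things implicit — you justify that the initial content $\Pi\subseteq\Pi_i$ is already avoided by $G_i$ (because $G_i$ is a boundaried subgraph of $G_1\oplus G_2$ with the same boundary), and you argue via transitivity of the minor relation that the step-2 closure preserves the avoid-minor property; both of these are correct and worth spelling out.
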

\begin{proof}
Suppose there exist $(\Pi_1,\Pi_2) \in \splitPi(\Pi)$ such that $G_i$ has no graph in $\Pi_i$ as a minor for $i \in [2]$. Suppose for contradiction that there exists $H \in \Pi$ such that $H \leqb G$. By Lemma \ref{lem:split_in_pieces}, this implies that there exists $P \subseteq \pieces(H)$ such that $\bigoplus_{p \in P} p \leqb G_1$ and $\bigoplus_{p \notin P} p \leqb G_2$. But by the definition of \splitPi, either $\bigoplus_{p \in P} p \in \Pi_1$ or $\bigoplus_{p \notin P} p  \in \Pi_2$, which is a contradiction.

Suppose $G$ does not have any graph in $\Pi$ as a minor.  We show how to define $\Pi_1$ and $\Pi_2$. Initialize $\Pi_i := \Pi$ for $i \in [2]$. Consider every graph $H\in \Pi$, and all $P \subseteq \pieces(H)$. If $\bigoplus_{p \in P} p \not\leqb G_1$, add $\bigoplus_{p \in P} p$ to $\Pi_1$. If $\bigoplus_{p \notin P} p \not\leqb G_2$, add $\bigoplus_{p \notin P} p$ to $\Pi_2$. Finally, for any $H' \in \multipieces_{+t}(\mathcal{F})$ such that $\Pi_i$ contains a minor of $H'$, add $H'$ to $\Pi_i$ for $i \in [2]$.
%
%
It remains to show that $(\Pi_1,\Pi_2) \in \splitPi(\Pi)$, since  by definition, $G_1$ has no minors in $\Pi_1$ and $G_2$ has no minors in $\Pi_2$. We need to verify that for all $H \in \Pi$ and all $P \subseteq \pieces(H)$, either $\bigoplus_{p \in P} p \in \Pi_1$ or $\bigoplus_{p \notin P} p \in \Pi_2$. Suppose for contradiction that there exists such $P$ with
$\bigoplus_{p \in P} p \notin \Pi_1$ and $\bigoplus_{p \notin P} p \notin \Pi_2$. By the construction of $\Pi_1$ and $\Pi_2$, this implies that
$\bigoplus_{p \in P} p \leqb G_1$ and $\bigoplus_{p \notin P} p \leqb G_2$. By Lemma \ref{lem:split_in_pieces}, it follows that $\bigoplus_{p \in \pieces(H)} p \leqb G_1 \oplus G_2 = G$. Since  $\bigoplus_{p \in \pieces(H)}p = H$, it follows that $H \leqb G$. Since $H \in \Pi$, this contradicts that $G$ has no graph in $\Pi$ as a minor.
\end{proof}

The following operation acts as an inverse to~$\splitPi$.

\begin{definition}[$\odot$] \label{def:merge}
Let $\Pi_1, \Pi_2 \subseteq \bgraphs$. Let $\F \subseteq \graphs$, and define:
\begin{align*}
\Pi_1 \odot_\F \Pi_2 := \{G & \in \multipieces_{+t}(\F) \mid
\forall G_1, G_2 \in \bgraphs \colon G_1 \oplus G_2 = G \wedge E(G_1) \cap E(G_2) = \emptyset \Rightarrow \\
& \Pi_1 \text{ contains a minor of $G_1$ or } \Pi_2 \text{ contains a minor of $G_2$}
\}.
\end{align*}
We omit the subscript from~$\odot_\F$ when it is clear from the context.
\end{definition}

\begin{lemma}\label{lem:merge-vs-split}
Let $\Pi,\Pi_1,\Pi_2 \subseteq \multipieces_{+t}(\F)$ be sets of $t$-boundaried graphs, such that $(\Pi_1,\Pi_2) \in \splitPi(\Pi)$. Then $\Pi_1 \odot \Pi_2 \supseteq \Pi$.
\end{lemma}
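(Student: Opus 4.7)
The plan is to fix an arbitrary $G \in \Pi$ and verify the two requirements of membership in $\Pi_1 \odot \Pi_2$. Membership of $G$ in $\multipieces_{+t}(\F)$ is immediate from the hypothesis $\Pi \subseteq \multipieces_{+t}(\F)$, so the task reduces to showing that for every decomposition $G_1 \oplus G_2 = G$ with $E(G_1) \cap E(G_2) = \emptyset$, $\Pi_1$ contains a minor of $G_1$ or $\Pi_2$ contains a minor of $G_2$.

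Given such a decomposition, I translate it into a partition of $\pieces(G)$ into $Y$ and $Y' = \pieces(G)\setminus Y$ by going through the four piece types of Definition~\ref{def:pcs}. The key structural observation is that $\oplus$ only identifies boundary vertices, so: (i) each connected component of $G - \delta(G)$ originates entirely from $G_1$ or entirely from $G_2$ (together with its edges to the boundary), since every edge at a non-boundary vertex of $G_i$ lies in $E(G_i)$; (ii) each edge between two boundary vertices belongs to exactly one of $E(G_1), E(G_2)$ by edge-disjointness; (iii) each label on a boundary vertex lies in at least one of $\lab_{G_1}, \lab_{G_2}$ by the union rule of $\oplus$; and (iv) the trivial isolated-boundary-vertex pieces with empty labelset are shared by both. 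I assign each piece to $Y$ whenever its content sits in $G_1$, breaking label ties arbitrarily (say, in favor of $G_1$) and arbitrarily assigning the trivial isolated-boundary pieces.

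With this partition in hand, $\bigoplus_{y \in Y} y$ is a labeled $t$-boundaried subgraph of $G_1$ (in fact essentially equal to $G_1$, except possibly missing boundary labels that were also present in $G_2$), and symmetrically $\bigoplus_{y \in Y'} y$ is a labeled $t$-boundaried subgraph of $G_2$. In particular, both are labeled boundaried minors of the respective $G_i$. Now I invoke the construction of $\splitPi$: the pair $(G,Y)$ triggered the addition of $\bigoplus_{y \in Y} y$ to $\Pi_1$ or of $\bigoplus_{y \in Y'} y$ to $\Pi_2$. In the first case $\Pi_1$ contains a minor of $G_1$, in the second case $\Pi_2$ contains a minor of $G_2$, which is exactly what the definition of $\odot$ demands for this splitting.

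The only delicate point I expect is spelling out cleanly that the piece-partition genuinely realizes the given decomposition, i.e., that every non-boundary vertex and every edge of $G_1$ is captured by a piece placed in $Y$. This is a direct consequence of the observation that any edge incident to a non-boundary vertex of $G_i$ in $G$ must originate from $E(G_i)$; once this is unpacked, the rest of the argument is bookkeeping inside the case analysis over the four piece types.
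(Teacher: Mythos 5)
Your overall strategy parallels the paper's: both reduce, for each edge-disjoint decomposition $G = G_1 \oplus G_2$, to exhibiting a partition of $\pieces(G)$ into $Y$ and $Y'$ with $\bigoplus_{y \in Y} y$ a boundaried minor of $G_1$ and $\bigoplus_{y \in Y'} y$ a boundaried minor of $G_2$, and then invoking the $\splitPi$ construction. The paper obtains that partition by citing Lemma~\ref{lem:split_in_pieces} (applied to the identity model of $G$ in $G_1 \oplus G_2$), whereas you re-derive it inline and argue directly rather than by contradiction; that is a legitimate alternative route.

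However, your observation~(iv) is wrong, and the gap it opens is real. You assert that the isolated-boundary-vertex pieces are ``shared by both'' $G_1$ and $G_2$ and hence may be assigned to $Y$ or $Y'$ arbitrarily. But in this framework a $t$-boundaried graph need not realize every index in $[t]$, and $\oplus$ takes the \emph{union} of the two boundary sets, so a boundary index $i$ of $G$ may be present in only one of the two summands. If, say, $i \in \delta(G_2) \setminus \delta(G_1)$ and you place the single-vertex piece with boundary index $i$ into $Y$, then $\bigoplus_{y \in Y} y$ carries a boundary vertex with index $i$, and by the boundary condition of Definition~\ref{def:minor:model:boundaried} any boundaried minor model of it into $G_1$ would need a vertex of $G_1$ with boundary index $i$ in the corresponding branch set, which does not exist. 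So $\bigoplus_{y \in Y} y$ is not a minor of $G_1$, and the subsequent appeal to the $\splitPi$ construction no longer delivers the needed conclusion. The fix is to apply the same rule you already use for the other piece types: place the isolated-boundary piece for index $i$ into $Y$ exactly when $i \in \delta(G_1)$, and into $Y'$ otherwise (either choice is fine when $i$ is in both boundaries). This is precisely what the paper's proof of Lemma~\ref{lem:split_in_pieces} does for these pieces, which is why citing that lemma, as the paper's proof does, sidesteps the slip entirely.
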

\begin{proof}
Let $H \in \Pi$, suppose for contradiction that $H \notin \Pi_1\odot \Pi_2$. Find $H_1,H_2$ such that $\Pi_1$ contains no minor of $H_1$ and $\Pi_2$ contains no minor of $H_2$, and furthermore $H_1 \oplus H_2 = H$ and the two graphs share no edges. These must exist by the definition of $\odot$. By definition, $H \leqb H_1 \oplus H_2$ and by Lemma \ref{lem:split_in_pieces} there exists $P \subseteq \pieces(H)$ such that $H_1' := \bigoplus_{p \in P}p\leqb H_1$ and $H_2' := \bigoplus_{p \in \pieces(H)\setminus P}p\leqb H_2$ and $H_1' \oplus H_2' = H$. But since $(\Pi_1,\Pi_2) \in \splitPi(\Pi)$, this implies either $H_1'\in \Pi_1$ or $H_2' \in \Pi_2$ and thus $\Pi_1$ contains a minor of $H_1$ (namely $H_1'$) or $\Pi_2$ contains a minor of $H_2$, which is a contradiction. Thus, $H \in \Pi_1 \odot \Pi_2$.
\end{proof}

\begin{lemma}\label{lem:mergePi}
Let $G := G_1 \oplus G_2$ be a $t$-boundaried graph and let $\mathcal{F} \subset \graphs$. Let $\Pi_1,\Pi_2 \subseteq \multipieces_{+t}(\F)$. If $G_1$ has no $\Pi_1$-minors and $G_2$ has no $\Pi_2$-minors, then $G$ has no $\Pi_1 \odot \Pi_2$-minors.
\end{lemma}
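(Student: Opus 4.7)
The plan is to prove the contrapositive by contradiction: assume $G = G_1 \oplus G_2$ has some $H \in \Pi_1 \odot \Pi_2$ as a $t$-boundaried minor, and derive that either $G_1$ has a $\Pi_1$-minor or $G_2$ has a $\Pi_2$-minor. The strategy is to decompose $H$ along the ``seam'' given by the sum $G_1 \oplus G_2$, and then apply the definition of $\odot$ to this decomposition.

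First, I would apply Lemma~\ref{lem:split_in_pieces} to $H \leqb G_1 \oplus G_2$ to obtain a subset $P \subseteq \pieces(H)$ with $H_1 := \bigoplus_{p \in P} p \leqb G_1$ and $H_2 := \bigoplus_{p \in \pieces(H) \setminus P} p \leqb G_2$. By construction of $\pieces(H)$ (Definition~\ref{def:pcs}), every edge of $H$ appears in exactly one piece, so $H_1$ and $H_2$ are edge-disjoint $t$-boundaried graphs whose sum is precisely $H$, i.e.\ $H_1 \oplus H_2 = H$ and $E(H_1) \cap E(H_2) = \emptyset$.

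Now I invoke the definition of $\odot$ (Definition~\ref{def:merge}) on $H \in \Pi_1 \odot \Pi_2$ with witnesses $(H_1, H_2)$: this forces $\Pi_1$ to contain a minor of $H_1$, or $\Pi_2$ to contain a minor of $H_2$. In the first case, pick $H_1' \in \Pi_1$ with $H_1' \leqb H_1$; chaining with $H_1 \leqb G_1$ via transitivity of the boundaried minor relation yields $H_1' \leqb G_1$, contradicting the assumption that $G_1$ has no $\Pi_1$-minor. The second case is symmetric and contradicts the assumption on $G_2$. Hence no such $H$ exists, which is precisely the claim.

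There is essentially no obstacle here: the work was done in setting up $\splitPi$, $\odot$, and Lemma~\ref{lem:split_in_pieces}. The only minor subtlety is to verify that the $H_1, H_2$ produced by Lemma~\ref{lem:split_in_pieces} genuinely fit the universal quantifier in the definition of $\odot$, which follows immediately because the pieces of $H$ partition its edges and boundary-vertex presence is preserved under $\oplus$.
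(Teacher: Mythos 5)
Your proof is correct and follows essentially the same route as the paper: both invoke Lemma~\ref{lem:split_in_pieces} to obtain the decomposition $H_1 \oplus H_2 = H$ with $H_1 \leqb G_1$ and $H_2 \leqb G_2$, observe that $H_1$, $H_2$ are edge-disjoint, and then apply Definition~\ref{def:merge} to reach the contradiction. The small explicit checks you add (edges partition across pieces; transitivity of the boundaried minor relation) are exactly the facts the paper leaves implicit, so nothing is missing and nothing is different in substance.
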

\begin{proof}
Suppose for contradiction that $G$ has a boundaried $\Pi_1 \odot \Pi_2$-minor $H$. By Lemma \ref{lem:split_in_pieces}, there exist $P \subseteq \pieces(H)$ such that $H_1 := \bigoplus_{p \in P}p\leqlb G_1$ and $H_2 := \bigoplus_{p \in \pieces(H)\setminus P}p\leqlb G_2$. But then no minor of $H_1$ can be in $\Pi_1$, no minor of $H_2$ is in $\Pi_2$ and $H_1$ and $H_2$ share no edges, and $H_1 \oplus H_2 = H$. But this contradicts the fact that $H \in \Pi_1\odot \Pi_2$. Thereby, $G$ has no $\Pi_1 \odot \Pi_2$-minors.
\end{proof}

\begin{lemma}\label{lem:merge_associative}
Let $\piaone, \piatwo,\pib \subseteq \multipieces_{+t}(\F)$ be three sets of boundaried graphs, then
\[(\piaone \odot \piatwo) \odot \pib = \piaone \odot (\piatwo \odot \pib).\]
\end{lemma}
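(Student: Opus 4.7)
The plan is to introduce a symmetric characterization of the triple product in terms of three-way edge-disjoint decompositions and to show that both bracketings reduce to it. For $G \in \multipieces_{+t}(\F)$, call $G$ \emph{triply splittable} with respect to $(\piaone, \piatwo, \pib)$ if for every pairwise edge-disjoint decomposition $G = G_1 \oplus G_2 \oplus G_3$ there is some $i \in \{1,2,3\}$ such that $\Pi_i$ contains a minor of $G_i$. I aim to prove that $G \in (\piaone \odot \piatwo) \odot \pib$ iff $G$ is triply splittable. The analogous equivalence for $\piaone \odot (\piatwo \odot \pib)$ then follows by a mirror-image argument, and associativity is immediate.

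For the forward direction, given $G \in (\piaone \odot \piatwo) \odot \pib$ and a 3-way decomposition $G = G_1 \oplus G_2 \oplus G_3$, I view it as the 2-way split $G = (G_1 \oplus G_2) \oplus G_3$. Definition~\ref{def:merge} then yields either a $\pib$-minor of $G_3$ (and we are done) or an $H \in \piaone \odot \piatwo$ with $H \leqlb G_1 \oplus G_2$. In the latter case Lemma~\ref{lem:split_in_pieces} produces an edge-disjoint splitting $H = H_1 \oplus H_2$ with $H_1 \leqlb G_1$ and $H_2 \leqlb G_2$. Because $H$ itself lies in $\piaone \odot \piatwo$, this decomposition must witness a $\piaone$-minor of $H_1$ or a $\piatwo$-minor of $H_2$, and transitivity of the minor relation promotes these to minors of $G_1$ or $G_2$ respectively.

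The converse is the main step. Fix a triply splittable $G$ and an arbitrary 2-way edge-disjoint split $G = G_{12} \oplus G_3$, and assume $\pib$ has no minor of $G_3$. I will argue that $G_{12}$ itself lies in $\piaone \odot \piatwo$. For this I need (i) $G_{12} \in \multipieces_{+t}(\F)$, and (ii) for every edge-disjoint refinement $G_{12} = H_1 \oplus H_2$, one of $\piaone, \piatwo$ covers the corresponding part. Assertion (ii) is immediate: the combined sum $G = H_1 \oplus H_2 \oplus G_3$ is a pairwise edge-disjoint 3-way decomposition of $G$, so triple splittability together with the failure of $\pib$ on $G_3$ forces the desired covering.

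Assertion (i) is where I expect the main obstacle. The plan is to invoke Observation~\ref{obs:alternative-definition-mpcs}: if $G$ is obtained from some $Q \in \extend_{+t}(\F)$ by successive removals of connected components, boundary edges, boundary labels, and isolated boundary vertices, then $G_{12}$ can be obtained from the same $Q$ by appending to this sequence further removals of precisely those components, boundary edges, boundary labels, and boundary vertices of $G$ that belong solely to $G_3$. The delicate point is that $\oplus$ identifies only boundary vertices while leaving non-boundary ones disjoint, so a case analysis shows each connected component of $G - \delta(G)$ sits entirely on one side of the split, and every boundary vertex of $G$ not surviving in $G_{12}$ belongs only to $G_3$; after the $G_3$-only incident edges and labels on such a vertex have been removed, it is isolated and label-free and is therefore eligible for the isolated-vertex removal rule, completing the derivation of $G_{12}$ from $Q$.
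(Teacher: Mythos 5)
Your proposal is correct and takes the same route as the paper: both show that each bracketing equals the symmetric set of all $G\in\multipieces_{+t}(\F)$ for which every pairwise edge-disjoint three-way decomposition $G=G_1\oplus G_2\oplus G_3$ has some $\Pi_i$ containing a minor of $G_i$. The paper states this equality with ``it is easy to see''; you have worked out exactly the details being glossed over, including the non-obvious converse step where one must argue via Observation~\ref{obs:alternative-definition-mpcs} that $G_{12}$ remains in $\multipieces_{+t}(\F)$.
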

\begin{proof}
It is easy to see that
\begin{align*}
(\piaone \odot &\piatwo) \odot \pib = \\
&\{G \in \multipieces_{+t}(\F) \mid \\
&\quad \forall G_1,G_2,G_3 \in \bgraphs \text{ that share no edges, such that }  G_1 \oplus G_2 \oplus G_3 = G  : \\
&\qquad\text{$\exists i \in [3]$ such that $\Pi_i$ contains a minor of $G_i$}\} =\\
\piaone \odot &(\piatwo \odot \pib)\qedhere
\end{align*}
\end{proof}

It is easy to see from the definition that $\odot$ is commutative, so that it is both associative and commutative. Hence,  reordering or parenthesizing an expression of the form $\Pi_1\odot\Pi_2\odot\ldots\odot \Pi_n$, will not change the result.

\begin{lemma}\label{lem:merge:large}
Let~$\Pi_1, \Pi_2, \Pi_3 \subseteq \multipieces_{+t}(\F)$ and~$H \in \multipieces_{+t}(\F)$, for some~$\F \subseteq \graphs$. Then~$H \in \Pi_1 \odot \Pi_2 \odot \Pi_3$ if and only if for all partitions of~$\pieces(H)$ into~$P_1, P_2, P_3$, there exists~$i \in [3]$ such that~$\Pi_i$ contains a minor of~$\bigoplus_{p \in P_i} p$.
\end{lemma}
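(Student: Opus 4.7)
The plan is to invoke associativity via Lemma \ref{lem:merge_associative}. The proof of that lemma already spells out that $(\Pi_1 \odot \Pi_2) \odot \Pi_3$ equals the set of $G \in \multipieces_{+t}(\F)$ such that for every triple $G_1, G_2, G_3 \in \bgraphs$ that pairwise share no edges and satisfy $G_1 \oplus G_2 \oplus G_3 = G$, some $\Pi_i$ contains a minor of $G_i$. So the task reduces to showing that this ``triple decomposition'' quantifier is equivalent to the ``three-part piece partition'' quantifier in the statement.

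For the forward implication, I would take any partition $(P_1, P_2, P_3)$ of $\pieces(H)$ and define $G_i := \bigoplus_{p \in P_i} p$. By Definition \ref{def:pcs} distinct pieces of $H$ are pairwise edge-disjoint, so the $G_i$ share no edges, and $G_1 \oplus G_2 \oplus G_3 = \bigoplus_{p \in \pieces(H)} p = H$. Applying the triple-decomposition characterization then yields an index $i$ with $\Pi_i$ containing a minor of $G_i = \bigoplus_{p \in P_i} p$, as desired.

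For the reverse implication, I would start from an arbitrary edge-disjoint decomposition $H = G_1 \oplus G_2 \oplus G_3$ and build a corresponding partition of $\pieces(H)$ such that $\bigoplus_{p \in P_i} p$ is a subgraph of $G_i$ (hence a minor). Each piece gets assigned as follows: an inter-boundary edge piece goes to the unique $G_i$ containing that edge; for a component piece associated with $C \subseteq V(H) \setminus \delta(H)$, the crucial observation is that since non-boundary vertices appear in a unique $G_i$ and every edge incident to a non-boundary vertex lies in the $G_i$ containing that endpoint, the connectivity of $C$ in $H - \delta(H)$ forces all its vertices and its incident boundary-edges to lie in a single $G_i$, to which we assign the piece; a single-vertex label piece carrying $x$ on boundary vertex $v$ goes to any $G_i$ with $x \in \lab_{G_i}(v)$, and an isolated-boundary-vertex piece for $v$ to any $G_i$ containing $v$. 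One then verifies $\bigoplus_{p \in P_i} p$ is a subgraph of $G_i$ (boundary identifications being consistent), so the hypothesis furnishes some $\Pi_i$ that contains a minor of $\bigoplus_{p \in P_i} p$, and that minor is also a minor of $G_i$. This establishes membership in $\Pi_1 \odot \Pi_2 \odot \Pi_3$.

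The main obstacle is the second step: one must argue carefully that the non-boundary vertices of a component of $H - \delta(H)$ and all of its edges (including those reaching into $\delta(H)$) are confined to a single $G_i$. This follows from a short induction along a path in $H - \delta(H)$, using that non-boundary vertices are not shared between the $G_i$'s and that an edge of $G_i$ has both endpoints in $V(G_i)$, but the argument must be stated explicitly because it is the only place where the edge-disjointness of the decomposition is really used.
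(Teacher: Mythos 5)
Your proof is correct, and it is essentially the route the paper's one-line proof (``This follows from Lemma~\ref{lem:split_in_pieces} and Definition~\ref{def:merge}'') has in mind: expand $\Pi_1 \odot \Pi_2 \odot \Pi_3$ via associativity (as spelled out in the proof of Lemma~\ref{lem:merge_associative}) into a universal quantification over edge-disjoint triple decompositions of $H$, then show this is equivalent to quantification over three-part partitions of $\pieces(H)$. The one place you deviate from a literal reading of the paper's citation is the reverse implication: instead of applying the binary Lemma~\ref{lem:split_in_pieces} twice (first to $H = G_1 \oplus (G_2 \oplus G_3)$, then to $\bigoplus_{p \notin P} p \leqlb G_2 \oplus G_3$), you construct the three-part piece partition directly from the given decomposition. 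This is arguably the cleaner option: a naive double application yields a partition of $\pieces(\bigoplus_{p \notin P} p)$, which is not literally a partition of $\pieces(H) \setminus P$ (they can differ on unlabeled single-boundary-vertex pieces), so one would still have to reconcile the two. Your direct construction sidesteps that wrinkle, and you correctly isolate the one point where edge-disjointness is genuinely used, namely that a connected component of $H - \delta(H)$ together with its boundary-incident edges is confined to a single $G_i$. Everything checks out.
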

\begin{proof}
This follows from Lemma~\ref{lem:split_in_pieces} and Definition~\ref{def:merge}.
\end{proof}

The following technical statement is used in the inductive proof of the main lemma. There, we will be considering \FDeletion solutions in a graph~$G$ that is decomposed as~$G = \forget(G_A \oplus G_B \oplus G_C)$ for $t$-boundaried graphs~$G_A, G_B, G_C$. If~$G_A - \delta(G_A)$ is connected, we progress in the induction by selecting a vertex of~$G_A - \delta(G_A)$ whose removal decreases the treedepth of~$G_A - \delta(G_A)$ and using it as the $t+1$'th boundary vertex. This turns~$G_A$ into a $(t+1)$-boundaried graph~$G'_A$, and we interpret~$G_B$ and~$G_C$ also as~$(t+1)$-boundaried graphs~$G'_B,G'_C$ in which the $(t+1)$'th boundary vertex is undefined. The next lemma shows that if breaking the $t$-boundaried graphs~$\Pi_1,\Pi_2,\Pi_3$ in~$G_A,G_B,G_C$ respectively was sufficient to obtain an \FDeletion solution in~$\forget(G_A \oplus G_B \oplus G_C)$, then breaking the $(t+1)$-boundaried graphs~$\Pi'_1,\Pi'_2,\Pi'_3$ in~$G'_A,G'_B,G'_C$ yields an \FDeletion solution in~$\forget(G'_A \oplus G'_B \oplus G'_C)$.

\begin{lemma}\label{lem:prohibitions:extendplusone}
Let $\Pi_1,\Pi_2,\Pi_3 \subseteq \multipieces_{+t}(\F)$ for some~$\F \subseteq \graphs$, let $\Pi_1' := \extend_{+1}(\Pi_1)$, and let $\Pi_i' := \extend_{+1}(\Pi_i) \cup (\nonisobgraphs{t+1} \cap \multipieces_{+(t+1)}(\F))$ for $i \in \{2,3\}$. If $\Pi_1\odot\Pi_2 \odot \Pi_3 \supseteq \extend_{+t}(\F)$, then we have~$\Pi_1'\odot\Pi_2' \odot \Pi_3' \supseteq \extend_{+(t+1)}(\F)$.
\end{lemma}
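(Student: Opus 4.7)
The plan is to fix $H \in \extend_{+(t+1)}(\F)$ and verify, via Lemma~\ref{lem:merge:large}, that every partition $\pieces(H) = P_1 \sqcup P_2 \sqcup P_3$ admits some $i \in [3]$ for which $\Pi_i'$ contains a minor of $\bigoplus_{p \in P_i} p$. Unfolding $\extend_{+(t+1)} = \extend_{+1} \circ \extend_{+t}$ yields $\widehat{H} \in \extend_{+t}(\F)$ with $H \in \extend_{+1}(\widehat{H})$, and the hypothesis gives $\widehat{H} \in \Pi_1 \odot \Pi_2 \odot \Pi_3$. The goal is, for a suitable induced partition of $\pieces(\widehat{H})$, to lift the minor guaranteed by this hypothesis from $\widehat{H}$ to $H$.

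Two easy cases are handled first. If $\bound_H^{-1}(t+1)$ is undefined, then Lemma~\ref{item:remove_boundary_vertex_reduce_extend} gives $H \in \extend_{+t}(\F)$, the pieces of $H$ coincide with those of $\widehat{H}$, and any minor $M \in \Pi_i$ supplied by the hypothesis (reinterpreted as a $(t+1)$-boundaried graph with position $t+1$ undefined via the ``copy'' option of $\extend_{+1}$) lies in $\extend_{+1}(\Pi_i) \subseteq \Pi_i'$. Otherwise let $v := \bound_H^{-1}(t+1)$. Whenever some piece of $H$ containing $v$ is placed in $P_j$ with $j \in \{2,3\}$, the sub-sum $\bigoplus_{p \in P_j} p$ retains $v$ at boundary position $t+1$, and since $H \in \multipieces_{+(t+1)}(\F)$ and this membership is preserved by taking sub-sums of pieces (via Observation~\ref{obs:alternative-definition-mpcs}), one obtains $\bigoplus_{p \in P_j} p \in \nonisobgraphs{t+1} \cap \multipieces_{+(t+1)}(\F) \subseteq \Pi_j'$, providing a trivial self-minor. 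This is exactly the raison d'\^etre of the second summand included in $\Pi_2'$ and $\Pi_3'$.

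The remaining hard case is when every piece of $H$ containing $v$ lies in $P_1$, so that $\bigoplus_{p \in P_j} p$ is effectively $t$-boundaried for $j \in \{2,3\}$. Set up a canonical identification of pieces of $H$ not involving $v$ with pieces of $\widehat{H}$: in the ``add boundary'' variant of $\extend_{+1}$ the identification is the obvious one, and in the ``split'' variant it comes from contracting the new edge $\{u,v\}$ where $u$ is the split partner of $v$. Define $\widehat{P}_j \subseteq \pieces(\widehat{H})$ to be the image of $P_j$ for $j \in \{2,3\}$ and let $\widehat{P}_1 := \pieces(\widehat{H}) \setminus (\widehat{P}_2 \cup \widehat{P}_3)$. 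The hypothesis, applied to the resulting partition of $\pieces(\widehat{H})$, yields a minor $M$ in some $\Pi_i$ of $\bigoplus_{p \in \widehat{P}_i} p$. For $i \in \{2,3\}$ the two sub-sums coincide as $t$-boundaried graphs, so $M$ transfers into $\Pi_i'$ as in the first easy case. For $i = 1$ the key identity is that $\bigoplus_{p \in \widehat{P}_1} p$ is a $t$-boundaried minor of $\forget(\bigoplus_{p \in P_1} p, t)$: the two graphs are equal in the ``add boundary'' case, and in the ``split'' case the latter reduces to the former by contracting $\{u,v\}$ (which is legal once $v$'s boundary status has been forgotten). Lemma~\ref{lem:extend:minor} then lifts $M$ to some $M' \in \extend_{+1}(M) \subseteq \Pi_1'$ that is a $(t+1)$-boundaried minor of $\bigoplus_{p \in P_1} p$, as needed.

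The principal technical obstacle is precisely this contraction identity in the ``split'' sub-case: one must verify by piece-by-piece bookkeeping that $\bigoplus_{p \in \widehat{P}_1} p$ arises exactly as $\forget(\bigoplus_{p \in P_1} p, t)$ after contracting $\{u,v\}$. This uses crucially the assumption that all pieces of $H$ touching $v$ lie in $P_1$: the edge piece $\{u,v\}$ itself, the singleton pieces carrying labels that migrated from $u$ to $v$, the edge pieces for edges at $u$ that were redirected to $v$, and every component piece that acquired $v$ as a boundary neighbor must all be summed into the same sub-sum, so that their contraction reassembles $u$'s original labels, edges, and component attachments in $\widehat{H}$.
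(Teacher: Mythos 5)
Your proposal is correct and follows essentially the same route as the paper's proof: fix a target graph $H \in \extend_{+(t+1)}(\F)$, reduce to Lemma~\ref{lem:merge:large}, dispose of partitions that place the new boundary vertex on the $\Pi_2'$ or $\Pi_3'$ side via the $\nonisobgraphs{t+1}$ summand, and in the remaining case factor $H$ as a one-step extension of some $\widehat{H} \in \extend_{+t}(\F)$, translate the partition of $\pieces(H)$ to a partition of $\pieces(\widehat{H})$, invoke the hypothesis, and lift the resulting minor back to the $(t+1)$-boundaried side via Lemma~\ref{lem:extend:minor}. The only difference is stylistic: the paper phrases the translation abstractly through the identity $F = \forget(F'_1, t) \oplus F'_2 \oplus F'_3$ (with an edge contraction in the split case) and then appeals to the existence of a matching piece partition, whereas you set up an explicit injection from the pieces of $H$ not involving $v$ into $\pieces(\widehat{H})$ and verify the identity piece by piece. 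Both carry the same content; your version makes explicit the bookkeeping that the paper leaves to the reader.
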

\begin{proof}
Assume~$\Pi_1\odot\Pi_2 \odot \Pi_3 \supseteq \extend_{+t}(\F)$. Consider some graph~$F' \in \extend_{+(t+1)}(\F)$. By Lemma~\ref{lem:merge:large}, to show that~$F' \in \Pi_1'\odot\Pi_2' \odot \Pi_3'$ it suffices to prove that for all partitions~$P'_1, P'_2, P'_3$ of~$\pieces(F')$, there exists~$i \in [3]$ such that~$\Pi'_i$ contains a minor of~$\bigoplus_{p \in P'_i} p$. So consider an arbitrary partition of~$\pieces(F')$. Define~$F'_i := \bigoplus_{p \in P'_i} p$ for all~$i \in [3]$; by Definition~\ref{def:mpcs} it follows that~$F'_i \in \multipieces_{+(t+1)}(\F)$ if~$P'_i \neq \emptyset$. Since the~$P'_i$ partition the pieces, we have~$F'_1 \oplus F'_2 \oplus F'_3 = F'$ and these three graphs are edge-disjoint. We consider the status of the $(t+1)$'th boundary vertex in these graphs.

If~$F'_j \in \nonisobgraphs{t+1}$ for some~$j \in \{2,3\}$, then we have~$F'_j \in \multipieces_{+(t+1)}(\F) \cap \nonisobgraphs{t+1}$ and therefore~$F'_j \in \Pi'_j$, by assumption. Hence~$\Pi'_j$ contains a minor of~$F'_j$ and we are done. In the remainder of the proof, we consider the case that the $(t+1)$'th boundary vertex is left undefined in both~$F'_2$ and~$F'_3$.

Since~$F' \in \extend_{+(t+1)}(\F)$, there is some graph~$F \in \extend_{+t}(\F)$ such that~$F' \in \extend_{+1}(F)$. Following Definition~\ref{def:extend}, we consider three cases of how~$F$ was extended to form~$F'$. In the first case, the proof is trivial. In the second and third case, we will show that there exists some~$i \in [3]$ and $t$-boundaried graph~$F_i$ such that~$F'_i \in \extend_{+1}(F_i)$ and such that~$\Pi_i$ contains a $t$-boundaried minor~$H_i$ of~$F_i$. This will later be shown to prove the lemma.

(\textbf{No change:} $\mathbf{F' := F}$) Suppose~$F'$ equals $F$ because the $(t+1)$'th boundary vertex is left undefined in~$F'$. It follows that $\pieces(F') = \pieces(F)$ and thereby $P_1',P_2',P_3'$ form a partition of the pieces of $F$. It follows from Lemma \ref{lem:merge:large}, that there exists an $i \in [3]$ such that $\Pi_i$ contains a minor $\pi$ of $F_i'$. By definition of extend, $\pi \in \Pi_i'$, which concludes the proof.

(\textbf{Boundary status for existing vertex}) Next, consider the case that~$F'$ was obtained from~$F$ by assigning~$\bound_{F'}(v) := t+1$ for some~$v \in V(F) \setminus \delta(F)$. Recall that~$F'_2, F'_3$ do not define a $(t+1)$'th boundary vertex. Since~$F'_1 \oplus F'_2 \oplus F'_3 = F'$, this means~$F'_1$ contains a vertex~$v_1$ that acts as the $t+1$'th boundary vertex. Define~$F_i := \forget(F'_i, t)$, i.e., by forgetting the boundary status of the $(t+1)$'th boundary vertex. It follows that~$F = F_1 \oplus F_2 \oplus F_3$, since boundary vertex~$t+1$ is not defined in~$F'_j$ for~$j \in \{2,3\}$. As these operations invert the extend operation, we have~$F'_i \in \extend_{+1}(F_i)$ for~$i \in [3]$. Since~$F$ is an unlabeled graph and~$F$ is the sum of the edge-disjoint graphs~$F_1 \oplus F_2 \oplus F_3$, there exists a partition~$P_1, P_2, P_3$ of~$\pieces(F)$ such that~$F_i = \bigoplus_{p \in P_i} p$ for all~$i \in [3]$. As~$F \in \extend_{+t}(\F)$, we have~$F \in \Pi_1\odot\Pi_2 \odot \Pi_3$ by assumption. By Lemma~\ref{lem:merge:large}, these last two facts imply that there exists~$i \in [3]$ such that~$\Pi_i$ contains a $t$-boundaried minor~$H_i$ of~$F_i$.

(\textbf{Splitting a boundary vertex}) Finally, we consider the case that~$F'$ was obtained from~$F$ by splitting the  boundary vertex $\bound_{F}^{-1}(i)$ for some $i \in [t]$, by the reverse of an edge contraction. This resulted in the edge~$\{\bound_{F'}^{-1}(i), \bound_{F'}^{-1}(t+1)\}$ in~$F'$, and contracting this edge recovers~$F$. Since~$\{\bound_{F'}(i),\bound_{F'}(t+1)\}$ is an edge of~$F' = F'_1 \oplus F'_2 \oplus F'_3$, while the $(t+1)$'th boundary vertex is undefined in~$F'_2$ and~$F'_3$, this edge is contained in~$F'_1$. Obtain~$F_1$ from~$\forget(F'_1,t)$ by contracting this edge. Let~$F_j$ equal~$F'_j$ for~$j \in \{2,3\}$. By construction,~$F'_i \in \extend(F_i)$ for each~$i \in [3]$. Similarly as before we find~$F = F_1 \oplus F_2 \oplus F_3$ and there exists a partition~$P_1, P_2, P_3$ of~$\pieces(F)$ such that~$F_i = \bigoplus_{p \in P_i} p$ for all~$i \in [3]$. Just as in the previous case, this implies that there exists~$i \in [3]$ such that~$\Pi_i$ contains a $t$-boundaried minor~$H_i$ of~$F_i$.

(\textbf{Concluding the proof}) In the last two cases, we have found some~$i \in [3]$ and~$F_i$ such that~$F'_i \in \extend_{+1}(F_i)$ and~$H_i \leqb F_i$ for some~$H_i \in \Pi_i$. By Definition~\ref{def:extend} and the fact that~$F'_i \in \extend_{+1}(F_i)$, we have~$F_i \leqb \forget(F'_i, t)$. Hence~$H_i \leqb \forget(F'_i, t)$, which implies by Lemma~\ref{lem:extend:minor} applied for~$X=\emptyset$ that there exists~$H'_i \in \extend_{+1}(H)$ such that~$H'_i \leqb F'_i$. We have~$H'_i \in \extend_{+1}(H_i) \subseteq \extend_{+1}(\Pi_i) \subseteq \Pi'_i$, and therefore~$\Pi'_i$ contains a minor of~$F'_i$, as desired. This concludes the proof.
\end{proof}

\subsection{Starred folio}

Often, we will be interested in whether graphs of a certain kind appear in the folio. We therefore introduce the following abbreviation.

\begin{definition}[$\folio^*$]
For a set~$\Q \subseteq \lcongraphs$, an integer~$t$, and~$G \in \lbgraphs$, define \[\foliotstar(G) := \folio(G) \cap \multipieces_{+t}(\Q).\] Let $S$ be a set of graphs, define $\foliotstar(S) := \bigcup_{G \in S}\foliotstar(G)$.
\end{definition}



Intuitively,~$\folio^*_{\Q,t}(G)$ for a $t$-boundaried $X$-labeled graph~$G$ gives the set of fragments of $\Q$-minors that can be formed in~$G$, so that different  \FDeletion solutions leaving behind the same folio are interchangeable. The following lemma states that if we have three $t$-boundaried labeled graphs~$G_\alpha, G_{\beta_1}, G_{\beta_2}$ such that the set of $\Q$-minor fragments that can be formed in~$G_{\beta_1}$ is a subset of the fragments that can be formed in~$G_{\beta_2}$, then when we glue on a third graph~$G_{\alpha}$, the set of fragments that can be made in~$G_{\alpha} \oplus G_{\beta_1}$ is a subset of those available in~$G_{\alpha} \oplus G_{\beta_2}$.

\begin{lemma} \label{lem:replacehalf:preservefolio}
Let~$\Q \subseteq \lcongraphs$ and let~$G_\alpha, G_{\beta_1}, G_{\beta_2} \in \lbgraphs$ such that~$\folio^*_{\Q,t}(G_{\beta_1}) \subseteq \folio^*_{\Q,t}(G_{\beta_2})$. Then:
\[\folio^*_{\Q,t}(G_\alpha \oplus G_{\beta_1}) \subseteq \folio^*_{\Q,t}(G_\alpha \oplus G_{\beta_2}).\]
Moreover, if~$\folio^*_{\Q,t}(G_{\beta_1}) = \folio^*_{\Q,t}(G_{\beta_2})$ then~$\folio^*_{\Q,t}(G_\alpha \oplus G_{\beta_1}) = \folio^*_{\Q,t}(G_\alpha \oplus G_{\beta_2})$.
\end{lemma}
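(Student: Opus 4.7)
The plan is to derive both inclusions from Lemma~\ref{lem:split_in_pieces} by decomposing a given minor of $G_\alpha \oplus G_{\beta_1}$ into a piece living in $G_\alpha$ and a piece living in $G_{\beta_1}$, relocating the latter to $G_{\beta_2}$ via the folio hypothesis, and reassembling.

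I would start by fixing an arbitrary $H \in \folio^*_{\Q,t}(G_\alpha \oplus G_{\beta_1})$, so $H \in \multipieces_{+t}(\Q)$ and $H \leqlb G_\alpha \oplus G_{\beta_1}$. Lemma~\ref{lem:split_in_pieces} then supplies a subset $P \subseteq \pieces(H)$ with $H_\alpha := \bigoplus_{p \in P} p \leqlb G_\alpha$ and $H_\beta := \bigoplus_{p \in \pieces(H) \setminus P} p \leqlb G_{\beta_1}$. The degenerate cases $P = \emptyset$ (where $H = H_\beta \in \folio^*_{\Q,t}(G_{\beta_1}) \subseteq \folio^*_{\Q,t}(G_{\beta_2})$, so $H \leqlb G_{\beta_2} \leqlb G_\alpha \oplus G_{\beta_2}$) and $P = \pieces(H)$ (where $H = H_\alpha \leqlb G_\alpha \leqlb G_\alpha \oplus G_{\beta_2}$) can be disposed of directly, so I may assume both sets are non-empty.

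The key step is to show $H_\beta \in \multipieces_{+t}(\Q)$, which together with $H_\beta \leqlb G_{\beta_1}$ will yield $H_\beta \in \folio^*_{\Q,t}(G_{\beta_1})$. For this I would pick any $Q' \in \extend_{+t}(\Q)$ witnessing $H \in \multipieces(Q')$. By Observation~\ref{obs:alternative-definition-mpcs}, $\multipieces(Q')$ is the closure of $\{Q'\}$ under (i) removing non-boundary components, (ii) removing boundary edges, (iii) removing boundary labels, and (iv) removing isolated boundary vertices. The graph $H_\beta$ is obtainable from $H$ by exactly such operations---drop the non-boundary components, boundary edges, and boundary labels contributed by the pieces in $P$, then delete every boundary vertex of $H$ that fails to occur in any remaining piece (such a vertex is now isolated and carries no label, so step (iv) applies)---and composing these removals with the ones taking $Q'$ to $H$ places $H_\beta$ in $\multipieces(Q') \subseteq \multipieces_{+t}(\Q)$.

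The hypothesis then upgrades this to $H_\beta \leqlb G_{\beta_2}$, and Lemma~\ref{lem:split_in_pieces} applied in the reverse direction to $H_\alpha \leqlb G_\alpha$ and $H_\beta \leqlb G_{\beta_2}$ delivers $H = H_\alpha \oplus H_\beta \leqlb G_\alpha \oplus G_{\beta_2}$, which combined with $H \in \multipieces_{+t}(\Q)$ gives $H \in \folio^*_{\Q,t}(G_\alpha \oplus G_{\beta_2})$. The moreover statement follows by applying the just-proven inclusion with the roles of $G_{\beta_1}$ and $G_{\beta_2}$ swapped. The main obstacle is verifying that $H_\beta$ inherits membership in $\multipieces_{+t}(\Q)$ from $H$; this relies on the closure characterisation of Observation~\ref{obs:alternative-definition-mpcs} and on a careful treatment of boundary vertices whose only incident structure is removed when the pieces in $P$ are dropped.
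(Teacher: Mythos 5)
Your proof follows the paper's argument exactly: decompose $H$ via Lemma~\ref{lem:split_in_pieces} into a part $H_\alpha$ that is a minor of $G_\alpha$ and a part $H_\beta$ that is a minor of $G_{\beta_1}$, observe $H_\beta \in \multipieces_{+t}(\Q)$ so the folio hypothesis relocates it into $G_{\beta_2}$, and reassemble with Lemma~\ref{lem:split_in_pieces} in the other direction; the equality statement then follows by symmetry. Your justification of $H_\beta \in \multipieces_{+t}(\Q)$ via the removal-operation characterization in Observation~\ref{obs:alternative-definition-mpcs} is a slightly more explicit rendering of the paper's terser ``$H_1,H_2 \in \multipieces(H)$'' step, but it is not a different approach.
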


\begin{proof}
We first prove the $\subseteq$-part of the statement, from which we will later derive the statement about equality. So assume that~$\folio^*_{\Q,t}(G_{\beta_1}) \subseteq \folio^*_{\Q,t}(G_{\beta_2})$ and consider an arbitrary graph~$H \in \folio^*_{\Q,t}(G_\alpha \oplus G_{\beta_1})$. We will show that~$H \in \folio^*_{\Q,t}(G_\alpha \oplus G_{\beta_2})$.

By definition of~$\folio^*_{\Q,t}$ we have~$H \leqlb G_\alpha \oplus G_{\beta_1}$ and~$H \in \multipieces_{+t}(\Q)$. By Lemma~\ref{lem:split_in_pieces} there is a subset of pieces~$P \subseteq \pieces(H)$ such that~$H_1 := (\bigoplus_{p \in P} p)$ is a minor of~$G_\alpha$, and~$H_2 := (\bigoplus_{p \in \pieces(H)\setminus P} p)$ is a minor of~$G_{\beta_1}$. If either $H_1$ or $H_2$ is empty, it is easy to see that $H \in \foliotstar(G_\alpha \oplus G_{\beta_2})$, as needed. Otherwise, since~$H \in \multipieces_{+t}(\Q)$ and~$H_1, H_2 \in \multipieces(H)$, it follows that~$H_1,H_2 \in \multipieces_{+t}(\Q)$. 
 Hence the fact that~$H_2 \leqlb G_{\beta_1}$ implies that~$H_2 \in \folio^*_{\Q,t}(G_{\beta_1})$. Since~$H_2 \in \folio^*_{\Q,t}(G_{\beta_1}) \sqsubseteq \folio^*_{\Q,t}(G_{\beta_2})$ there exists~$H'_2 \in \folio^*_{\Q,t}(G_{\beta_2})$ with~$H_2 \leqlb H'_2$, which in turn implies that~$H_2 \in \folio^*_{\Q,t}(G_{\beta_2})$. Since~$H_1 \oplus H_2 = H$ and~$H_1 \leqlb G_{\alpha}$, Lemma~\ref{lem:split_in_pieces} now implies that~$H \leqlb G_\alpha \oplus G_{\beta_2}$. Thus~$H \in \folio^*_{\Q,t}(G_\alpha \oplus G_{\beta_2})$, as desired.

As the given argument applies to arbitrary~$H \in \folio^*_{\Q,t}(G_\alpha \oplus G_{\beta_2})$, we have~$\folio^*_{\Q,t}(G_\alpha \oplus G_{\beta_1}) \subseteq \folio^*_{\Q,t}(G_\alpha \oplus G_{\beta_2})$. Now we prove the statement about equality. If~$\folio^*_{\Q,t}(G_{\beta_1}) = \folio^*_{\Q,t}(G_{\beta_2})$, then by applying the above argument once we find~$\folio^*_{\Q,t}(G_\alpha \oplus G_{\beta_1}) \subseteq \folio^*_{\Q,t}(G_\alpha \oplus G_{\beta_2})$. Applying it once more with the roles of~$G_{\beta_1}$ and~$G_{\beta_2}$ reversed, we obtain the containment in the other direction, and conclude that~$\folio^*_{\Q,t}(G_\alpha \oplus G_{\beta_1}) = \folio^*_{\Q,t}(G_\alpha \oplus G_{\beta_2})$.
\end{proof}



%


\section{Main lemma} \label{sec:main:lemma}

In this section we prove our main lemma, Lemma \ref{lem:main}, by stating an inductive version of the lemma in Lemma \ref{lem:subset_Q}. We use this result to prove our main lemma at the end of this section.
To state the inductive version of our main lemma, we need the following additional definition.
%

\begin{definition}\label{def:minFdelsolwith}
For~$G_A,G_B,G_C \in \lbgraphs$ with boundary set~$S:= \delta(G_A\oplus G_B \oplus G_C)$, a given family~$\F \subseteq \congraphs$ of graphs, sets~$\Pi_A, \Pi_B,\Pi_C \subseteq \bgraphs$ (we will call these \emph{prohibitions}), set \Q of $X$-labeled graphs, and~$R_B \subseteq \lbgraphs$, define \minFdelsolwith (for \emph{opt. solution such that}) as:
\begin{align*}
\minFdelsolwith(G_A \oplus &G_B \oplus G_C, \Pi_A, \Pi_B, \Pi_C, R_B) :=\\ \{& Y \in \minFdelsol(G_A \oplus G_B \oplus G_C) \mid \\
 &Y \cap S = \emptyset \text{, and } \\
&G_i - Y^* \text{ has no boundaried } \Pi_i\text{-minor for any $i \in \{A,B,C\}$, and}\\
&R_B = \foliotstar(G_B - Y) \}.\qedhere
\end{align*}
\end{definition}
%
The prohibitions introduced above will be used to keep track of how minor models for \F are broken in graph $G_A \oplus G_B \oplus G_C$, by keeping track of which pieces are broken in each of the three graphs $G_A$, $G_B$, and $G_C$. Which graphs in $\F$ are minors of $G_A \oplus G_B \oplus G_C$, can be completely determined from the folios of $G_A$, $G_B$, and $G_C$. In particular, using Lemma \ref{lem:split_in_pieces}, we only need to consider minors in $\multipieces_{+t}(\F)$.

Similarly, to keep track of which minor models for graphs in \Q are broken by optimal solutions in $G$, we only need to keep track of which $\multipieces_{+t}(\Q)$ remain in $G_A$, $G_B$, and $G_C$, meaning that we only keep track of \foliotstar of these graphs after removing a solution.
 In the inductive application, we will prescribe a certain $\foliotstar$ for $G_B$ and we will query for the $\foliotstar$ of $G_A$ and leave the solution within $G_C$ unrestrained. 

\begin{definition}
Let two graphs be distinct when they are not isomorphic (see Definition \ref{def:isomorphism}).
Define $\numberof(\ell, t, n, \theta)$ as the number of distinct $\theta$-restricted $t$-boundaried $[\ell]$-labeled graphs on at most~$n$ vertices.
\end{definition}
We can now state the version of our main lemma that we will use to do induction. Recall that  $\|\F\|$ is defined as $\max_{H \in \F} |V(H)|$. For a graph~$G$, define~$\iscon(G) = 1$ if~$G$ is connected and~$0$ otherwise.
\begin{lemma}[{Main lemma (inductively)}]\label{lem:subset_Q}
Let~$X$ be a finite set, let $G_A, G_B, G_C \in \lbgraphs$ with $E(G_A)$, $E(G_B)$, $E(G_C)$ disjoint,  let $G:=G_A \oplus G_B \oplus G_C$, define $S:= \delta(G)$ and let $G$ satisfy $\td(G) \geq \td(G_A-S) + |S|$.


Let $\F \subseteq \congraphs$, let $\Q \subseteq \lcongraphs$ such that each graph in \Q has at most~$\max_{H \in \F} |E(H)|+1$ vertices and \Q is $\min_{H \in \F}|V(H)|$-saturated. Let $\Pi_A,\Pi_B,\Pi_C \subseteq \multipieces_{+t}(\F)\subseteq \bgraphs$, such that  $\Pi_A \odot \Pi_B \odot \Pi_C \supseteq \extend_{+t}(\F)$. Let $R_B \subseteq \multipieces_{+t}(\Q)$ be a set of $X$-labeled graphs.
%
Define
\begin{align*}
\mathcal{R}_0 := \{ & \foliotstar((G_A \oplus G_B)-Y) \mid
 Y \in \minFdelsolwith(G_A \oplus G_B \oplus G_C, \Pi_A, \Pi_B,\Pi_C, R_B)
\}.
\end{align*}
We also say that $Y$ is the \emph{corresponding \Fdel} for a remainder $R$ in $\mathcal{R}_0$. 
Let $\mathcal{R}$ be defined as the set of $\subseteq$-minimal elements from $\mathcal{R}_0$.
Let $\mathcal{R}_Q \subseteq \mathcal{R}$ (the remainders that leave a~$\Q$-minor) be defined as
\[\mathcal{R}_Q := \{R \in \mathcal{R} \mid \exists q \in \Q, \exists r \in R: q \leql \forget(r)\}.\]
Let $\mathcal{R}_N := \mathcal{R} \setminus \mathcal{R}_Q$ be the rest of the remainders, which leave no~$\Q$-minor. Let
\begin{align*}
\nu(\Pi_A) &:= | \multipieces_{+t}(\F) \setminus \Pi_A|\\
\xi(R_B) &:= \numberof \bigl(t \cdot \min _{H \in \F} |V(H)|, t, t + \max _{H \in \Q}|V(H)|, \min _{H \in \F} |V(H)| \bigr) - |R_B|\\
\mu(G_A,\Pi_A, S) &:= \minFdel(G_A,\Pi_A,S)  -\shrink \sum_{C \in \setcomponents(G_A-S)} \shrink \minFdel(C).
\end{align*}
Then there exist functions~$f$ and~$g$ (that do \emph{not} depend on~$|V(G)|$ or $|\Q|$) such that
\begin{enumerate}
\item $|\mathcal{R}_N| \leq f(\td(G_A-S), \iscon(G_A - S), \mu(G_A,\Pi_A,S), \nu(\Pi_A),\xi(R_B), \|\F\|, |S|)$ and
\item $\exists \Q^*\subseteq \Q$ such that
\[|\Q^*| \leq g(\td(G_A-S), \iscon(G_A - S), \mu(G_A,\Pi_A,S), \nu(\Pi_A),\xi(R_B), \|\F\|, |S|),\]
and for each $R\in \mathcal{R}_Q$ there exist $q \in \Q^*$, $r \in R$ with $q \leql \forget(r)$.
\end{enumerate}

\end{lemma}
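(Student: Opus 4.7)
The plan is to prove the lemma by strong induction on the lexicographic tuple $\bigl(\td(G_A-S),\ \iscon(G_A-S),\ \mu(G_A,\Pi_A,S),\ \nu(\Pi_A),\ \xi(R_B)\bigr)$, with the primary reduction coming from $\td(G_A-S)$. The base case is $\td(G_A-S)=0$: then $V(G_A)=S$, every solution $Y$ is disjoint from $V(G_A)$, and so $\mathcal{R}_0$ is determined entirely by the fixed data $(G_A, R_B)$ via Lemma \ref{lem:split_in_pieces}, giving $|\mathcal{R}|\leq 1$ and $\Q^*=\emptyset$. This dispenses with the base case.

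For the inductive step when $G_A-S$ is connected and $\td(G_A-S)>0$, we select by Definition \ref{def:treedepth} a vertex $v\in V(G_A)\setminus S$ such that $\td((G_A-S)-\{v\})<\td(G_A-S)$, and partition $\minFdelsolwith(\ldots)$ according to whether $v\in Y$. In the branch $v\in Y$, we recurse on $(G_A-\{v\},G_B,G_C)$ with the same boundary $S$, where $\td((G_A-\{v\})-S)$ has strictly dropped and $\mu$ has decreased by one (so that the inductive estimate applies). In the branch $v\notin Y$, we promote $v$ to boundary position $t+1$, yielding a $(t+1)$-boundaried triple $(G_A',G_B',G_C')$ with $|S'|=|S|+1$ and $\td(G_A'-S')<\td(G_A-S)$; the prohibitions are extended to $\Pi_A',\Pi_B',\Pi_C'$ so that $\Pi_A'\odot\Pi_B'\odot\Pi_C'\supseteq\extend_{+(t+1)}(\F)$ via Lemma \ref{lem:prohibitions:extendplusone}, while $R_B$ is lifted by $\extend_{+1}$ and the folios in $\mathcal{R}$ correspond by Lemma \ref{item:remove_boundary_vertex_reduce_mpcs}. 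Every remainder of the original instance arises from exactly one branch, so summing the two inductive bounds and the corresponding $\Q^*$-sets yields the claim for the connected case.

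For the inductive step when $G_A-S$ is disconnected, we peel off one connected component $C$ and decompose $G_A=G_A^C\oplus G_A^{\mathrm{rest}}$, where $G_A^C$ is induced on $V(C)\cup S$ (all boundary edges routed into one side). We enumerate, as an outer branch, the possible $(t\text{-boundaried})$ folios $R_C\in\foliotstar(G_A^C-Y|_{V(C)})$ of the component slice; each such folio is an element of $\multipieces_{+t}(\Q)$ on at most $t+\max_{H\in\Q}|V(H)|$ vertices, so their number is bounded by $\numberof(\cdot)-|R_B|$. For each choice of $R_C$ we split the prohibitions via $\splitPi_{\F}$ using Lemmas \ref{lem:merge-vs-split} and \ref{lem:mergePi}, and apply induction first to $(G_A^C, G_A^{\mathrm{rest}}\oplus G_B, G_C)$ (where $\iscon(G_A^C-S)=1$ and the secondary parameter has dropped) and then to $(G_A^{\mathrm{rest}},G_B,G_C)$ (where the number of components of $G_A-S$ has strictly decreased, again serving as inductive progress). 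Lemma \ref{lem:replacehalf:preservefolio} ensures that the enumeration over folios truly partitions $\mathcal{R}$.

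The main obstacle lies in bounding $|\Q^*|$ in the disconnected case: the number of components $k$ of $G_A-S$ can be unbounded, so naively one would need to track $\Q^*$-representatives from each component. The key is the $(\min_{H\in\F}|V(H)|)$-saturation of $\Q$. A remainder in $\mathcal{R}_Q$ contains, after $\forget$, a $\Q$-minor whose branch sets cross several components only through their boundary labels in $S$; but once a remainder produces a connected $\Q$-fragment exhibiting $\min_{H\in\F}|V(H)|$ distinct labels of $S$, saturation already guarantees a single-vertex $\Q$-member hitting that label pattern, and therefore only one representative per label pattern is needed. Consequently the contributions from all components collapse into a $\Q^*$ whose size depends only on $\bigl(\td(G_A-S),\nu(\Pi_A),\xi(R_B),\|\F\|,|S|\bigr)$ and on the bound inherited from the component we recursed on, independently of $k$. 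Careful bookkeeping via $\odot$ and $\splitPi$ combines these into the final bounds $f$ and $g$.
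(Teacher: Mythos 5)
Your high-level structure (base at $\td(G_A-S)=0$, connected case branching on a treedepth-decreasing vertex with the $s\in Y$ / $s\notin Y$ split, disconnected case peeling off one component and splitting prohibitions) mirrors the paper's, but there is a genuine gap in the disconnected case that your sketch does not address and that is in fact the technical crux of the proof. When you recurse on $(G_A^{\mathrm{rest}}, G_B, G_C)$ you write that ``the number of components of $G_A-S$ has strictly decreased, again serving as inductive progress.'' But the number of components of $G_A-S$ is not among the arguments of $f$ and $g$, and it cannot be added: it is unbounded in terms of $|X|$ and $\td$, so any bound tracking it would end up depending on $|V(G)|$, which the lemma statement explicitly forbids. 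With $G_A^{\mathrm{rest}}-S$ still disconnected, all of $\td(G_A-S)$, $\iscon$, $\mu$, $\nu$, $\xi$ can remain unchanged, so the lexicographic induction hypothesis simply does not apply to that call. The paper's resolution is two-layered: it inducts on $|V(G_A-S)|$ for \emph{termination} of the recursive construction, and separately proves (Claim~\ref{claim:at_least_one_parameter_decreases}) that, once a degenerate situation is filtered out by an explicit if-branch, each hat-recursion it performs on $G_{A_1}$ with prohibition $\piaone$ and prescribed remainder $R$ has at least one of $\nu(\piaone)$, $\xi(R)$, $\mu(G_{A_1},\piaone,S)$ strictly smaller. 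That monotonicity claim --- which rests on the nontrivial observation that otherwise $\piaone=\Pi_A$, $\minFdel(G_{A_2},\piatwo,S)=\minFdel(G_{A_2})$ and $R=R_B$, placing us in the if-branch --- is what makes the size bound independent of the number of components, and it is absent from your sketch. Your appeal to saturation is on the right track but proves something else: it caps $\xi(R_B)$ (base case~\ref{base-case:xi-large}) rather than controlling the multiplicity of components.

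Two smaller issues: in the $v\in Y$ branch you write that ``$\mu$ has decreased by one,'' which is neither true in general nor needed ($\td(G_A-S)$ strictly decreases, which suffices); and the outer enumeration ``over possible folios $R_C$'' bounded by $\numberof(\cdot)-|R_B|$ conflates a folio (a \emph{set} of labeled boundaried graphs from $\multipieces_{+t}(\Q)$) with a single such graph --- the number of candidate folios is of order $2^{\numberof(\cdot)}$. In the paper these candidates are not enumerated blindly; they are exactly the elements of $\Rtilde_N$, whose cardinality is controlled by the tilde-recursion (where $\iscon$ improves to $1$) via the induction hypothesis. Finally, your base case list omits the degenerate cases $\mu(G_A,\Pi_A,S)>|S|$ (which forces $\minFdelsolwith(\ldots)=\emptyset$) and $\xi(R_B)<0$; these are needed to establish that the lexicographic measure ranges over a finite set of values (Claim~\ref{claim:parameter_bounds}), without which the recurrence defining $f$ and $g$ is not well-founded.
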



\noindent Before giving the proof, we will explain the lemma statement and some of the main ideas of the proof. The above statement is a generalization of Lemma \ref{lem:main} that is needed in order to be able to do induction.
Initially, $G_A$ should be the entire graph  and $G_B$ and $G_C$ should be empty. When inductively using the lemma, parts from $G_A$ will be moved to $G_B$ and $G_C$. We use $\Pi_A$, $\Pi_B$ and $\Pi_C$ to keep track of which parts of graphs in \F are broken   in $G_A$, $G_B$, and $G_C$ by the solutions we are considering. A solution breaking $\Pi_X$ in $G_X$ for all $X \in \{A,B,C\}$ is guaranteed to break $\F$ in $G$, by the requirement that $\Pi_A\odot \Pi_B \odot \Pi_C \supseteq \extend_{+t}(\F)$. Since the graphs in \F are unlabeled, the graphs in $\Pi_A$, $\Pi_B$, and $\Pi_C$ are also unlabeled.

Graph $G_A$ is the ``query'' part, i.e., we want the lemma to tell us what $\folio^*$'s may remain in $G_A$ after removing a solution. Then $G_B$ is a ``prescribed'' part; we only want to know $\folio^*$'s in $G_A$ for solutions that have a prescribed effect in $G_B$. The behavior of $G_C$ is unrestrained, but not part of the output; we want to know all possible $\folio^*$'s that can be generated within $G_A$ by global solutions that do exactly as prescribed in $G_B$ and are free in $G_C$.

To see which decisions we made for $G_B$, remainder $R_B$ is introduced. Informally speaking, this set contains all fragments of graphs in \Q that solutions for $G$ that we are currently considering, will leave behind in $G_B$. This is relevant, since combined with graph $G_A$ these parts may form a graph in \Q.

$\mathcal{R}_0$ will then contain sets of graphs that we call remainders. Each remainder set corresponds to the parts of \Q that are left behind after deleting some optimal solution from $G$ (one remainder may be generated by more than one optimal solution). Intuitively, if some remainder in $\mathcal{R}_0$ is not minimal with respect to $\subseteq$, the corresponding minimal remainder is preferable: if no \FDeletion solution breaks all \Q-minors, then any minimal remainder contains a graph from \Q.
Thereby, will only consider the $\subseteq$-minimal remainders from this set, which form $\mathcal{R}$. We then promise that there exists a small subset $\Q^* \subseteq \Q$ that covers all remainders that have some \Q-minor, which will give us Lemma \ref{lem:main}. Furthermore we promise that the set of remainders with no \Q-minor, called $\mathcal{R}_N$, is appropriately bounded. This last claim is only needed for the induction, as these remainders can later be combined into bigger remainders that contain $\Q$-minors, and we need the number of options to consider to be sufficiently small.

The general proof strategy is to consider graph $G_A-S$. If it is connected, we find a vertex $s$ whose removal from~$G_A - S$ reduces its treedepth. We consider two types of optimal solutions, namely those containing $s$ and those not containing $s$. For solutions containing $s$, we obtain remainders by removing $s$ from the graph and applying the induction hypothesis. For solutions avoiding $s$, we obtain remainders by adding $s$ to $S$ and applying induction. We then take the union of both.

If $G_A - S$ is not connected, we split off one connected component $C$. We show how to combine each remainder obtained from $C$ with the remainders obtained from what remains of $G_A$, when $C$ is moved to $G_B$ and $R_B$ is adapted correctly.  This requires some careful analysis, as the size bounds are not allowed to depend on the number of connected components of $G_A-S$.

We bound $|\Q'|$ and $|\mathcal{R}_N|$ by $f$ and $g$ respectively, that depend on a number of parameters introduced for the induction. $\iscon(G_A-S)$ is used to denote if $G_A-S$ is connected, since this case is ``easier'' than the disconnected case, where we need to carefully combine remainders from all different connected components, without breaking the size bounds. In the case that $G_A - S$ is not connected, we split off a component $C$ from $G_A - S$. It can happen that a globally optimal solution, is not locally optimal in $C$. However, this should not be the case for too many components, because otherwise removing $S$ and taking an optimal solution in each component gives a smaller solution. We will use $\mu$ to keep track of the difference between the size of a globally optimal solution in $G_A$, and sizes of the locally optimal solutions in all components of $G_A - S$, and by the reasoning above $\mu$ should be bounded by $|S|$.  We use $\nu$ to keep track of how restrictive prohibition $\Pi_A$ is, the idea is that as we split $G_A$ further, smaller bits of $\F$ should be broken in each component and thus $\nu$ will decrease. Furthermore, $\xi(R_B)$ takes care of the complexity of $R_B$. At some point you can add nothing new to $R_B$ without introducing $\Q$-minors, which will help to bound $\xi(R_B)$. The bounds depend on $|S|$ but the size of this set will be bounded by the treedepth of $G$, by the precondition that $\td(G_A - S) + |S| \leq \td(G)$.

\begin{proof}[Proof of Lemma \ref{lem:subset_Q}.]
We will construct sets $\mathcal{R}'_N$ and $\Q' \subseteq \Q$ such that
\begin{align}\label{req:RprimeN}
\mathcal{R}_N \subseteq \mathcal{R}'_N
\end{align}
and
\begin{align}\label{req:Qprime}
\text{for all $R \in \mathcal{R}_{\Q}$ there exist graphs $q \in \Q', r\in R$ such that $q \leql \forget(r)$.}
\end{align}
We will then show that $\mathcal{R}'_N$ and $\Q'$ satisfy the required size bounds.
We do this by induction on the number of vertices of $G_A - S$.

\paragraph*{Base cases} We have a number of different base cases.
\begin{enumerate}
\item \label{base-case:empty-graph} Suppose $|V(G_A - S)| = 0$. Consider any two solutions $Y, Y' \in \minFdelsolwith(G_A \oplus G_B \oplus G_C, \Pi_A,\Pi_B,\Pi_C,R_B)$. By definition \ref{def:minFdelsolwith}, $\foliotstar(G_B - Y)$ and $\foliotstar(G_B - Y')$ are required  to always equal $R_B$  and no vertex from $G_A$ may be removed (as it only contains vertices in $S$).  Thus, it holds that $G_A - Y = G_A - Y' = G_A$ and $\foliotstar(G_B - Y) = \foliotstar(G_B-Y') = R_B$. It follows from Lemma \ref{lem:replacehalf:preservefolio}, that $\foliotstar((G_A \oplus G_B) - Y) =\foliotstar((G_A \oplus G_B) - Y')$.

     Since this holds for any two solutions, there is in fact at most one remainder $R \in \mathcal{R}$. If $R$ has no $\Q$-minor, $\mathcal{R}_N$ has size one, $\Q^*$ is empty and we are done. If $R$ does have a $\Q$-minor, $\mathcal{R}_N$ is empty and we add one such $\Q$-minor to $\Q^*$ such that $|\Q^*| = 1$.
\item \label{base-case:mu-large} $\mu(G_A,\Pi_A,S) > |S|$. 
As~$\F$ only contains connected graphs,~$\sum_{C \in \setcomponents(G_A-S)} \minFdel(C) = \minFdel(G_A - S)$ and there is an \FDeletion solution~$Y^*$ in~$G$ of size~$|S| + \minFdel(G_A - S) + \minFdel(G_B - S) + \minFdel(G_C - S)$ that removes all of~$S$ together with optimal solutions in the three subgraphs. If~$\mu(G_A,\Pi_A,S) > |S|$, then any \FDeletion solution in~$G_A$ that breaks~$\Pi_A$ and is disjoint from~$S$ is strictly larger than~$Y^*$, since it contains~$\minFdel(G_A,\Pi_A,S) > |S| + \minFdel(G_A - S)$ vertices from~$G_A$. Consequently, the set~$\minFdelsolwith(G_A \oplus G_B \oplus G_C, \Pi_A, \Pi_B,\Pi_C, R_B)$ is empty and therefore the statement is vacuously true.
\item \label{base-case:Q-in-RB} $R_B \cap \extend_{+t}(\Q) \neq \emptyset$. Consider a graph~$q' \in R_B \cap \extend_{+t}(\Q)$. Then by Definition~\ref{def:minFdelsolwith}, the set~$\mathcal{R}_0$ we consider only contains remainders corresponding to solutions~$Y$ in which~$q' \in \foliotstar(G_B - Y)$. Let~$q \in \Q$ such that~$q' \in \extend_{+t}(\Q)$. It follows that~$q \leql \forget(q')$ and therefore that~$q$ is a minor of~$\forget(q')$ with $q' \in R$ for all~$R \in \mathcal{R}_0$. Thereby, $\mathcal{R}_N = \emptyset$ and if we define $\Q^* := \{q\}$, it is easy to see that every remainder in $\mathcal{R}$ has a $\Q^*$-minor and $|\Q^*| = 1$.
\item \label{base-case:xi-large} $\xi(R_B) < 0$ and $R_B \cap \extend_{+t}(\Q) = \emptyset$. This means that $|R_B| > N(t \cdot \min_{H \in \mathcal{F}}|V(H)|,t,t+\|\Q\|,\min_{H \in \F}|V(H)|)$ (recall Definition \ref{def:graphs} for the meaning of $N(\ell,t,n,\theta)$). Since $R_B$ is a subset of $\multipieces_{+t}(\Q)$, it follows that  all graphs in $R_B$ have at most $t+\|\Q\|$ vertices. Furthermore, they are $t$-boundaried. So, if $|R_B| > N(t \cdot \min_{H \in \mathcal{F}}|V(H)|,t,t+\|\Q\|,\min_{H \in \F}|V(H)|)$, either some graph has more than $\min_{H \in \F}|V(H)|$ labels on a single vertex, or more than $t \cdot \min_{H \in \mathcal{F}}|V(H)|$ labels are used in total. Suppose some vertex in $R_B$ has more than $\min_{H \in \F}|V(H)|$ labels. Then $\mathcal{R}_N = \emptyset$, since every remainder $R\in \mathcal{R}$ satisfies $R_B\subseteq R$, and $R_B$ contains a graph with at least $\min_{H \in \F}|V(H)|$ labels on one vertex. Since $\Q$ is $\min_{H \in \F}|V(H)|$-saturated, the vertex with the first $\min_{H \in \F}|V(H)|$ of these labels is in $\Q$. Let $\Q^*$ contain only this graph, it is easy to see that every remainder in $\mathcal{R}$ has a $\Q^*$-minor and $|\Q^*| = 1$.

    Suppose more than $t \cdot \min_{H \in \mathcal{F}}|V(H)|$ labels are used in total. For a vertex $s \in S$, let $\ell(s) := \{x \in X \mid \exists r \in R_B \colon \exists v \in V(r) \colon x \in \lab_r(v)$  and $v$ is in the same connected component as $s$ in graph $r\}$. Thus, $\ell(s)$ contains all labels that occur in the same connected component as $s$ in at least one graph in $R_B$. By the assumption that $R_B \cap \extend_{+t}(\Q) = \emptyset$, together with the precondition that~$R_B \subseteq \multipieces_{+t}(\Q)$ and Lemma~\ref{lem:mpcs:plust:attached:or:extend}, we find that $R_B \subseteq \attlbgraphs$: all connected components of all graphs in~$R_B$ contain a boundary vertex. Hence every label occurring on a vertex in~$R_B$ appears in a common component with some boundary vertex, which implies~$|\bigcup_{s \in S} \ell(s)| > t \cdot \min_{H \in \mathcal{F}}|V(H)|$. By the pigeonhole-principle, there exists~$s \in S$ such that $|\ell(s)| > \min_{H \in \mathcal{F}}|V(H)|$. 
		
		Let~$X'$ be any subset of~$\ell(s)$ of size~$\min_{H \in \mathcal{F}}|V(H)|$. We show that the single-vertex graph with labelset~$X'$ is a minor of~$G - Y$ for any~$Y \in \minFdelsolwith(G_A \oplus G_B \oplus G_C, \Pi_A,\Pi_B,\Pi_C,R_B)$ and thus we can use a~$\Q^*$ of size one and~$\mathcal{R}_N =\emptyset$. Since for each~$x \in X'$ there is a graph in~$R_B$ that has a vertex with label~$x$ in the same component as~$s$, in~$G-Y$ vertex~$s$ is in the same component as one vertex with label~$x$, for all~$x \in X'$. If we contract this component to~$s$ and delete all other components in the graph, we get exactly the desired (unboundaried) minor. Furthermore, since~$G-Y$ always has a~$\Q^*$-minor, any remainder in~$\mathcal{R}$ will contain a graph from~$\extend_{+t}(\Q^*)$ by Lemma~\ref{lem:extend:minor}.
\end{enumerate}

\paragraph*{Step} Having covered the base cases, we continue by describing the induction step. Suppose the lemma statement holds for all cases where $|V(G_A-S)|$ is smaller. We do a case distinction on whether $G_A - S$ is connected or not. Refer to Figure \ref{fig:sketch} for a sketch of the situation in the graph.
\begin{figure}[t]
\centering
\includegraphics{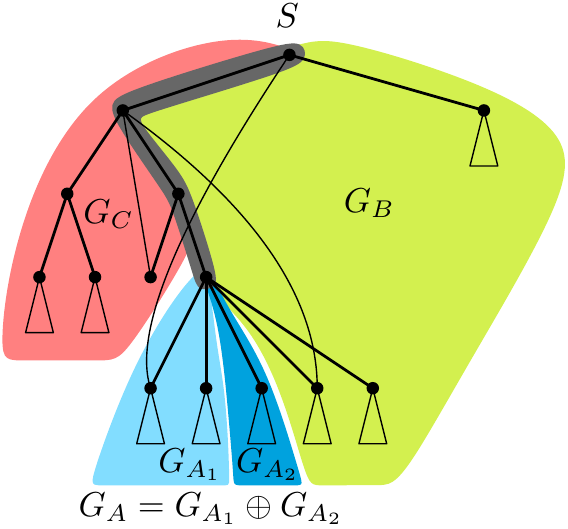}
\caption{A sketch of $G$ and its treedepth decomposition, where $G_A$, $G_B$, and $G_C$ are indicated, as they could be when the lemma is applied inductively, starting from $G_A = G$ and both other graphs empty. The 4-vertex set~$S$ is the common boundary of the three subgraphs.}
\label{fig:sketch}
\end{figure}


\paragraph*{$\mathbf{G_A-S}$ disconnected} Suppose $G_A - S$ consists of multiple components. Order them arbitrarily, and let $C$ be the last component. Define $G_{A_2}$ as the $X$-labeled $t$-boundaried subgraph of $G_A$ induced by $S \cup C$, from which all edges between vertices of $S$ have been removed. Let $G_{A_1} := G_A[V(G_A)\setminus V(C)]$. By this definition, $G_{A_1} \oplus G_{A_2}$ equals $G_A$, and all edges between vertices in $S$ are in $G_{A_1}$. 

In the following procedures and proofs, for $(\piaone,\piatwo)\in\splitPi(\Pi_A)$, let
$\Rtilde$ (respectively $\Rtilde_N$,$\Rtilde_0$) denote the set $\mathcal{R}$ (respectively $\mathcal{R}_0$, $\mathcal{R}_N$) obtained by applying the lemma inductively to the graph $G_{A_2} \oplus G_B \oplus (G_C \oplus G_{A_1})$, with $G_{A'} := G_{A_2}$ and the new $G_{C'}$ as $G_C \oplus G_{A_1}$ and with prohibitions $\piatwo$ and $\Pi_B$ and $\piaone \odot \Pi_C$ and remainder $R_B$. Let $\Qtilde \subseteq \Q$ be a set $\Q^*$ of bounded size as described by the lemma applied to the stated parameters.

Let \Rhat{R} (respectively $ \Rhat{R}_0$, $ \Rhat{R}_N$) denote the set $\mathcal{R}$ (respectively $\mathcal{R}_0$, $\mathcal{R}_N$) obtained in graph $G_{A_1} \oplus (G_{A_2} \oplus G_B)\oplus G_C$ with $G_{A'} := G_{A_1}$ and with prohibitions $\piaone$, $\piatwo \odot \pib$, and $\Pi_C$ and remainder $R$. Let $\Qhat{R} \subseteq \Q$ be a set $\Q^*$ obtained with the same parameters, that has suitably bounded size. It can be found by the induction hypothesis. 

See Figure \ref{fig:division-hat-tilde} for a sketch of the graphs in which \Rtilde and \Rhat{R} are obtained. Using these inductively acquired objects, we define a set $\mathcal{R}'_N \subseteq 2^{\lbgraphs}$ by the following procedure. An intuitive description is provided below its definition.
\begin{figure}
\centering
\includegraphics{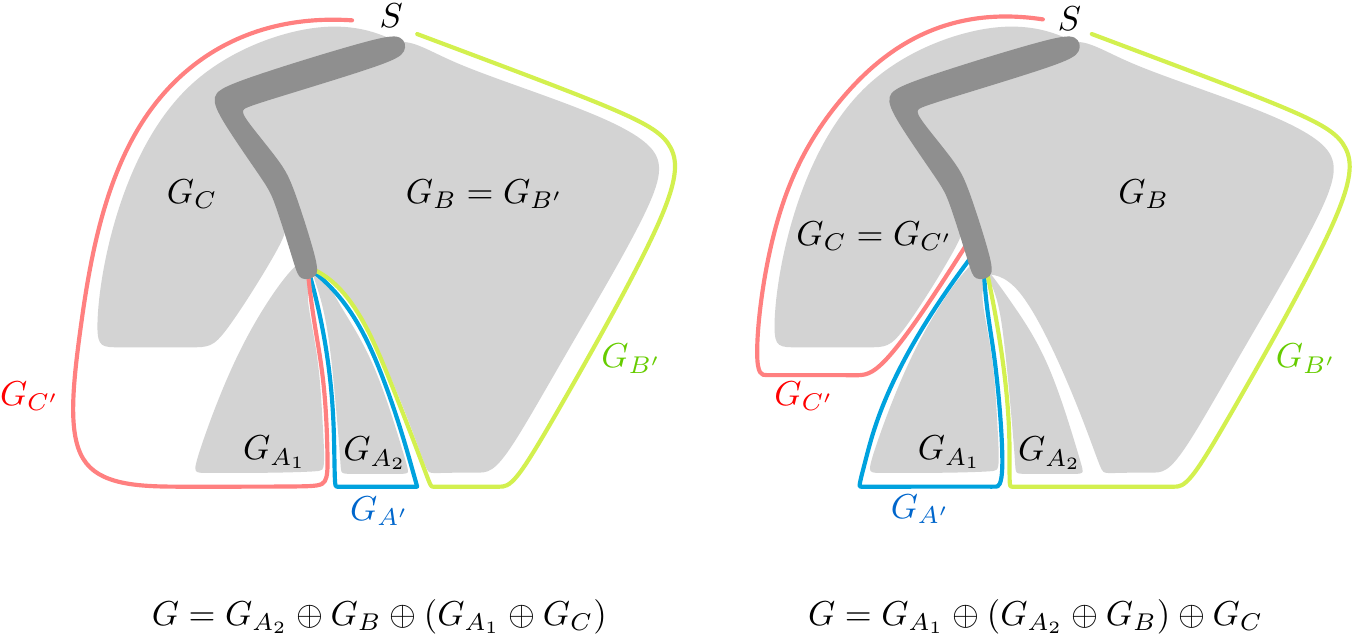}
\caption{On the left the graph in which \Rtilde is obtained is shown, on the right the graph where \Rhat{R} is obtained is given.}
\label{fig:division-hat-tilde}
\end{figure}
%
\begin{enumerate}
\setlength\itemsep{1mm}
\item[]
\item[]  \textsc{Find-Remainders} (disconnected case)
\item \label{step:emptypi}\textbf{if} there exists $\Pi_{A_2}$ such that $(\Pi_A,\Pi_{A_2}) \in \splitPi(\Pi_A)$ \textbf{and} $\minFdel(G_{A_2}, \Pi_{A_2},S) = \minFdel(G_{A_2})$ \textbf{and} $R_B \in \widetilde{\mathcal{R}}^{\piatwo,\Pi_A}$ \textbf{then}
\item \label{step:if-Rhat} \quad Let $\mathcal{R}'_N := \hat{\mathcal{R}}_N^{\Pi_A,\piatwo,R_B}$ and
\item \label{step:if-Qhat}\quad Let $\mathcal{\Q'} := \hat{\mathcal{\Q}}^{\Pi_A,\piatwo,R_B}$
\item \label{step:else}\textbf{else}
\item \label{step:split}\quad \textbf{for each pair} $(\piaone,\piatwo) \in \splitPi(\Pi_A)$, such that there exists
\item[] \quad $Y \in \minFdelsolwith(G_A \oplus G_B \oplus G_C, \Pi_A,\Pi_B,\Pi_C,R_B)$  s.t. $G_{A_1} - Y$ is \piaone-minor free and
\item[]\quad $G_{A_2} - Y$ is \piatwo-minor free \textbf{do}
\item\label{step:tilde} \qquad Add \Qtilde to $\Q'$. Consider \Rtilde
\item \qquad \label{step:foralltilde} \textbf{for each } $R \in \Rtilde_N$ \textbf{do}
\item\label{step:hat} \quad\qquad Add \Rhat{R} to $\mathcal{R}_N'$ and add \Qhat{R} to $\Q'$
\end{enumerate}

It is easy to verify that when the lemma is used inductively in the above procedure, all preconditions for the lemma are satisfied.

Let us try to give some intuition behind this procedure.
Let $R \in \mathcal{R}$ be a remainder. We want to make sure that if $R \in \mathcal{R}_N$, it is added to $\mathcal{R}_N'$ and otherwise some \Q-minor of $R$ is added to $\mathcal{Q}'$. The idea is that there exists a  solution $Y \in \minFdelsolwith(G_A \oplus G_B \oplus G_C, \Pi_A,\Pi_B,\Pi_C, R_B)$
 that corresponds to $R$, such that we can do the following.

Clearly,
 there exist $(\Pi_{A_1},\Pi_{A_2}) \in \splitPi(\Pi_A)$ such that $Y$ breaks $\Pi_{A_1}$ in $G_{A_1}$ and $\Pi_{A_2}$ in $G_{A_2}$. We try all such pairs in line~\ref{step:split}. 
 To do induction it is needed to decrease the number of vertices in $G_A$. To achieve this, we ``split off'' one connected component of $G_A - S$, namely $G_{A_2}$.  The idea is that when $G_{A_2}-Y$ has a $\Q$-minor, then it is added to $\Q'$ in line \ref{step:tilde} and we are done. Otherwise, we  move  $G_{A_2}$ to $G_B$, by considering the graph $G_{A_1} \oplus (G_{A_2} \oplus G_B) \oplus G_C$ with prohibitions $\Pi_{A_1}$, $\Pi_{A_2} \odot \Pi_B$, and $\Pi_C$. Consider $R_B':= \foliotstar((G_{A_2} \oplus G_B) - Y)$. In line \ref{step:foralltilde}, we  ``guess'' this $R_B'$. Then, the remainder $R$ should be a remainder of  $G_{A_1} \oplus (G_{A_2} \oplus G_B) \oplus G_C$, with respect to $R_B'$ and the new prohibitions (note that $G_A \oplus G_B = G_{A_1} \oplus (G_{A_2} \oplus G_B)$), which can be shown by the existence of $Y$. Thus, $R \in \hat{\mathcal{R}}^{\Pi_{A_1},\Pi_{A_2},R_B'}.$ If $R \in \mathcal{R}_N$, then $R$ is correctly added to $\mathcal{R}_N'$ in line \ref{step:hat}. Otherwise, a $\Q$-minor of $R$ is added to $\Q'$ in line \ref{step:hat}. We will use that the induction hypothesis is applied to situations with ``smaller'' parameters to bound the sizes of $\Q'$ and $\mathcal{R}'_N$.

At this point, one might wonder why the if-part of the above procedure (line~\ref{step:emptypi} to~\ref{step:if-Qhat}) exists. The point is that in this case, we cannot guarantee that the parameters for which we apply the induction hypothesis will indeed be strictly ``smaller''. Luckily, in this specific situation we can ``copy'' all remainders from a single application of the induction hypothesis, whose parameters are not worse than the current parameters, which is enough to get the appropriate size bounds.
When the if-condition does not hold and the else-part (starting in line~\ref{step:else}) is executed, we can guarantee that the induction hypothesis is applied to ``smaller'' parameters than the current ones, and use this to argue that their combination will satisfy the required size bound.

The following two claims show that $\mathcal{R}'_N$ and $\Q'$ satisfy the requirements given in~\eqref{req:RprimeN} and~\eqref{req:Qprime}. Claim~\ref{claim:QandRsmall} proves that they satisfy the desired size bound.

\begin{innerclaim}\label{claim:main-lemma-disconnected}
Suppose $G_A - S$ is not connected, and that there exists $\piatwo$ such that $(\Pi_A,\piatwo) \in \splitPi(\Pi_A)$, $\minFdel(G_{A_2}, \piatwo,S) = \minFdel(G_{A_2})$, and  $R_B \in \widetilde{\mathcal{R}}^{\piatwo,\Pi_A}$. Then $\mathcal{R}_N \subseteq \mathcal{R}'_N$ and for each $R \in \mathcal{R}_Q$, there exist $q \in \Q', r \in R$ such that $q\leql\forget(r)$.
\end{innerclaim}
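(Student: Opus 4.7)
The plan is to show that every $R \in \mathcal{R}$, with witnessing optimal solution $Y$, is also a $\subseteq$-minimal remainder of the hat-inductive set-up, thereby placing $R$ in $\hat{\mathcal{R}}^{\Pi_A, \piatwo, R_B}$. The desired inclusion $\mathcal{R}_N \subseteq \mathcal{R}'_N$ then follows immediately from the definition of $\mathcal{R}'_N$ in the if-branch, while for $R \in \mathcal{R}_Q$ the inductive application of Lemma~\ref{lem:subset_Q} to the hat-call produces the required element of $\Q'$.

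Fix $R \in \mathcal{R}$ witnessed by $Y \in \minFdelsolwith(G, \Pi_A, \Pi_B, \Pi_C, R_B)$. The hypothesis $R_B \in \widetilde{\mathcal{R}}^{\piatwo, \Pi_A}$ supplies an optimal global solution $Y^*$ with $Y^* \cap S = \emptyset$ such that $G_{A_2} - Y^*$ is $\piatwo$-minor-free, $G_B - Y^*$ is $\Pi_B$-minor-free, $(G_C \oplus G_{A_1}) - Y^*$ is $(\Pi_A \odot \Pi_C)$-minor-free, $\foliotstar(G_B - Y^*) = R_B$, and $\foliotstar((G_{A_2} \oplus G_B) - Y^*) = R_B$. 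The main construction is the swap $Y' := (Y \setminus V(G_{A_2})) \cup (Y^* \cap V(G_{A_2}))$. The first step is to verify that $Y'$ is a globally optimal $\F$-minor-free deletion: validity follows from Lemma~\ref{lem:mergePi} applied to the four pieces breaking $\Pi_A, \piatwo, \Pi_B, \Pi_C$ in $G_{A_1}, G_{A_2}, G_B, G_C$ respectively, combined with $\Pi_A \odot \piatwo \supseteq \Pi_A$ from Lemma~\ref{lem:merge-vs-split} and Lemma~\ref{lem:extend:minor} to pass from $\extend_{+t}(\F)$-minor-freeness to $\F$-minor-freeness. Optimality $|Y'| = |Y|$ rests on $\minFdel(G_{A_2}, \piatwo, S) = \minFdel(G_{A_2})$: a standard swap argument, inserting a minimum $\piatwo$-breaking $\F$-minor-free deletion of $G_{A_2}$ into either global solution in place of its $G_{A_2}$-part, shows that both $|Y \cap V(G_{A_2})|$ and $|Y^* \cap V(G_{A_2})|$ equal $\minFdel(G_{A_2})$.

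The next step is to check that $Y'$ satisfies the conditions of both the original and the hat $\minFdelsolwith$-sets with remainder $R$. The original conditions follow from those of $Y$ combined with $\Pi_A \odot \piatwo \supseteq \Pi_A$; the hat conditions follow from Lemma~\ref{lem:mergePi} together with an application of Lemma~\ref{lem:replacehalf:preservefolio} using the identity $\foliotstar(G_B - Y') = \foliotstar(G_B - Y^*) = R_B$, which yields $\foliotstar((G_{A_2} \oplus G_B) - Y') = R_B$. A further use of Lemma~\ref{lem:replacehalf:preservefolio} gives $R' := \foliotstar((G_A \oplus G_B) - Y') \subseteq R$; since $R' \in \mathcal{R}_0$ via $Y'$ being a valid original solution, and $R$ is $\subseteq$-minimal in $\mathcal{R}_0$, this forces $R = R'$, so $R \in \hat{\mathcal{R}}_0^{\Pi_A, \piatwo, R_B}$.

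The hard part will be establishing that $R$ is in fact $\subseteq$-minimal in $\hat{\mathcal{R}}_0^{\Pi_A, \piatwo, R_B}$. To rule out a strictly smaller hat-remainder $R'' \subsetneq R$ coming from a hat solution $Y''$, I would mirror the construction above: set $\tilde Y := (Y'' \setminus V(G_{A_2})) \cup (Y^* \cap V(G_{A_2}))$, and argue by the same chain of lemmas -- with additional care in using Lemma~\ref{lem:replacehalf:preservefolio} on $\foliotstar(G_B - Y''_B) \subseteq \foliotstar(G_B - Y^*_B) = R_B$ to preserve the prescribed $G_B$-folio -- that $\tilde Y$ is a valid optimal original solution whose folio is contained in $R''$, contradicting the $\subseteq$-minimality of $R$ in $\mathcal{R}_0$. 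This places $R \in \hat{\mathcal{R}}^{\Pi_A, \piatwo, R_B}$. Since $R \in \mathcal{R}_N$ means $R$ has no $\Q$-minor, we obtain $R \in \hat{\mathcal{R}}_N^{\Pi_A, \piatwo, R_B} = \mathcal{R}'_N$. For $R \in \mathcal{R}_Q$, the same membership places $R$ in $\hat{\mathcal{R}}_Q^{\Pi_A, \piatwo, R_B}$, and invoking Lemma~\ref{lem:subset_Q} inductively on the hat-call yields a suitable $q \in \hat{\Q}^{\Pi_A, \piatwo, R_B} = \Q'$ with $q \leql \forget(r)$ for some $r \in R$.
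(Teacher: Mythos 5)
Your plan — place each $R \in \mathcal{R}$ in the hat call's minimal-remainder set $\hat{\mathcal{R}}^{\Pi_A,\piatwo,R_B}$ and read the claim off the if-branch — is the same as the paper's, and your first swap $Y'$ correctly establishes $R = R' \in \hat{\mathcal{R}}_0^{\Pi_A,\piatwo,R_B}$. The gap is in the minimality argument. Your $\tilde Y := (Y'' \setminus V(G_{A_2})) \cup (Y^* \cap V(G_{A_2}))$ keeps $Y''$ on $G_B$, but $Y''$ is a \emph{hat} solution: it only guarantees $\foliotstar((G_{A_2}\oplus G_B) - Y'') = R_B$, which yields $\foliotstar(G_B - Y'') \subseteq R_B$, possibly strictly (a fragment in $R_B$ may require branch-set vertices in $G_{A_2}$). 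When the inclusion is strict, $\tilde Y$ violates the requirement $\foliotstar(G_B - \tilde Y) = R_B$ for membership in the original $\minFdelsolwith$-set, so it cannot be used to contradict minimality of $R$ in $\mathcal{R}_0$. You observe this inclusion yourself but do not say how to ``preserve the prescribed $G_B$-folio,'' and your stated swap does not do so.

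The fix — which is what the paper does with its $\hat{Y}'$ — is to swap the tilde solution $Y^*$ onto \emph{all} of $V(G_{A_2}\oplus G_B)$, not just $V(G_{A_2})$. Then $\foliotstar(G_B - \tilde Y) = \foliotstar(G_B - Y^*) = R_B$ holds by definition of $Y^*$, and since both $Y^*$ and $Y''$ leave folio exactly $R_B$ on $G_{A_2}\oplus G_B$, Lemma~\ref{lem:replacehalf:preservefolio} (applied with equality across the cut at $G_{A_1}$) makes the resulting remainder in $G_A\oplus G_B$ equal to $R''$, giving the desired contradiction. The paper also routes the final step a little differently — it picks a minimal $\hat{R}'\subseteq\hat{R}$ in $\hat{\mathcal{R}}$, shows $\hat{R}'\in\mathcal{R}_0$ and $\hat{R}'\subseteq R$, and invokes minimality of $R$ in $\mathcal{R}_0$ to force $R=\hat{R}'$ — but your attempt to show $R$ minimal in $\hat{\mathcal{R}}_0$ directly is also viable once the swap region is enlarged.
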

\noindent  Before giving the proof, let us give an outline.
%
This claim considers the situation that a solution exists in $G_{A_2}$ of size $\minFdel(G_{A_2})$ that does not intersect $S$ and has a very large effect in breaking $\Pi_A$ in $G_A$, namely by breaking $\piatwo$ in $G_{A_2}$ such that $\piatwo \odot \Pi_A \supseteq \Pi_A$. The second condition $R_B \in \widetilde{\mathcal{R}}^{\piatwo,\Pi_A}$ states that there exists a minimum $\F$-deletion in $G$ whose restriction to $G_{A_2}$ has these desired properties, and such that the remainder it leaves in $G_{A_2} \oplus G_B$ is exactly the same as the remainder $R_B$ (which equals the remainder that is made in $G_B$ by the solutions we are currently considering). The main idea is that under these conditions, the remainders in $\mathcal{R}$ can all be generated by solutions that break $\piatwo$ in $G_{A_2}$ and leave remainder $R_B$ in $G_{A_2} \oplus G_B$. Hence, a remainder in $\mathcal{R}_N$ is also a remainder in
$\hat{\mathcal{R}}^{\piatwo,\Pi_A,R_B}$ and thereby $R\in \mathcal{R}_N'$ as promised. Furthermore a remainder $R \in \mathcal{R}_{\Q}$ has a $\Q'$-minor since it is contained in
$\hat{\mathcal{R}}^{\piatwo,\Pi_A,R_B}_{\Q}$ and we added set $\hat{\mathcal{Q}}^{\piatwo,\Pi_A,R_B}$ to $\Q'$.

To actually show these properties, let $R \in \mathcal{R}$ and let $Y$ be its corresponding solution, such that $\foliotstar((G_A \oplus G_B) - Y) = R$. Let $Y'$ be the solution that shows that $R_B \in \widetilde{\mathcal{R}}^{\piatwo,\Pi_A}$, such that $\foliotstar((G_{A_2} \oplus G_B) - Y') = R_B$. We combine these solutions to obtain a solution $\hat{Y}$, that equals $Y$ in $G_{A_1}$ and $G_C$ and equals $Y'$ in $G_{A_2}$ and $G_B$. We denote the corresponding remainder by $\hat{R} = \foliotstar((G_A \oplus G_B) - \hat{Y})$ and show that $\hat{R} \subseteq R$.

 We then show that $\hat{R}$ is contained in $\hat{\mathcal{R}}_0^{\Pi_A,\piatwo,R_B}$  in \eqref{eq:hatR_in_Rhat} in the proof. Since we do not necessarily know if $\hat{R}$ is minimal for this set, we replace it by minimal remainder $\hat{R}'$. For this remainder, we have added elements to $\mathcal{R}_N'$ and $\Q'$ as needed. We then show that $\hat{R}' \in \mathcal{R}_0$ in  \eqref{eq:hatRprime_in_R0}, in order to conclude that either $\hat{R}'$ is smaller than $R$ implying $R$ is not minimal (which contradicts that $R \in \mathcal{R}$), or $\hat{R}'$ equals $R$ and we are done.

We regularly use Lemma \ref{lem:mergePi}, saying that if a solution breaks $\Pi_{A_1}$ in $G_{A_1}$ and $\Pi_{A_2} $ in $G_{A_2}$, it breaks $\Pi_{A_1} \odot \Pi_{A_2}$ in $G_{A_1} \oplus G_{A_2}$. Furthermore, we regularly use that if a set breaks $\Pi_A \odot \Pi_B \odot \Pi_C$ in $G$, it breaks all occurrences of $\F$, by $\Pi_A \odot \Pi_B\odot \Pi_C \supseteq \extend_{+t}(\F)$ together with Lemma \ref{lem:extend:minor}. 
We proceed by formalizing these ideas into a rigorous proof.
\begin{innerproof}[{Proof of Claim \ref{claim:main-lemma-disconnected}.}]
Let $R \in \mathcal{R}$, let $Y \in \minFdelsolwith(G_A \oplus G_B \oplus G_C, \Pi_A,\Pi_B,\Pi_C,R_B)$ be a corresponding \Fdel. 
Let $\piatwo$ be given, such that $(\Pi_A,\piatwo) \in \splitPi(\Pi_A)$, $R_B \in \widetilde{R}^{\piatwo,\Pi_A}$ and $\minFdel(G_{A_2},\piatwo,S) = \minFdel(G_{A_2})$.

Let $Y' \in \minFdelsolwith(G_{A_2} \oplus G_B \oplus (G_{A_1} \oplus G_C), \piatwo, \Pi_B,\Pi_C \odot \Pi_A, R_B)$ be a solution showing that $R_B \in \widetilde{\mathcal{R}}^{\piatwo,\Pi_A}$, such that $R_B = \foliotstar((G_{A_2} \oplus G_B)-Y')$. Define $\hat{Y} := (Y' \cap V(G_{A_2} \oplus G_B)) \cup (Y \cap V(G_{A_1} \oplus  G_C))$. Let $\hat{R}$ be the remainder corresponding to this solution, thus $\hat{R} := \foliotstar((G_A \oplus G_B)-\hat{Y})$.

We start by showing that
\begin{equation}\label{eq:hatR_in_Rhat}
\hat{R} \in \hat{\mathcal{R}}_0^{\Pi_A,\piatwo,R_B},
\end{equation}
by showing $\hat{Y} \in \minFdelsolwith(G_{A_1} \oplus (G_{A_2} \oplus G_B)\oplus G_C, \Pi_A, \piatwo\odot \Pi_B,\Pi_C, R_B)$. Since $\hat{Y}$ breaks $\piatwo$ in $G_{A_2}$, $\Pi_B$ in $G_B$, $\Pi_C$ in $G_C$ and $\Pi_A$ in $G_{A_1}$, $\hat{Y}$ breaks $\Pi_A \odot \piatwo \odot \Pi_B \odot \Pi_C$ in $G_A \oplus G_B \oplus G_C$ by Lemma \ref{lem:mergePi}. Since $(\piatwo,\Pi_A) \in \splitPi(\Pi_A)$ by assumption, it follows by Lemma \ref{lem:merge-vs-split} that $\piatwo\odot \Pi_A \supseteq \Pi_A$.
Since $\Pi_A \odot \Pi_B \odot \Pi_C \supseteq \extend_{+t}(\F)$, it follows that $\hat{Y}$ is a solution  in $G$ (meaning that it is an $\F$-minor free deletion) by Lemma \ref{lem:extend:minor}.

By definition, $\hat{Y} \cap S =\emptyset$ and $G_{A_1} -\hat{Y}$ has no $\Pi_A$-minor and $G_C-\hat{Y}$ has no $\Pi_C$-minor. Furthermore, $(G_{A_2} \oplus G_B)-\hat{Y}$ has no $(\piatwo \odot \Pi_B)$-minor, since $(G_{A_2} \oplus G_B)-\hat{Y} = (G_{A_2}-\hat{Y}) \oplus (G_B - \hat{Y})$ and $G_{A_2}-\hat{Y}$ has no $\piatwo$-minor and $G_B-\hat{Y}$ has no $\Pi_B$-minor, this follows from Lemma \ref{lem:mergePi}. The fact that $R_B = \foliotstar((G_{A_2} \oplus G_B) - \hat{Y})$ is clear from $\hat{Y} \cap V(G_{A_2} \oplus G_B) = Y' \cap V(G_{A_2} \oplus G_B)$ and  by definition of $Y'$ that $R_B  = \foliotstar((G_{A_2} \oplus G_B) - Y')$. It remains to show that $\hat{Y}$ has the required size. We use that $|Y| = |Y'| = \minFdel(G)$. Suppose $|\hat{Y}| > |Y|$, then $|Y' \cap V(G_{A_2} \oplus G_B)| > |Y \cap V(G_{A_2} \oplus G_B)|$. But then $Y' \notin \minFdelsol(G)$, since a smaller solution can be obtained
by replacing $Y' \cap V(G_B)$ with $Y \cap V(G_B)$
and replacing $Y' \cap V(G_{A_2})$ by a solution of size $\minFdel(G_{A_2})$ breaking $\piatwo$ in $G_{A_2}$. This replacement yields a solution in $\minFdelsol(G)$, since $\piatwo \odot \Pi_A \supseteq \Pi_A$ by Lemma~\ref{lem:merge-vs-split}, so that all of $\Pi_A \odot \Pi_B \odot \Pi_C \supseteq \extend_{+t}(\F)$ is broken, implying by Lemma~\ref{lem:mergePi} that all $\F$-minors are broken in $G$.

 This is a valid solution, since it gives the same prohibitions, and it is smaller than $Y'$, which is a contradiction. 
Thereby, $\hat{R} \in \hat{\mathcal{R}}_0^{\Pi_A,\piatwo,R_B}$, which establishes \eqref{eq:hatR_in_Rhat}.

By \eqref{eq:hatR_in_Rhat}, there exists a corresponding minimal remainder $\hat{R}' \in \hat{\mathcal{R}}^{\Pi_A,\piatwo,R_B}$ such that $\hat{R}' \subseteq \hat{R}$.
We now show that:
\begin{equation}\label{eq:hatRprime_in_R0}
\hat{R}' \in \mathcal{R}_0.
\end{equation}
Let $Y''$ be a solution corresponding to $\hat{R}'$, thus $Y'' \in \minFdelsolwith(G_{A_1} \oplus (G_{A_2} \oplus G_B) \oplus G_C, \Pi_A,\piatwo\odot \Pi_B,\Pi_C,R_B)$. Define $\hat{Y}' := (Y' \cap V(G_{A_2} \oplus G_B) \cup (Y'' \cap V(G_{A_1} \oplus G_C))$. We show that $\hat{R}' = \foliotstar((G_A \oplus G_B)-\hat{Y}')$  and that $\hat{Y}'$ satisfies the required properties.
Since $\foliotstar((G_{A_2} \oplus G_B)-\hat{Y}') = \foliotstar((G_{A_2} \oplus G_B)-Y') = R_B$ and
$G_{A_1} - Y'' = G_{A_1} - \hat{Y}'$, it follows from Lemma \ref{lem:replacehalf:preservefolio} that $\hat{R}'$ is a remainder corresponding to $\hat{Y}'$ in $G_A \oplus G_B$.

Now we show that $\hat{Y}' \in \minFdelsolwith(G_A \oplus G_B \oplus G_C ,\Pi_A,\Pi_B, \Pi_C,R_B)$, such that $\hat{R}' \in \mathcal{R}_0$ follows.
By definition, $\hat{Y}' \cap S = \emptyset$. Furthermore, $\hat{Y}'$ breaks $\piatwo$ in $G_{A_2}$ since $\hat{Y}' \cap V(G_{A_2}) =  Y' \cap V(G_{A_2})$ which breaks $\piatwo$ in $G_{A_2}$ by definition. Also, $\hat{Y}'$ breaks $\Pi_A$ in $G_{A_1}$, since $\hat{Y}' \cap V(G_{A_1}) = Y'' \cap V(G_{A_1})$ which breaks $\Pi_A$ in $V(G_{A_1})$ by definition. We conclude $\hat{Y}'$ breaks $\Pi_A$ in $G_A$. Furthermore $\hat{Y}'$ breaks $\Pi_B$ in $G_B$, since $\hat{Y}' \cap V(G_B) = Y' \cap G_B$ which breaks $\Pi_B$ by definition. Similarly, $\hat{Y}' \cap V(G_C) = Y''\cap V(G_C)$, which breaks $\Pi_C$ in $G_C$ by definition.
The requirement for $R_B$ is satisfied since $\hat{Y}' \cap V(G_B) = Y' \cap V(G_B)$ and $Y'$ satisfies the requirement for $R_B$.
That $\hat{Y}'$ is a solution follows from the fact that $\hat{Y}'$ gives the required prohibitions in $G_A$, $G_B$ and $G_C$. It remains to show that $\hat{Y}'$ has minimal size, we will use that  $|Y''| = |Y'|= \minFdel(G_A \oplus G_B \oplus G_C)$. Suppose $|\hat{Y}'| > |Y''|$, then $|Y' \cap V(G_{A_2} \oplus G_B)| > |Y'' \cap V(G_{A_2} \oplus G_B)|$. But then $Y'$ is not minimal, as a smaller solution can be obtained by taking $Y''$ in $G_B$ and  a solution breaking $\piatwo$ in $G_{A_2}$ of size $\minFdel(G_{A_2})$. This solution breaks $\piatwo \odot \Pi_B$ in $G_{A_2} \oplus G_B$ and $\Pi_A \odot \Pi_C$ in $G_{A_1} \oplus G_C$, thus it is a valid solution. Thus, $|\hat{Y}'| = \minFdel(G_A \oplus G_B \oplus G_C)$.
Thereby, $\hat{R}' \in \mathcal{R}_0$, which proves \eqref{eq:hatRprime_in_R0}.

Furthermore, we prove
\begin{equation}\label{eq:hatRsmallerthanR}
\hat{R}' \subseteq \hat{R} \subseteq R.
\end{equation}
$\hat{R}' \subseteq \hat{R}$ is true by definition.
Since $G_{A_1} - Y = G_{A_1} - \hat{Y}$ and $\foliotstar((G_{A_2} \oplus G_B) - \hat{Y}) = R_B \subseteq \foliotstar((G_{A_2} \oplus G_B) - Y)$, it follows that $\hat{R} \subseteq R$ from Lemma \ref{lem:replacehalf:preservefolio}.

We can now show that
\begin{equation}\label{eq:RN}
R \in \mathcal{R}_N \Rightarrow R \in \mathcal{R}_N'.
\end{equation}
If $R \in \mathcal{R}_N$, it follows that $\hat{R}'$ has no $\Q$-minor, thus $\hat{R}' \in \hat{\mathcal{R}}^{\Pi_A,\piatwo,R_B}_N$ and it was added to $\mathcal{R}'_N$ in line~\ref{step:if-Rhat}. It follows from~\eqref{eq:hatRprime_in_R0} and~\eqref{eq:hatRsmallerthanR} that either $R$ is not minimal, which is a contradiction, or $R \in \mathcal{R}_N'$.

Finally, we prove
\begin{equation}\label{eq:RQ}
R \in \mathcal{R}_{\Q} \Rightarrow \exists q \in \Q',r\in R: q \leqlb \forget(r).
\end{equation}
If $R \in \mathcal{R}_{\Q}$, and $\hat{R}'$ has no $\Q$-minor, it follows from~\eqref{eq:hatRprime_in_R0} and~\eqref{eq:hatRsmallerthanR} that $R$ was not minimal in $\mathcal{R}_0$, which is a contradiction with $R \in \mathcal{R}$.
If $\hat{R}'$ does have a $\Q$-minor, one of these minors was added to $\Q'$ in line \ref{step:if-Qhat}, and this minor is also a minor of some graph in $R$.

The lemma statement now follows from  \eqref{eq:RN} and \eqref{eq:RQ}.
\end{innerproof}

\begin{innerclaim}\label{claim:main-lemma-disconnected-else}
Suppose $G_A - S$ is not connected, and furthermore that there exists no $\piatwo$ such that $(\Pi_A,\piatwo) \in \splitPi(\Pi_A)$, $\minFdel(G_{A_2}, \piatwo,S) = \minFdel(G_{A_2})$, and  $R_B \in \widetilde{\mathcal{R}}^{\piatwo,\Pi_A}$. Then $\mathcal{R}_N \subseteq \mathcal{R}'_N$ and for each $R \in \mathcal{R}_Q$, there exist $q \in \Q', r \in R$ such that $q \leql \forget(r)$.
\end{innerclaim}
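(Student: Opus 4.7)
The plan is to mimic the strategy of Claim~\ref{claim:main-lemma-disconnected}, adapted to the else-branch---where no $\piatwo$ meets all three conditions of the if-test---so that the desired remainders and $\Q$-minors must be produced by explicitly traversing the loop at lines~\ref{step:split}--\ref{step:hat}. I would fix $R\in\mathcal{R}$ with corresponding \Fdel $Y$. Since $G_A-Y$ has no $\Pi_A$-minor and $G_A=G_{A_1}\oplus G_{A_2}$, Lemma~\ref{lem:splitPi} yields a pair $(\piaone,\piatwo)\in\splitPi(\Pi_A)$ for which $G_{A_1}-Y$ has no $\piaone$-minor and $G_{A_2}-Y$ has no $\piatwo$-minor. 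This pair is witnessed by $Y$, so it is iterated over in line~\ref{step:split} and $\Qtilde$ is added to $\Q'$.

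I would then set $R_B':=\foliotstar((G_{A_2}\oplus G_B)-Y)$ and, using Lemma~\ref{lem:mergePi} to see that $Y$ breaks $\piaone\odot\Pi_C$ on $G_{A_1}\oplus G_C$ (and noting that $Y$ remains minimum in $G$ under the regrouping $G_{A_2}\oplus G_B\oplus(G_C\oplus G_{A_1})$), conclude $R_B'\in\widetilde{\mathcal{R}}_0^{\piatwo,\piaone}$. Pick a $\subseteq$-minimal $\tilde{R}'\subseteq R_B'$ with $\tilde{R}'\in\Rtilde$ and let $\tilde{Y}$ be a corresponding \Fdel. If $\tilde{R}'\in\Rtilde_Q$, the inductive hypothesis produces $q\in\Qtilde\subseteq\Q'$ with $q\leql\forget(r)$ for some $r\in\tilde{R}'\subseteq R_B'\subseteq R$ (the last inclusion is immediate because $(G_{A_2}\oplus G_B)-Y$ is a subgraph of $(G_A\oplus G_B)-Y$); this settles the $R\in\mathcal{R}_Q$ case and rules out $R\in\mathcal{R}_N$. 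Otherwise $\tilde{R}'\in\Rtilde_N$, and I would form the stitched solution $\hat{Y}:=(\tilde{Y}\cap V(G_{A_2}\oplus G_B))\cup(Y\cap V(G_{A_1}\oplus G_C))$. All prohibition and $\tilde{R}'$-remainder conditions required to place $\hat{Y}$ in $\Rhat{\tilde{R}'}_0$ follow routinely from the choice of $(\piaone,\piatwo)$ together with Lemma~\ref{lem:mergePi}; the real point is $|\hat{Y}|=\minFdel(G)$, which I would prove by a cross-swap: if $|\hat{Y}|>\minFdel(G)$, then $(Y\cap V(G_{A_2}\oplus G_B))\cup(\tilde{Y}\cap V(G_{A_1}\oplus G_C))$ is strictly smaller than $\tilde{Y}$ while still breaking $\piatwo\odot\Pi_B\odot\piaone\odot\Pi_C\supseteq\Pi_A\odot\Pi_B\odot\Pi_C\supseteq\extend_{+t}(\F)$ (via Lemma~\ref{lem:merge-vs-split} and Lemma~\ref{lem:mergePi}), contradicting minimality of $\tilde{Y}$.

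Hence $\hat{R}:=\foliotstar((G_A\oplus G_B)-\hat{Y})\in\Rhat{\tilde{R}'}_0$, and a $\subseteq$-minimal $\hat{R}'\subseteq\hat{R}$ with $\hat{R}'\in\Rhat{\tilde{R}'}$ exists. Applying Lemma~\ref{lem:replacehalf:preservefolio} with $G_\alpha=G_{A_1}-Y=G_{A_1}-\hat{Y}$, and comparing $(G_{A_2}\oplus G_B)-\hat{Y}$ (whose folio$^*$ is $\tilde{R}'$) against $(G_{A_2}\oplus G_B)-Y$ (whose folio$^*$ is $R_B'\supseteq\tilde{R}'$), gives $\hat{R}\subseteq R$, so also $\hat{R}'\subseteq R$. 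To show $\hat{R}'\in\mathcal{R}_0$, I would pick $Y''$ corresponding to $\hat{R}'$ in $\Rhat{\tilde{R}'}_0$ and form $\hat{Y}':=(\tilde{Y}\cap V(G_{A_2}\oplus G_B))\cup(Y''\cap V(G_{A_1}\oplus G_C))$; an analogous cross-swap between the $G_{A_1}\oplus G_C$-parts of $\tilde{Y}$ and $Y''$ contradicts minimality of $Y''$ if $|\hat{Y}'|>\minFdel(G)$, and Lemma~\ref{lem:replacehalf:preservefolio} identifies the folio$^*$ of $(G_A\oplus G_B)-\hat{Y}'$ as $\hat{R}'$. The $\subseteq$-minimality of $R$ in $\mathcal{R}$ then forces $\hat{R}'=R$, so $R\in\Rhat{\tilde{R}'}$: if $R\in\mathcal{R}_N$ then $R\in\Rhat{\tilde{R}'}_N\subseteq\mathcal{R}'_N$ (added in line~\ref{step:hat}); if $R\in\mathcal{R}_Q$ then $R\in\Rhat{\tilde{R}'}_Q$ and $\Qhat{\tilde{R}'}\subseteq\Q'$ supplies a $q\in\Q'$ with $q\leql\forget(r)$ for some $r\in R$. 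The main obstacle compared to Claim~\ref{claim:main-lemma-disconnected} is the loss of the hypothesis $\minFdel(G_{A_2},\piatwo,S)=\minFdel(G_{A_2})$: both minimality checks must therefore be carried out purely through the above cross-swaps, relying only on $\piaone\odot\piatwo\supseteq\Pi_A$ (Lemma~\ref{lem:merge-vs-split}) together with $\Pi_A\odot\Pi_B\odot\Pi_C\supseteq\extend_{+t}(\F)$.
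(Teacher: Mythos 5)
Your proposal is correct and matches the paper's own proof essentially step for step: the split via Lemma~\ref{lem:splitPi}, the passage to a minimal $\tilde{R}'$ in $\Rtilde$ with the $\Q$-minor case handled via $\Qtilde$, the stitched solution $\hat{Y}$ with the cross-swap size argument, the containment $\hat{R}\subseteq R$ via Lemma~\ref{lem:replacehalf:preservefolio}, the second stitched solution $\hat{Y}'$ to show $\hat{R}'\in\mathcal{R}_0$, and the final minimality argument. The only cosmetic difference is that you make the inclusion $R_B'\subseteq R$ (which the paper leaves implicit) explicit, and you state both cross-swaps symmetrically rather than phrasing one as contradicting the optimality of $Y'$.
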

\noindent Before stating the proof, we will first give an outline.
The precondition for this claim that there exists no $\piatwo$ such that $(\Pi_A,\piatwo) \in \splitPi(\Pi_A)$, $\minFdel(G_{A_2}, \piatwo,S) = \minFdel(G_{A_2})$, and  $R_B \in \widetilde{\mathcal{R}}^{\piatwo,\Pi_A}$, is essential to show that we perform lines \ref{step:else} to \ref{step:hat} of the procedure, but is not needed for the proof of Claim \ref{claim:main-lemma-disconnected-else}; it will be used later when bounding the size of $\mathcal{R}'_N$ and $\Q'$. The claim then states that $\mathcal{R}_N'$ and $\Q'$ satisfy requirements~\eqref{req:RprimeN} and~\eqref{req:Qprime}. Let $R \in \mathcal{R}$ be a remainder in $G_A \oplus G_B \oplus G_C$ corresponding to some solution~$Y$, such that~$\foliotstar(G_B - Y) = R_B$. We show how to find remainder $\widetilde{R}$, such that $\widetilde{R}$ is a remainder for $G_{A_2} \oplus G_B \oplus (G_{A_1} \oplus G_C)$ and furthermore $R$ is a remainder in $G_{A_1} \oplus (G_{A_2} \oplus G_B)\oplus G_C$ for some corresponding solution whose removal leaves exactly $\widetilde{R}$ in $G_{A_2} \oplus G_B$.

We define $\widetilde{R} = \foliotstar((G_{A_2} \oplus G_B)-Y)$ as the remainder left by $Y$ in $G_{A_2} \oplus G_B$. Then, we show $\widetilde{R} \in \Rtilde_0$ in  \eqref{eq:tildeR_in_Rtilde0}. We find a corresponding minimal remainder $\widetilde{R}'\subseteq \widetilde{R}$, let $Y'$ be its corresponding solution. If this remainder now has a $\Q$-minor, such a minor was added to $\Q'$ and we are done. Otherwise, we find a \Fdel $\hat{Y}$, that is equal to $Y$ in $G_{A_1}$ and $G_C$, but equal to $Y'$ in $G_{A_2} \oplus G_B$. The idea is that $\hat{Y}$ leaves the same minors in $G_{A_1} \oplus G_C$ as $Y$, but leaves a ``smaller'' remainder in $G_{A_2} \oplus G_B$.
Let $\hat{R} := \foliotstar((G_A \oplus G_B) - \hat{Y})$ be its corresponding remainder. We then show $\hat{R} \in \Rhat{\widetilde{R}'}_0$ in  \eqref{eq:hatR_in_RhattildeRprime0}. Thus there exist a minimal $\hat{R}' \in \Rhat{\widetilde{R}'}$ with $\hat{R}' \subseteq \hat{R}$, for which we added elements to $\mathcal{R}_N'$ and $\Q'$. We then show that $\hat{R}' \in \mathcal{R}_0$ in \eqref{eq:Rprime_in_R0_loop}.  We use this to conclude that either $R$ is not minimal, or it is in $\mathcal{R}_N'$, or a minor of it is in $\Q'$. We (sometimes implicitly) use the same set of lemmas as in the proof of Claim \ref{claim:main-lemma-disconnected}. We now give the formal proof.
\begin{innerproof}[Proof of Claim \ref{claim:main-lemma-disconnected-else}.]
Let $R \in \mathcal{R}$, let $Y \in \minFdelsolwith(G_A \oplus G_B \oplus G_C, \Pi_A,\Pi_B,\Pi_C,R_B)$ be its corresponding \Fdel.
Since $G_A-Y$ has no $\Pi_A$-minor and $G_A - Y = (G_{A_1} - Y) \oplus (G_{A_2} - Y)$, by Lemma \ref{lem:splitPi} there exist $(\piaone, \piatwo) \in \splitPi(\Pi_A)$ such that $G_{A_1}-Y$ has no $\piaone$-minor and $G_{A_2}-Y$ has no \piatwo-minor. By the existence of $Y$, there is an optimal solution in $G_A \oplus G_B \oplus G_C$ avoiding $S$ and breaking $\piatwo$ in $G_{A_2}$ and $\piaone$ in $G_{A_1}$. Since furthermore the condition of the if-statement in line \ref{step:emptypi} was false, we applied Step~\ref{step:tilde} of the algorithm for $(\piaone, \piatwo)$.

Let $\widetilde{R} := \foliotstar((G_{A_2} \oplus G_B) - Y)$. We show that
\begin{equation}\label{eq:tildeR_in_Rtilde0}
\widetilde{R} \in \Rtilde_0,
 \end{equation}
  by showing that it has corresponding solution $Y \in \minFdelsolwith(G_{A_2} \oplus G_B \oplus (G_C \oplus G_{A_1}), \piatwo,\allowbreak \Pi_B, \allowbreak \Pi_C \odot \Pi_{A_1}, R_B)$. By definition of $Y$, $Y \cap S = \emptyset$, $Y$ is a solution and $Y$ breaks $\piatwo$ in $G_{A_2}$ and $\Pi_B$ in $G_B$. Furthermore it breaks $\Pi_C \odot \piaone$ in $G_C \oplus G_{A_1}$ by Lemma \ref{lem:mergePi}, since it breaks $\piaone$ in $G_{A_1}$ and $\Pi_C$ in $G_C$. The size bound on $Y$ and property of $R_B$ follow from the fact that $Y$ is given to be a corresponding solution for $R$. Thereby, \eqref{eq:tildeR_in_Rtilde0} follows.

We have just proven that $\widetilde{R} \in \Rtilde_0$. Thereby, there exists $\widetilde{R}' \in \Rtilde$ such that $\widetilde{R}' \subseteq \widetilde{R}$.
Suppose $\widetilde{R}'$ has a $\Q$-minor. Then we added one such minor to $\Q'$ in line \ref{step:tilde}. Since $\widetilde{R}' \subseteq \widetilde{R}$ this implies $\widetilde{R}$ also has this minor and thus $R \in \mathcal{R}_Q$ has this minor, as required.

Suppose $\widetilde{R}'$ has no $\Q$-minor, thereby $\widetilde{R}' \in \Rtilde_N$.
Consider a \Fdel $Y' \in \minFdelsolwith(G_{A_2} \oplus G_B \oplus (G_{A_1} \oplus G_C),\piatwo, \Pi_B,\piaone \odot \Pi_C, R_B)$ corresponding to $\widetilde{R}'$.

Define $\hat{Y} := (Y' \cap V(G_{A_2} \oplus G_B )) \cup (Y\cap V(G_{A_1}\oplus G_C))$. Let $\hat{R} := \foliotstar((G_A \oplus G_B) - \hat{Y})$. Then we show that
\begin{equation}\label{eq:hatRsmallerthanR2}
\hat{R} \subseteq R.
\end{equation}
By definition,  $\folio(G_{A_1} - \hat{Y}) = \folio(G_{A_1} - Y)$ and $\foliotstar((G_{A_2} \oplus G_B) - \hat{Y}) = \widetilde{R}' \subseteq \widetilde{R}= \foliotstar((G_{A_2} \oplus G_B) - Y)$. Now $\hat{R} \subseteq R$  follows from Lemma \ref{lem:replacehalf:preservefolio}.

We applied Step \ref{step:hat} of the algorithm with remainder $\widetilde{R}'$. We now show that
\begin{equation}\label{eq:hatR_in_RhattildeRprime0}
\hat{R} \in \Rhat{\widetilde{R}'}_0,
\end{equation}
by proving  $\hat{Y} \in \minFdelsolwith(G_{A_1} \oplus (G_{A_2} \oplus G_B) \oplus G_C ,\piaone,\piatwo \odot \Pi_B,\Pi_C,\widetilde{R}')$. By definition, $\hat{Y}$ breaks $\piatwo$ in $G_{A_2}$ and $\Pi_B$ in $G_B$. Furthermore, $\hat{Y}$ breaks \piaone in $G_{A_1}$ by definition of $Y$, and $\Pi_C$ in $G_C$. It follows that $\hat{Y}$ breaks $\Pi_B \odot \piatwo$ in $G_{A_2} \oplus G_B$ as needed, and that $\hat{Y}$ is a solution as it breaks $\Pi_A \odot \Pi_B \odot \Pi_C \supseteq \extend_{+t}(\mathcal{F})$.

It is easy to see that $|\hat{Y}| \leq |Y|$, otherwise $|Y'\cap V(G_{A_2} \oplus G_B)| > |Y \cap V(G_{A_2} \oplus G_B)|$. Since $Y$ gives prohibition $\piatwo$ in $G_{A_2}$ and $\Pi_B$ in $G_B$, just like $Y'$, this would contradict the optimality of $Y'$. Thereby $|\hat{Y}| = \minFdel(G_A \oplus G_B \oplus G_C)$. Since $\hat{Y} \cap (G_{A_2} \oplus G_B) = Y' \cap (G_{A_2} \oplus G_B)$, it follows that $\widetilde{R}' = \foliotstar((G_{A_2} \oplus G_B) -\hat{Y})$ as required.
Thereby, $\hat{R} \in \Rhat{\widetilde{R}'}_0$.

Thus there exists $\hat{R}' \subseteq  \hat{R}$ with $\hat{R}' \in \Rhat{\widetilde{R}'}$. We show that
\begin{equation}\label{eq:Rprime_in_R0_loop}
\hat{R}' \in \mathcal{R}_0.
\end{equation}
Let $Y''$ be a solution corresponding to $\hat{R}'$, thus $Y'' \in \minFdelsolwith(G_{A_1} \oplus (G_{A_2} \oplus G_B) \oplus G_C, \piaone, \piatwo \odot \Pi_B, \Pi_C    ,\widetilde{R}')$. Let $\hat{Y}' := (\hat{Y} \cap V(G_{A_2} \oplus G_B)) \cup (Y'' \cap V(G_{A_1} \oplus G_C))$. We then show that $\hat{Y}' \in \minFdelsolwith(G_A \oplus G_B \oplus G_C, \Pi_A,\Pi_B,\Pi_C,R_B)$ and that $\hat{R}'$ is its corresponding remainder in $G_A \oplus G_B$. Since $\folio(G_{A_1} - \hat{Y}') = \folio(G_{A_1} - Y'')$ and $\foliotstar((G_{A_2} \oplus G_B) - \hat{Y}') = \foliotstar((G_{A_2} \oplus G_B) - \hat{Y})  = \widetilde{R}'$, it follows that $\hat{R}'  = \foliotstar((G_A \oplus G_B) - \hat{Y}')$ by applying Lemma \ref{lem:replacehalf:preservefolio}. It remains to show that $\hat{Y}'$ has the required properties. Clearly, $\hat{Y}' \cap S = \emptyset$. Furthermore, $\hat{Y'}$ breaks $\Pi_A$ in $G_A$, since $V(G_{A_2}) \cap \hat{Y}' = V(G_{A_2}) \cap \hat{Y}$ and $\hat{Y}$ breaks $\piatwo$ in $G_{A_2}$ and $V(G_{A_1})\cap \hat{Y}' = V(G_{A_1}) \cap Y''$ and $Y''$ breaks $\piaone$ in $V(G_{A_1})$, and $(\piaone, \piatwo) \in \splitPi(\Pi_A)$. Furthermore, $\hat{Y}'$ breaks $\Pi_B$ in $G_B$, since $\hat{Y}' \cap V(G_B) = \hat{Y} \cap V(G_B) = Y'\cap V(G_B)$. Furthermore, $\hat{Y}'$ breaks $\Pi_C$ in $G_C$, since $Y''$ breaks $\Pi_C$ in $G_C$. Since $\hat{Y}'$ breaks the required prohibitions, it follows that it is indeed a valid $\F$-minor free deletion of $G_A \oplus G_B \oplus G_C$. The requirement for $R_B$ follows from $\hat{Y}' \cap V(G_B) = \hat{Y} \cap V(G_B) = Y'\cap V(G_B)$ and the definition of $Y'$. It remains to show that $\hat{Y}'$ has the required size, we will use that $|\hat{Y}| = |Y''| = \minFdel(G_A \oplus G_B \oplus G_C)$. Suppose $|\hat{Y}'| > |Y''|$. But then $|\hat{Y} \cap V(G_{A_2} \oplus G_B)| > |Y'' \cap V(G_{A_2} \oplus G_B)|$. But since $Y''$ gives the same prohibition in $G_{A_2} \oplus G_B$ as $\hat{Y}$, we can replace the part of $\hat{Y}$ in $G_{A_2} \oplus G_B$ by $Y''$ to obtain a better solution, which is a contradiction. Thus $|\hat{Y}'| = \minFdel(G_A \oplus G_B \oplus G_C)$.
We can now conclude that $\hat{R}' \in \mathcal{R}_0$.

If $R \neq \hat{R}'$, it follows from $\hat{R}' \subseteq \hat{R}$ (by definition), $\hat{R} \subseteq R$ \eqref{eq:hatRsmallerthanR2}, and that $\hat{R}' \in \mathcal{R}_0$ \eqref{eq:Rprime_in_R0_loop} that $R$ is not minimal and thus $R \notin \mathcal{R}$, which is a contradiction.

Otherwise, we have $\hat{R}' = \hat{R} = R \in \Rhat{\widetilde{R}'}$, so for that choice of $\widetilde{R}' \in \Rtilde_N$ in line \ref{step:foralltilde}, we have added a set of remainders to $\mathcal{R}'$ in line \ref{step:hat} that includes $\hat{R}' = R$ (if $R$ has no $\Q$-minor, thus $R \in \mathcal{R}_N$) or we have added one of its $\Q$-minors to $\Q'$ (if $R$ does have a $\Q$-minor, meaning $R \in \mathcal{R}_Q$).
\end{innerproof}

\paragraph*{$\mathbf{G_A-S}$ connected}

Having concluded the case of the induction step that~$G_A - S$ is disconnected, while the base case covered the setting that~$G_A - S$ is empty, we are left with the case that~$G_A - S$ is connected and non-empty. Let $s$ be a vertex whose removal from~$G_A - S$ decreases its treedepth, which exists by Definition~\ref{def:treedepth}. To find all remainders in the case that $G_A-S$ is connected, we will combine the remainders corresponding to optimal solutions that contain $s$, with those corresponding to solutions that avoid $s$. The first case is easy: to find remainders corresponding to solutions that contain $s$, we will simply apply induction on the graph $G-\{s\}$. The second case is slightly more involved, where we do induction by adding $s$ to $S$ (as these solutions avoid using $s$). Some additional work needs to be done here, to find the correct remainders for the original graph, that was only $t$-boundaried, while the graph to which we apply induction is $(t+1)$-boundaried. Let $\mathcal{R}'_N$ and $\mathcal{Q}'$ be defined by the following procedure.
\begin{enumerate}
\setlength\itemsep{1mm}
\item[]
\item[]  \textsc{Find-Remainders} (connected case)
\item \textbf{if} There exists $Y \in \minFdelsolwith(G_A \oplus G_B \oplus G_C , \Pi_A,\Pi_B, \Pi_C, R_B)$ with $s \in Y$
 \item \quad Let $G_{A'}$ be $G_A  - \{s\}$
\item\label{step:minus} \quad Let $\mathcal{R}_N^-$ and $\Q^-$ be sets obtained by applying Lemma \ref{lem:subset_Q} inductively on $G_{A'} \oplus G_B \oplus G_C$
\item[] \quad with $S$ and $R_B$ unchanged
\item \textbf{if} There exists $Y \in \minFdelsolwith(G_A \oplus G_B \oplus G_C , \Pi_A,\Pi_B, \Pi_C, R_B)$ with $s \notin Y$
\item \quad Let $\Pi_{A'} := \extend_{+1}(\Pi_A)$
\item \quad Let $\Pi_{B'}:= \extend_{+1}(\Pi_B)\cup (\nonisobgraphs{t+1} \cap \multipieces_{+t+1}(\F))$ \emph{(recall Definition \ref{def:graphs})}
\item \quad Let $\Pi_{C'}:= \extend_{+1}(\Pi_C)\cup (\nonisobgraphs{t+1} \cap \multipieces_{+t+1}(\F))$
\item \quad Let $G_{A'}$ be $G_A$ with boundary label $t+1$ assigned to vertex $s$, thus $\bound_{G_{A'}}(s) = t+1$.
\item \quad Let $G' := G_{A'} \oplus G_{B}\oplus G_{C}$
\item []\emph{(Observe that $\forget(G',t) = G$ and recall that any $t$-boundaried graph can be interpreted as being $(t+1)$-boundaried)}
\item\label{step:plus} \quad Compute $\mathcal{R}'^+_N,\Q^+$ for $\Pi_{A'},\Pi_{B'},\Pi_{C'}, R_B$ on graph $G'$ with $S'$ as the boundary of $G'$
\item \label{item:connected:15} \quad Let \[\mathcal{R}^+_N := \bigcup_{R' \in \mathcal{R}'^+_N}\foliotstar(\forget(R',t))\]
\item \label{item:connected:16} Let $\mathcal{R}'_N := \mathcal{R}^+_N \cup \mathcal{R}^-_N$
\item Let $\Q' := \Q^- \cup \Q^+$
\end{enumerate}



We start by arguing  that in every step where the induction hypothesis is applied, all preconditions are  satisfied. In Step~\ref{step:minus}, this is trivially true, Step~\ref{step:plus} is more interesting. We argue:
\begin{itemize}
\item $\td(G) \geq \td(G_{A'}-S') + |S'|$. We know that $|S'| = |S| +1$ and $s$ was chosen such that $\td(G_{A'}-S') = \td(G_A - S) - 1$. The statement now follows from the precondition that $\td(G) \geq \td(G_{A}-S) + |S'|$.
\item $\Pi_{A'}\odot\Pi_{B'}\odot\Pi_{C'} \supseteq \extend_{+t+1}(\F)$ by Lemma \ref{lem:prohibitions:extendplusone} and the fact that $\Pi_{A}\odot\Pi_{B}\odot\Pi_{C} \supseteq \extend_{+t}(\F)$.
\item $\Pi_{A'} \subseteq \multipieces_{+t+1}(\F)$, this immediately follows from Lemma \ref{lem:mpcs:plusone}.
\item $\Pi_{B'} \subseteq \multipieces_{+t+1}(\F)$ and $\Pi_{C'} \subseteq \multipieces_{+t+1}(\F)$. This immediately follows from the definition together with Lemma \ref{lem:mpcs:plusone}.
\item $R_{B} \subseteq \multipieces_{+t+1}(\Q)$.  By definition of $R_B$, $R_B \subseteq \multipieces_{+t}(\Q)$, it follows from the definition of $\multipieces_{+t+1}$ that $R_B \subseteq \multipieces_{+t+1}(\Q)$. 
\end{itemize}

\begin{innerclaim}\label{claim:main-lemma-connected-case}
Suppose $G_A - S$ is connected, then $\mathcal{R}_N \subseteq \mathcal{R}'_N$ and for each $R \in \mathcal{R}_Q$ there exists $q \in \Q'$ and $r \in R$ such that $q \leql \forget(r)$.
\end{innerclaim}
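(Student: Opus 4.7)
The plan is to take any $R \in \mathcal R$ with corresponding solution $Y$ and dispatch it by case analysis on whether $s \in Y$ (matching Step~\ref{step:minus}) or $s \notin Y$ (matching Step~\ref{step:plus}). In both subcases, the induction hypothesis on the recursive call will supply the bulk of the work.

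For $s \in Y$, I will set $Y^- := Y \setminus \{s\}$ and verify that $Y^-$ is an optimal solution for $G_{A'} \oplus G_B \oplus G_C$ with $G_{A'} := G_A - \{s\}$ respecting the prohibitions $\Pi_A, \Pi_B, \Pi_C$ and leaving $R_B$ unchanged, using $G_A - Y = G_{A'} - Y^-$. This will place $R$ into $\mathcal{R}_0^-$. To argue that $R$ is also minimal there, I will take any minimal $R^- \subseteq R$ in $\mathcal{R}_0^-$, lift its witness by re-inserting $s$ (which still delivers minimum size and the correct prohibitions), and conclude $R^- \in \mathcal{R}_0$; minimality of $R$ will then force $R = R^-$. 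Applying the induction hypothesis for Step~\ref{step:minus} will close the subcase.

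For $s \notin Y$, I will view $Y$ as a solution in the $(t+1)$-boundaried graph $G'$ of Step~\ref{step:plus}. The key checks are that $Y$ respects the lifted prohibitions: a hypothetical $\Pi_{A'}$-minor in $G_{A'} - Y$ would descend via the observation after Definition~\ref{def:extend} to a $\Pi_A$-minor of $G_A - Y$, contradicting $Y$; the $\nonisobgraphs{t+1}$ part of $\Pi_{B'}, \Pi_{C'}$ is harmless because $G_B$ and $G_C$ have no vertex carrying boundary index $t+1$ and minor operations cannot create one; and Lemma~\ref{item:remove_boundary_vertex_reduce_mpcs} will equate $R_B$ at both levels. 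Hence $R' := \foliotstar((G_{A'} \oplus G_B) - Y) \in \mathcal{R}'^+_0$. Using Lemma~\ref{lem:extend:minor} and Lemma~\ref{lem:mpcs:plusone}, I will establish the identity $R = \foliotstar(\forget(R', t))$: the $\subseteq$ inclusion will lift each $H \in R$ to $H' \in \extend_{+1}(H) \cap R'$ with $H \leqlb \forget(H', t)$, while $\supseteq$ is immediate from $\forget((G_{A'} \oplus G_B) - Y, t) = (G_A \oplus G_B) - Y$.

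Picking a minimal $R'^* \subseteq R'$ in $\mathcal{R}'^+$, the corresponding witness $Y^*$ remains valid in the $t$-setting (the three prohibitions pull back via Lemma~\ref{lem:extend:minor}), so $\foliotstar(\forget(R'^*, t)) \in \mathcal{R}_0$ and is $\subseteq R$, forcing $R = \foliotstar(\forget(R'^*, t))$ by minimality of $R$. It will remain to transport the $\Q$-minor property between $R$ and $R'^*$: the implication $R \in \mathcal{R}_Q \Rightarrow R'^* \in \mathcal{R}'^+_Q$ will use that each $r \in R$ arises as a minor of some $\forget(r_0'^*, t)$ with $r_0'^* \in R'^*$, so $\forget(r) \leql \forget(r_0'^*)$; the reverse will use Lemma~\ref{lem:extend:minor} at level $t$ to lift $q \leql \forget((G_A \oplus G_B) - Y^*)$ to $q' \in \extend_{+t}(q) \subseteq \multipieces_{+t}(\Q)$ with $q' \leqlb (G_A \oplus G_B) - Y^*$, so that $q' \in R$ witnesses $q \leql \forget(q')$. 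The construction in Step~\ref{item:connected:15} will then place $R$ into $\mathcal{R}_N'$ whenever $R \in \mathcal{R}_N$, while Step~\ref{step:plus} supplies a $\Q^+ \subseteq \Q'$-witness whenever $R \in \mathcal{R}_Q$. The main obstacle throughout is the bookkeeping in Case~2: precisely tracking how $\folio$, $\multipieces_{+\cdot}$, $\forget$, and $\extend$ interact across the $t$- and $(t+1)$-boundaried settings, and verifying that the lifted prohibitions $\Pi_{A'}, \Pi_{B'}, \Pi_{C'}$ are exactly the right translations for induction to close.
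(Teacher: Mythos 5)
Your proposal is correct and follows essentially the same route as the paper's proof: split on whether the treedepth-reducing vertex $s$ lies in the optimal solution $Y$, handle the $s\in Y$ case by passing to $G-\{s\}$ and recovering minimality by re-inserting $s$, and handle the $s\notin Y$ case by promoting $s$ to the $(t+1)$'th boundary vertex, checking that the lifted prohibitions $\Pi_{A'},\Pi_{B'},\Pi_{C'}$ and the remainder $R_B$ translate correctly (via Lemmas~\ref{lem:extend:minor}, \ref{lem:mpcs:plusone}, and~\ref{item:remove_boundary_vertex_reduce_mpcs}), picking a minimal $(t+1)$-level remainder $R''$ with witness $Y'$, and forcing $R=\foliotstar(\forget(R'',t))$ by $\subseteq$-minimality of $R$ in $\mathcal{R}_0$. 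Your reorganization of the second case — proving the bridge identity $R=\foliotstar(\forget(R',t))$ up front and then shrinking to the minimal $R'^*$ by monotonicity of $\foliotstar(\forget(\cdot,t))$ — is a valid and slightly cleaner packaging of the same argument (the paper instead defines $\hat{R}$ from the $(t+1)$-level witness $Y'$ directly and only afterwards proves $R=\foliotstar(\forget(R'',t))$ as a separate identity); the only things to tighten are the notational slip between $\foliotstar$ and $\foliotplusonestar$ in the $(t+1)$-boundaried context, and spelling out that the lifted witness $Y^*$ is still optimal at the $t$-level because $\forget(G')=G$.
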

\noindent Before giving the proof, we give a short outline. Given a remainder $R$, we will do a case distinction. If a corresponding solution of $R$ contains $s$, it is easy to see that $R$ is a remainder for $G-\{s\}$, thereby $R \in \mathcal{R}^-$ and the result follows. If a corresponding solution of $R$ avoids $s$, the proof is a bit more involved. The idea is to show that there exists a minimal remainder $R''$ in $\mathcal{R}'^+$, such that if $R$ has a $\Q$-minor then $R''$ also has a $\Q$-minor. Furthermore if $R''$ has no $\Q$-minor, we show that $R'' \in \mathcal{R}'^+_N$, such that in line \ref{item:connected:15} of the procedure we add $R$ to $\mathcal{R}^+_N$. In the proof we will use implicitly that $G_{B}$ and $G_{C}$ contain no $\nonisobgraphs{t+1}$-minors, which follows trivially from the definitions: boundary vertex $t+1$ is not defined in $G_{B}$ and $G_{C}$ as these are $t$-boundaried graphs, so any graph in which the $t+1$'th boundary vertex is defined cannot be obtained from $G_{B}$ or $G_{C}$ by minor operations.
\begin{innerproof}[Proof of Claim \ref{claim:main-lemma-connected-case}]
Let $R \in \mathcal{R}$ be a remainder with corresponding solution $Y \in \minFdelsolwith(G_A \oplus G_B \oplus G_C, \Pi_A,\Pi_B,\Pi_C,R_B)$. Suppose $s \in Y$, it is easy to see that thereby, we applied line \ref{step:minus} of the procedure.  Let $Y' := Y \setminus \{s\}$. It is easy to see that $R$ is a remainder for $Y'$ in graph $G - \{s\}$. We show $R \in \mathcal{R}^-_0$, by showing that $Y'$ has the required properties. The only property it does not inherit from $Y$ is that it has optimal size. However, if there exists a solution in $G_A \oplus G_B \oplus G_C - \{s\}$ that is smaller than $|Y|-1$, we could use this solution and extend it with $s$ to obtain a solution in the entire graph that is strictly smaller than $Y$, which is a contradiction. Thus, $Y' \in \minFdelsolwith((G_A-\{s\})\oplus G_B \oplus G_C, \Pi_A,\Pi_B,\Pi_C,R_B)$. It is easy to see that if $R$ is not minimal for $\mathcal{R}^-_0$, then it is not minimal for $\mathcal{R}_0$ and then $R \notin \mathcal{R}$, which is a contradiction.
Thereby if $R \in \mathcal{R}_N$, it follows $R \in \mathcal{R}_N^-$ and if $R\in \mathcal{R}_{\Q}$, then there exists $q \in \Q^- \subseteq \Q'$ such that $q \leql \forget(r)$ for some $r \in R$.

Suppose $s \notin Y$, it follows that we applied Step $\ref{step:plus}$, as $Y$ is an example of such a solution avoiding~$s$. Let $G' := G_{A'}\oplus G_{B}\oplus G_{C}$. We show
\begin{equation}\label{eq:Y_in_minFdelsolwith}
Y \in \minFdelsolwith(G_{A'} \oplus G_{B} \oplus G_{C}, \Pi_{A'},\Pi_{B'},\Pi_{C'},R_{B}).
\end{equation}
 It is easy to see that $Y$ is a solution of optimal size, as $\forget(G',t) = G$ by definition and graphs in $\mathcal{F}$ are not boundaried. Furthermore, since $s \notin Y$ and $Y \cap S = \emptyset$ it follows that $Y \cap S' = \emptyset$. Then suppose $G_{A'} - Y$ has a $\Pi_{A'}$-minor. Since $G_{A'}$ was obtained from $G_A$ by assigning a boundary label to an existing vertex it is easy to see that this implies $G_{A}$ has a $\Pi_A$-minor, which is a contradiction. Furthermore, suppose $G_{B}-Y$ has a  $\pi'$-minor for $\pi'\in\Pi_{B'}$. Since $G_{B}$ has no \nonisobgraphs{t+1}-minors, $\pi' \in \extend_{+1}(\Pi_B)$. However since $G_{B}$ has no \nonisobgraphs{t+1}-minors, it follows from the definition of $\extend$ that $\pi' \in \Pi_B$, which contradicts the choice of $Y$. The same reasoning implies that $G_{C}$ has no $\Pi_{C'}$-minors. It remains to show the property of $R_{B}$:
 \begin{equation}\label{eq:RBprime-folioplusonestar}
 R_{B} = \foliotplusonestar(G_{B} - Y).
  \end{equation}
  Let $r \in R_{B}$, hereby $r \in \folio(G_B - Y)$ and $r \in \multipieces_{+t}(\Q)$. It follows from Lemma \ref{item:remove_boundary_vertex_reduce_mpcs} that $r \in \multipieces_{+t+1}(\Q)$. Thereby, $r \in \foliotplusonestar(G_{B}-Y)$.

   Suppose $r \in \foliotplusonestar(G_{B}-Y)$, it follows that $r \in \folio(G_{B} - Y)$ and $r \in \multipieces_{+t+1}(\Q)$. By Lemma \ref{item:remove_boundary_vertex_reduce_mpcs}, it follows that $r \in \multipieces_{+t}(\Q)$. Since we know that $R_B = \foliotstar(G_B-Y)$, it follows that $r \in R_B$.
   This shows \ref{eq:RBprime-folioplusonestar}, which was the last step towards proving \eqref{eq:Y_in_minFdelsolwith}. 
%

Let $R' :=  \foliotplusonestar((G_{A'} \oplus G_{B})-Y)$. It follows from \eqref{eq:Y_in_minFdelsolwith} that $R' \in \mathcal{R}'^+_0$.
Thereby, there exists $R'' \subseteq R'$ such that $R'' \in \mathcal{R}'^+$. Let $Y' \in \minFdelsolwith(G_{A'}\oplus G_{B} \oplus G_{C}, \Pi_{A'},\Pi_{B'},\Pi_{C'},R_{B})$ be its corresponding solution, hence $R'' = \foliotplusonestar((G_{A'}\oplus G_{B}) - Y')$. Let $\hat{R} := \foliotstar((G_A \oplus G_B) - Y')$.
We show that
\begin{equation}\label{eq:connected:hat_R_smaller_R}
\hat{R} \subseteq R.
\end{equation}
Let $r \in \hat{R}$, then there exists $r' \in \extend_{+1}(r)$ such that $r' \in \folio(G_{A'} \oplus G_{B} - Y')$ by Lemma~\ref{lem:extend:minor}.
We know that $r \in \multipieces_{+t}(\Q)$ by definition, it follows from Lemma~\ref{lem:mpcs:plusone} that $r'\in\multipieces_{+t+1}(\Q)$. Thereby $r' \in \foliotplusonestar(G_{A'} \oplus G_{B} - Y') = R''$. Since $R'' \subseteq R' = \foliotplusonestar(G_{A'}\oplus G_{B} -Y)$, it follows that $r' \leqlb G_{A'} \oplus G_{B}-Y$. But since $r \leqlb \forget(r',t)$, it follows that $r \leqlb \forget(G_{A'}\oplus G_{B} - Y,t) = G_A\oplus G_B - Y$. Since $r \in \multipieces_{+t}(\Q)$  by definition, it follows that $r \in \foliotstar((G_A \oplus G_B) - Y)=R$, which establishes \eqref{eq:connected:hat_R_smaller_R}. We now prove
\begin{equation}\label{eq:connected:Rhat-in-R0}
\hat{R} \in \mathcal{R}_0,
\end{equation}
by showing that $Y' \in \minFdelsolwith(G_{A}\oplus G_B\oplus G_C,\Pi_A,\Pi_B,\Pi_C,R_B)$.  It is easy to see that $Y'\cap S = \emptyset$, and that $Y'$ has optimal size, since $\forget(G') = \forget(G)$. Furthermore since $Y'$ breaks $\Pi_{A'}$ in $G_{A'}$, by Lemma \ref{lem:extend:minor} it breaks $\Pi_A$ in $G_A$. Furthermore it breaks $\Pi_B$ and $\Pi_C$ in $G_B$ and $G_C$ respectively, by Lemma \ref{lem:extend:minor} and because it breaks $\Pi_{B'}$ and $\Pi_{C'}$ in $G_{B}$ and $G_{C}$. Furthermore, we show $R_B = \foliotstar(G_B-Y')$.  
By definition of $Y'$, we know $R_B = \foliotplusonestar(G_B-Y')$. Since it follows from Lemma \ref{item:remove_boundary_vertex_reduce_mpcs}, that for any $t$-boundaried graph $r$ it holds that $r \in \multipieces_{+t+1}(\Q)$ if and only if $r \in \multipieces_{+t}(\Q)$, it is easy to see that $R_B = \foliotstar(G_B-Y')$.
It follows that  $\hat{R} \in \mathcal{R}_0$, which establishes \eqref{eq:connected:Rhat-in-R0}.

If $\hat{R} \neq R$,the combination of \eqref{eq:connected:hat_R_smaller_R} and \eqref{eq:connected:Rhat-in-R0} shows that $R$ was not minimal, which is a contradiction. Otherwise, $R=\hat{R}$ and thereby $R$ has corresponding solution $Y'$ in $G$.

To conclude the proof, we consider two cases depending on whether $R''$ contains an $\extend_{+(t+1)}(\Q)$-minor or not. Suppose first that it does not, meaning that $R'' \in \mathcal{R}'^+_N$. This implies that $R$ has no $\extend_{+t}(\Q)$ minor, by Lemma \ref{lem:extend:minor}, thus we conclude $R \in \mathcal{R}_N$. We show that
\begin{equation}\label{eq:connected:R-is-foliostar}
R = \foliotstar(\forget(R'',t)),
\end{equation}
 which will establish that $R$ was added to $\mathcal{R}_N^+$ in Step \ref{item:connected:15} of the procedure.
Let $r \in R$, we show that $r \in \foliotstar(\forget(R'',t))$. Since $r \leqlb (G_A \oplus G_B) - Y'$, there exists $r' \in \extend_{+1}(r)$ such that $r' \leqlb (G_{A'} \oplus G_{B}) - Y$ by Lemma~\ref{lem:extend:minor}.
Since $r \in \multipieces_{+t}(\Q)$, it follows that $r' \in \multipieces_{+t+1}(\Q)$ by Lemma~\ref{lem:mpcs:plusone}. Thereby, $r' \in R''$. It is easy to see that $r \in \folio(\forget(R'',t))$ since $r \leqlb \forget(r',t)$. Thereby, $r \in \foliotstar(\forget(R'',t))$.

To prove the reverse direction of \eqref{eq:connected:R-is-foliostar}, let $r \in \foliotstar(\forget(R'',t))$. We show that $r \in R$. Since $r \in \folio(\forget(R'',t))$, it follows that $r \in \folio((G_A \oplus G_B) - Y')$, as needed. By definition, $r \in  \multipieces_{+t}(\Q)$, thereby it follows that $r \in R$, which establishes~\eqref{eq:connected:R-is-foliostar}.
 This concludes the proof for the case where $R'' \in \mathcal{R}'^+_N$, since $R$ has been added to $\mathcal{R}_N^+$ in line \ref{item:connected:15} of the procedure by~\eqref{eq:connected:R-is-foliostar}, and thereby it was added to $\mathcal{R}'_N$ in line \ref{item:connected:16}.

Suppose $R''$ does have a \Q-minor, implying $R'' \in \mathcal{R}'^+_{\Q}$. Then some \Q-minor $q$ of $R''$ was added to $\Q^+$. Then $q\leql \forget(\folio(G_{A'}\oplus G_{B} - Y'))=\forget(\folio(G_{A}\oplus G_{B} - Y'))$. By Lemma \ref{lem:extend:minor}, there is a $t$-boundaried graph $q'\in\extend_{+t}(\Q)$ such that $q' \leqlb \folio(G_{A}\oplus G_{B}-Y'))$. It is easy to see that $q'\in\multipieces_{+t}(\Q)$, thereby $q' \in \foliotstar(G_A \oplus G_B-Y')$, implying that $R$ has also has this $q$-minor, concluding the proof.
\end{innerproof}

\paragraph*{Size bound} In order to prove a bound on the size of $\mathcal{R}_N'$ and $\Q'$, we first bound the size of all used parameters.
\begin{innerclaim}\label{claim:parameter_bounds}
If none of the base cases apply, then the following bounds are satisfied:
\begin{itemize}
\item $0 \leq \mu(G_A,\Pi_A,S) \leq |S|$,
\item $0 \leq \nu(\Pi_A) \leq \numberof(0,t,\|\F\|+t,0)$,
\item $0 \leq \xi(R_B) \leq \numberof(t\cdot \min_{H \in \F}|V(H)|,t,t+\max_{H \in \Q}|V(H)|,\min_{H\in\F}|V(H)|)$, and
\item $0 \leq |S| \leq \td(G)$.
\end{itemize}
\end{innerclaim}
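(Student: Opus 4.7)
The plan is to verify each of the four two-sided inequalities independently, using only the hypothesis that none of the base cases~\ref{base-case:empty-graph}--\ref{base-case:xi-large} applies together with the precondition $\td(G) \geq \td(G_A - S) + |S|$. This claim is essentially bookkeeping that certifies well-foundedness of the induction by bounding the arguments fed into the functions $f$ and $g$, so I do not expect any individual step to be difficult; I will simply read each inequality off either the precondition or the failure of a specific base case.

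For $|S|$, non-negativity is immediate, and $|S| \leq \td(G) - \td(G_A - S) \leq \td(G)$ follows directly from the precondition (using $\td(G_A-S) \geq 0$). For $\nu(\Pi_A)$, the lower bound holds because $\nu$ is a cardinality. For the upper bound I would show $|\multipieces_{+t}(\F)| \leq \numberof(0, t, \|\F\|+t, 0)$: every $H \in \extend_{+t}(F)$ satisfies $|V(H)| \leq |V(F)| + t \leq \|\F\| + t$ since each operation in Definition~\ref{def:extend} adds at most one vertex, and then $\bigoplus_{p \in P} p$ for $P \subseteq \pieces(H)$ only produces subgraphs of~$H$ and hence cannot increase the vertex count. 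The quantity $\numberof(0, t, \|\F\|+t, 0)$ then counts exactly the unlabeled $t$-boundaried graphs on at most $\|\F\|+t$ vertices up to isomorphism.

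For $\xi(R_B)$ the upper bound is immediate from $|R_B| \geq 0$, and for the lower bound the exclusion of base cases matters: if $\xi(R_B) < 0$, then either $R_B \cap \extend_{+t}(\Q) = \emptyset$, which triggers base case~\ref{base-case:xi-large}, or $R_B \cap \extend_{+t}(\Q) \neq \emptyset$, which triggers base case~\ref{base-case:Q-in-RB}; either alternative contradicts the hypothesis.

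The only substantive step is the lower bound $\mu(G_A, \Pi_A, S) \geq 0$; the matching upper bound is just the negation of base case~\ref{base-case:mu-large}. Here I would use that $\F$ consists of \emph{connected} graphs. Fixing any $Y \in \minFdelsol(G_A, \Pi_A, S)$, the disjointness $Y \cap S = \emptyset$ together with $\F$-minor-freeness of $G_A - Y$ gives that $(G_A - S) - Y$ is $\F$-minor-free as well; connectedness of each $F \in \F$ means every minor model of an $F$ in $G_A - S$ lies inside a single component, so for each $C \in \setcomponents(G_A - S)$ the restriction $Y \cap V(C)$ is an $\F$-minor-free deletion of~$C$ and satisfies $|Y \cap V(C)| \geq \minFdel(C)$. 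Summing over the components yields $|Y| \geq \sum_{C \in \setcomponents(G_A-S)} \minFdel(C)$, and hence $\mu(G_A, \Pi_A, S) \geq 0$. I do not anticipate any real obstacle in this step either.
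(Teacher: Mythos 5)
Your proposal is correct and follows the same case-by-case approach as the paper: each bound is read off either from the lemma's precondition $\td(G)\geq\td(G_A-S)+|S|$, the failure of base cases~\ref{base-case:mu-large},~\ref{base-case:Q-in-RB}, or~\ref{base-case:xi-large}, or a direct count of $t$-boundaried graphs on at most $\|\F\|+t$ vertices. (One small remark: your appeal to connectedness of $\F$ in the $\mu\geq 0$ step is unnecessary, since $C-(Y\cap V(C))$ is a subgraph of the $\F$-minor-free graph $G_A-Y$ and so is itself $\F$-minor-free; the paper likewise does not invoke connectedness here.)
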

\begin{innerproof}
We show that the bounds always hold, or that we are in one of the base cases.
\begin{itemize}
\item For any \Fdel $Y$ of $G_A$ that breaks $\Pi_A$ and does not remove any vertex of $S$, for a connected component $C$ of $G-S$, $|C \cap Y| \geq \minFdel(C)$. Thereby, $\mu(G_A,\Pi_A,S) \geq 0$. If $\mu(G_A,\Pi_A,S) > |S|$, base case \ref{base-case:mu-large} applies. Thereby, $0 \leq \mu(G_A,\Pi_A,S) \leq |S|$.
\item Clearly, the size of a set is always as least zero. Furthermore, $\Pi_A \subseteq \multipieces_{+t}(\F)$, thus $\nu(\Pi_A) \leq |\multipieces_{+t}(\F)| \leq N(0,t,\|\F\|+t,0)$, as these are $t$-boundaried graphs of size at most $\|\F\|+t$.
\item By definition, $\xi(R_B) \leq \numberof(t\cdot \min_{H \in \F}|V(H)|,t,t+\max_{H \in \Q}|V(H)|,\min_{H\in\F}|V(H)|)$. If $\xi(R_B) < 0$, we are in base case~\ref{base-case:Q-in-RB} or~\ref{base-case:xi-large}.
\item Trivially, $|S| \geq 0$. It follows from the precondition to the lemma that $\td(G) \geq \td(G_A-S) + |S|$. Thereby, $|S| \leq \td(G)$. \qedhere
\end{itemize}
\end{innerproof}

To show the size bound in the case that $G_A-S$ is not connected, we will use the following claim. It will be used to show that the sizes of the sets obtained in lines \ref{step:tilde} and \ref{step:hat} of the procedure for the disconnected case, are upper bounded by $f$ and $g$ with  lexicographically smaller parameter values. Recall the definition of \Rtilde, which was used for \textsc{Find-Remainders} (disconnected case): let
$\Rtilde$ denote the set $\mathcal{R}$ obtained by applying the lemma inductively to the graph $G_{A_2} \oplus G_B \oplus (G_C \oplus G_{A_1})$, with $G_{A'} := G_{A_2}$ and the new $G_{C'}$ as $G_C \oplus G_{A_1}$ and with prohibitions $\piatwo$ and $\Pi_B$ and $\piaone \odot \Pi_C$ and remainder $R_B$.

\begin{innerclaim}\label{claim:at_least_one_parameter_decreases}
Let $(\piaone,\piatwo) \in \splitPi(\Pi_A)$ such that there exists $Y \in \minFdelsolwith(G_A \oplus G_B \oplus G_C,\Pi_A,\Pi_B,\Pi_C,R_B)$ such that $Y$ breaks $\piaone$ in $G_{A_1}$ and $\piatwo$ in $G_{A_2}$. Let $R \in \Rtilde$ and
suppose there does not exist $\piatwo'$ such that $(\Pi_A, \piatwo') \in \splitPi(\Pi_A)$, $\minFdel(G_{A_2},\piatwo',S) = \minFdel(G_{A_2})$, and $R_B \in \widetilde{\mathcal{R}}^{\piatwo',\Pi_A}$.
 Then $\nu(\piaone) < \nu(\Pi_A)$, or $\xi(R) < \xi(R_B)$, or $\mu(G_{A_1},\piaone,S) < \mu(G_A,\Pi_A,S)$.
\end{innerclaim}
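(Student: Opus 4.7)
The plan is to argue by contrapositive: assume all three potential strict decreases fail, i.e., $\nu(\piaone) \geq \nu(\Pi_A)$, $\xi(R) \geq \xi(R_B)$, and $\mu(G_{A_1},\piaone,S) \geq \mu(G_A,\Pi_A,S)$, and derive the existence of a $\piatwo'$ satisfying the three conditions in the hypothesis, contradicting its negation. Since $(\piaone,\piatwo) \in \splitPi(\Pi_A)$ implies $\piaone \supseteq \Pi_A$, the $\nu$-inequality forces $\piaone = \Pi_A$. Picking a solution $Y'$ that witnesses $R \in \Rtilde$, any labeled minor of $G_B - Y'$ is also a labeled minor of $(G_{A_2} \oplus G_B) - Y'$, so $R \supseteq R_B$; combined with the $\xi$-inequality (equivalently $|R| \leq |R_B|$), this forces $R = R_B$.

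The candidate is $\piatwo' := \piatwo$. The condition $(\Pi_A,\piatwo) \in \splitPi(\Pi_A)$ is immediate from the precondition $(\piaone,\piatwo) \in \splitPi(\Pi_A)$ and $\piaone = \Pi_A$; the condition $R_B \in \widetilde{\mathcal{R}}^{\piatwo,\Pi_A}$ follows from $R_B = R \in \Rtilde$ after substituting $\piaone = \Pi_A$. The real work lies in establishing $\minFdel(G_{A_2},\piatwo,S) = \minFdel(G_{A_2})$. I would first show, from global optimality of $Y$, that $|Y \cap V(G_A)| = \minFdel(G_A,\Pi_A,S)$: a strictly smaller $(G_A,\Pi_A,S)$-solution $Z'$ could be combined with $Y \setminus V(G_A)$ to produce a global $\F$-free deletion smaller than $Y$, using Lemma~\ref{lem:mergePi} together with the precondition $\Pi_A \odot \Pi_B \odot \Pi_C \supseteq \extend_{+t}(\F)$ and Lemma~\ref{lem:extend:minor}, contradicting $|Y| = \minFdel(G)$. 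Splitting $Y \cap V(G_A)$ along $V(G_{A_1})$ and $V(G_{A_2})$ and using $Y \cap S = \emptyset$ together with the fact that $Y$ breaks $\piaone$ in $G_{A_1}$ and $\piatwo$ in $G_{A_2}$ then yields $\minFdel(G_A,\Pi_A,S) \geq \minFdel(G_{A_1},\piaone,S) + \minFdel(G_{A_2},\piatwo,S)$.

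Combining this with the $\mu$-assumption, which (after $\piaone = \Pi_A$) reads $\minFdel(G_A,\Pi_A,S) \leq \minFdel(G_{A_1},\Pi_A,S) + \minFdel(C)$, yields $\minFdel(G_{A_2},\piatwo,S) \leq \minFdel(C)$. In the other direction, any $\F$-deletion $Z$ of $G_{A_2}$ has $Z \cap V(C)$ as an $\F$-deletion of $C$, since $C = G_{A_2} - S$ is an induced subgraph of $G_{A_2}$ and $\F$-freeness is preserved under induced subgraphs; hence $\minFdel(G_{A_2}) \geq \minFdel(C)$. Chaining $\minFdel(C) \leq \minFdel(G_{A_2}) \leq \minFdel(G_{A_2},\piatwo,S) \leq \minFdel(C)$ collapses to equality, giving the last required condition and completing the contradiction. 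The main obstacle is the optimality-transfer step, where the $\odot$-machinery and the covering precondition $\Pi_A \odot \Pi_B \odot \Pi_C \supseteq \extend_{+t}(\F)$ must be handled carefully to justify that swapping the $G_A$-part of $Y$ for a smaller $(G_A,\Pi_A,S)$-solution preserves $\F$-freeness of the global graph.
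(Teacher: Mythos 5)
Your proof is correct and follows essentially the same route as the paper's: force $\piaone = \Pi_A$ from the $\nu$-inequality, force $R = R_B$ from $R \supseteq R_B$ and the $\xi$-inequality, and chain $\minFdel$ inequalities through the $\mu$-assumption to conclude $\minFdel(G_{A_2},\piatwo,S) = \minFdel(G_{A_2})$, contradicting the nonexistence hypothesis with $\piatwo' := \piatwo$. The only cosmetic difference is that the paper assumes the three parameters are \emph{equal} (implicitly using that they can never strictly increase), whereas you work directly with the $\geq$ contrapositive and close the $\minFdel$ chain via $\minFdel(C) \le \minFdel(G_{A_2}) \le \minFdel(G_{A_2},\piatwo,S) \le \minFdel(C)$, which is a slightly more self-contained way to reach the same conclusion.
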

\begin{innerproof}
Suppose for contradiction that $\nu(\piaone) = \nu(\Pi_A)$, $\xi(R) = \xi(R_B)$, and $\mu(G_{A_1},\piaone,S) = \mu(G_A,\Pi_A,S)$.
Since $\nu(\piaone) = \nu(\Pi_A)$ and $\piaone$ is from $\splitPi(\Pi_A)$ it follows  that $\piaone = \Pi_A$.  From $\mu(G_A,\Pi_A,S) = \mu(G_{A_1},\piaone,S)$ and~$\Pi_{A_1} = \Pi_A$, it follows that
\begin{align}
&\label{eq:equality-first}
\minFdel(G_A,\Pi_A,S) - \shrink\sum_{C \in \setcomponents(G_A-S)}\shrink \minFdel(C)
= \minFdel(G_{A_1},\Pi_A,S) - \shrink\sum_{C \in \setcomponents(G_{A_1}-S)}\shrink \minFdel(C).
\intertext{Since by assumption, $R \in \Rtilde$, it follows from the definition of \Rtilde that there is a solution of size~$\minFdel(G_A, \Pi_A, S)$ in $G_A$ breaking $\Pi_{A_1}=\Pi_A$ in $G_{A_1}$ and $\Pi_{A_2}$ in $G_{A_2}$. Thus, }
&\label{eq:equality:geq}
\minFdel(G_A,\Pi_A,S) \geq \minFdel(G_{A_1},\Pi_{A}, S) + \minFdel(G_{A_2},\Pi_{A_2}, S).
\intertext{Since $(\Pi_{A_1},\Pi_{A_2}) \in \splitPi(\Pi_A)$, a solution breaking $\Pi_{A_1}=\Pi_A$ in $G_{A_1}$ can always be combined with a solution breaking $\Pi_{A_2}$ in $G_{A_2}$ to form a solution breaking $\Pi_A$ in $G_A$. Thus,}
&\label{eq:equality:leq}
\minFdel(G_A,\Pi_A,S) \leq \minFdel(G_{A_1},\Pi_{A}, S) + \minFdel(G_{A_2},\Pi_{A_2}, S).
\intertext{From \eqref{eq:equality:geq} and \eqref{eq:equality:leq}, we conclude that}
&\minFdel(G_A,\Pi_A,S) = \minFdel(G_{A_1},\Pi_{A}, S) + \minFdel(G_{A_2},\Pi_{A_2}, S).\nonumber
\intertext{Observe that $G_{A_2}-S \in \setcomponents(G_A - S)$ and that all other connected components of $G_A-S$ are in $\setcomponents(G_{A_1}-S)$. Thereby, we get that}
&\minFdel(G_A,\Pi_A,S) - \shrink \sum_{C \in \setcomponents(G_A-S)}\shrink  \minFdel(C) \nonumber\\
&=\label{eq:equality-second} \minFdel(G_{A_1},\Pi_A,S) + \minFdel(G_{A_2},\piatwo,S) - \minFdel(G_{A_2}-S)- \shrink\sum_{C\in \setcomponents(G_{A_1}-S)}\shrink \minFdel(C).
\intertext{It immediately follows from \eqref{eq:equality-first} and \eqref{eq:equality-second} that}
&\minFdel(G_{A_1},\Pi_A,S) + \minFdel(G_{A_2},\piatwo,S) - \minFdel(G_{A_2}-S)- \shrink\sum_{C\in \setcomponents(G_{A_1}-S)}\shrink \minFdel(C)\nonumber\\
&= \minFdel(G_{A_1},\Pi_A,S) -\shrink \sum_{C \in \setcomponents(G_{A_1}-S)} \shrink\minFdel(C)\nonumber
\intertext{and thereby}
&\label{eq:equality-third}\minFdel(G_{A_2}-S) = \minFdel(G_{A_2},\piatwo,S).
\end{align}
Since $\minFdel(G_{A_2},\piatwo,S) \geq \minFdel(G_{A_2})$ and $\minFdel(G_{A_2}) \geq \minFdel(G_{A_2}-S)$ it follows from \eqref{eq:equality-third} that $\minFdel(G_{A_2}) = \minFdel(G_{A_2},\piatwo,S)$.
By $\xi(R) = \xi(R_B)$ and the way $R$ is chosen, it follows that $R = R_B$, as for every remainder in $R' \in \Rtilde$ we know that $R_B \subseteq R'$. However, this implies that $R_B \in \widetilde{\mathcal{R}}^{\piatwo,\Pi_A}$. But then we have that $(\Pi_A,\piatwo) \in \splitPi(\Pi_A)$, $R_B \in \widetilde{\mathcal{R}}^{\piatwo,\Pi_A}$, and $\minFdel(G_{A_2},\piatwo,S) = \minFdel(G_{A_2})$, which contradicts the starting assumptions of Claim \ref{claim:at_least_one_parameter_decreases}. Thereby, the claim follows.
\end{innerproof}
%
Using Claim \ref{claim:at_least_one_parameter_decreases}, we can now argue that there exist functions $f$ and $g$ such that the required size bounds hold for $\mathcal{R}_N'$ and $\mathcal{Q}'$.
\begin{innerclaim}\label{claim:QandRsmall}
There exist functions $f$ and $g$ such that:
\begin{enumerate}
\item $|\mathcal{R}'_N| \leq f(\td(G_A-S), \iscon(G_A - S), \mu(G_A,\Pi_A,S), \nu(\Pi_A),\xi(R_B), \max_{H \in \F}|V(H)|, |S|)$ and
\item $|\Q'| \leq g(\td(G_A-S), \iscon(G_A - S), \mu(G_A,\Pi_A,S), \nu(\Pi_A),\xi(R_B), \max_{H \in \F}|V(H)|, |S|).$
\end{enumerate}
\end{innerclaim}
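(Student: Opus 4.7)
The plan is to prove the existence of $f$ and $g$ by strong induction on the lexicographic order of the tuple
\[
 (\td(G_A-S),\ 1-\iscon(G_A-S),\ \nu(\Pi_A),\ \xi(R_B),\ \mu(G_A,\Pi_A,S)),
\]
using $|V(G_A-S)|$ as an auxiliary (non-parameter) tiebreaker to handle the \textit{if}~branch of the disconnected case. By Claim~\ref{claim:parameter_bounds}, each coordinate is a non-negative integer bounded in terms of $\F,\eta,|S|$, so this ordering is well-founded. The base cases already give $|\mathcal{R}'_N|\leq 1$ and $|\Q'|\leq 1$, so I will set $f,g\geq 1$ throughout.

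For the connected case, I will exploit that vertex $s$ was chosen so that $\td(G_A-S)$ strictly decreases in both recursive calls (on $G_A-\{s\}$, and on $G_A$ with $s$ moved into $S$); the lex tuple therefore strictly decreases in each subcase, yielding $|\mathcal{R}'_N|\leq 2f^{-}$ and analogously for $g$, where $f^{-}$ denotes $f$ evaluated at a strictly smaller tuple. For the disconnected \textit{if}~branch, the unique recursive call on $G_{A_1}$ preserves $\Pi_A$ and $R_B$ and therefore $\nu$ and $\xi$; because $|V(G_{A_1}-S)|<|V(G_A-S)|$, the auxiliary induction on $|V|$ still applies, and any chain of \textit{if}~invocations eventually terminates either in the connected case or in the \textit{else}~branch, at which point the decrease in the chosen lex tuple takes over.

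For the disconnected \textit{else}~branch, I will use that the procedure loops over $|\splitPi(\Pi_A)|$ pairs $(\piaone,\piatwo)$, a quantity bounded in terms of $|\multipieces_{+t}(\F)|$ and hence in terms of $\F$ and $\eta$ alone. For each such pair, the first inner call for $\Rtilde$ runs on $G_{A_2}$, whose $G_{A_2}-S$ is a single connected component: the second coordinate drops from $1$ to $0$ while $\td(G_{A_2}-S)\leq \td(G_A-S)$, so the lex tuple strictly decreases. The second inner call for $\Rhat{R}$ with $R\in\Rtilde_N$ satisfies the hypothesis of Claim~\ref{claim:at_least_one_parameter_decreases} (since the \textit{if}~condition failed), so at least one of $\nu(\piaone),\xi(R),\mu(G_{A_1},\piaone,S)$ is strictly smaller than the corresponding current value; together with the monotonicity $\piaone\supseteq\Pi_A$ (giving $\nu(\piaone)\leq\nu(\Pi_A)$) and $R\supseteq R_B$ (giving $\xi(R)\leq\xi(R_B)$), which follow from the definitions of $\splitPi$ and $\Rtilde$, the lex tuple strictly decreases here as well, with no coordinate increasing.

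Combining the inductive bounds will give recurrences of the form
\[
 f \;\leq\; 2 f^{-} \;+\; |\splitPi(\Pi_A)| \cdot f(P_{\mathrm{first}}) \cdot \max_{R} f(P_{\mathrm{second}}(R)),
\]
and analogously for $g$, where every right-hand expression evaluates $f$ or $g$ on a tuple strictly smaller than the current one. Since the set of possible tuples is finite, these recurrences will define finite functions $f$ and $g$ depending only on the listed parameters, completing the claim. The main obstacle I expect is the disconnected \textit{if}~branch, where the listed parameters may all be preserved; this will be resolved by the auxiliary induction on $|V(G_A-S)|$, together with the observation that any \textit{if}-chain leaves $\nu,\xi,\|\F\|,|S|$ unchanged and must terminate in a case with a genuine lex decrease.
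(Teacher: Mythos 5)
Your proof is correct and is the same recursive construction as the paper: define $f$ and $g$ by a lexicographic induction over the listed parameters, using Claim~\ref{claim:at_least_one_parameter_decreases} to certify the strict decrease in the disconnected else-branch, the drop in $\td$ in the connected case, the drop in $\iscon$ for the inner $\Rtilde$-call, and the underlying $|V(G_A-S)|$-induction of Lemma~\ref{lem:subset_Q} as a tiebreaker for the if-branch where the parameter tuple is preserved. Your reordering that places $\mu$ after $\nu$ and $\xi$, combined with the explicit monotonicity checks $\Pi_{A_1}\supseteq\Pi_A$ (so $\nu$ cannot increase) and $R\supseteq R_B$ (so $\xi$ cannot increase), is a genuine tightening of the paper's presentation: with the paper's ordering $(\ldots,\mu,\nu,\xi,\ldots)$ the lexicographic decrease in the else-branch additionally requires $\mu(G_{A_1},\Pi_{A_1},S)\leq\mu(G_A,\Pi_A,S)$, which does hold (the globally optimal $Y$ witnessing the pair $(\Pi_{A_1},\Pi_{A_2})$ yields $\minFdel(G_A,\Pi_A,S)=|Y\cap V(G_A)|\geq\minFdel(G_{A_1},\Pi_{A_1},S)+\minFdel(G_{A_2}-S)$) but is nowhere verified in the paper, whereas your ordering makes it unnecessary.
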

\begin{innerproof}

We first order the parameters of $f$ and $g$ by importance. We then show that the sizes of $\mathcal{R}'_N$ and $\Q'$ are bounded by an expression in $f$,~$g$, and a number of constants, where $f$ and $g$ are called with a set of parameters that is lexicographically smaller. We use this to conclude that by induction,   there exist functions $f$ and $g$ depending only on the specified parameters, that satisfy this expression. In this way we bound $|\mathcal{R}_N'|$ and $|\Q'|$.  The ordering on the parameters is the same as in the lemma statement, namely $(\td(G_A-S), \iscon(G_A - S), \mu(G_A,\Pi_A,S), \nu(\Pi_A),\xi(R_B), \max_{H \in \F}|V(H)|, |S|)$ (important to least important). For example, if we compute the size of $\mathcal{R}_N$ on a graph where $G_A - S$ has smaller treedepth, it should  follow that $f$ gives a smaller bound on $|\mathcal{R}_N|$.

If we are in one of the base cases, $|\mathcal{R}'_N|$ and $|\Q'|$ are constant and the result follows from suitably choosing $f$ and $g$.
Otherwise, we do the following case distinction to prove the result.

(\textbf{$\mathbf{G_A-S}$ is not connected}) In this case we do a further case distinction.
\begin{itemize}
\item If  there exists $\Pi_{A_2}$ such that $(\Pi_A,\Pi_{A_2}) \in \splitPi(\Pi_A)$, $\minFdel(G_{A_2}, \Pi_{A_2},S) = \minFdel(G_{A_2})$, and $R_B \in \widetilde{\mathcal{R}}^{\piatwo,\Pi_A}$ , then $|\mathcal{R}_N'| = |\hat{\mathcal{R}}_N^{\Pi_A,\piatwo,R_B}|$, which is bounded by $f$ with the same parameters, by the induction hypothesis. Similarly, $|\Q'| = |\hat{\Q}^{\Pi_A,\piatwo,R_B}|$ which is bounded by $g$ by the induction hypothesis. Thereby, the result follows.
\item Otherwise, it is easy to see from the algorithm that
\begin{align*}
|\mathcal{R}_N'| &\leq \sum_{\substack{\text{chosen}\\\piatwo}}\sum_{R \in \Rtilde_N}|\Rhat{R}|.
\end{align*}
The total number of different prohibitions $\piatwo$ that can be chosen is bounded by $2^{N(0,t,t+\|\F\|,0)}$ (all subsets of the graphs in $\multipieces_{+t}(\F)$), which is constant for our purposes as $t$ is bounded by the treedepth of $G$ and \F is fixed. Furthermore, $|\Rtilde|$ is bounded by $f$ of a graph where $G_A-S$ is connected and $\td(G_A - S)$ does not change, which has smaller parameters in the lexicographical ordering. $|\Rhat{R}|$ is bounded by $f$ where one of the parameters $\mu$,$\nu$ or $\xi$ has decreased, as given by Claim \ref{claim:at_least_one_parameter_decreases} and it is easy to verify that $\td(G_A-S)$ remains constant and the connectedness of the graph can only improve from disconnected to connected. Similarly,
\begin{align*}
\Q' &\leq \sum_{\substack{\text{chosen}\\\piatwo}}\left(|\Qtilde|+\sum_{R \in \Rtilde_N} |\Qhat{R}| \right)\\
\end{align*}
The number of chosen prohibitions $\piatwo$ is again bounded by $2^{N(0,t,t+\|\F\|,0)}$. Then, $|\Qtilde|$ is bounded by $g$ of a graph where $G_A-S$ is connected that has same treedepth. Furthermore, $|\Rtilde|$ is bounded by $f$ of a connected graph with the same treedepth and for each of these (``few'') $R \in \Rtilde$ we compute $\Qhat{R}$. The size of this set is bounded by $g$ where one of the parameters $\mu$,$\nu$ or $\xi$ has decreased, as given by Claim \ref{claim:at_least_one_parameter_decreases} and it is easy to verify that $\td(G_A-S)$ remains constant and the connectedness of the graph can only improve.

\end{itemize}

(\textbf{$\mathbf{G_A - S}$ is connected})  In this case $|\Q'| = |\Q^-| + |\Q^+|$ and $|\mathcal{R}_N'| = |\mathcal{R}_N^-|+|\mathcal{R}_N^+| =  |\mathcal{R}_N^-|+|\mathcal{R}'^+_N|$ and all these sets are computed on  a graph where $G_A - S$ has strictly smaller treedepth. Thereby, the result follows.

(\textbf{Concluding the proof})
The combination of the above inequalities gives expressions for $g$ and $f$, where the right-hand side only contains constants and some multiplication of $g$ and $f$ with (lexicographically) smaller input parameters. One can recursively find suitable $f$ and $g$ to bound the size of $\mathcal{R}_N \leq \mathcal{R}_N'$ and $\Q'$ respectively.
\end{innerproof}
By Claims \ref{claim:main-lemma-disconnected}, \ref{claim:main-lemma-disconnected-else}, and \ref{claim:main-lemma-connected-case}, it follows that $\Q'$ has the required properties and $\mathcal{R}_N \subseteq \mathcal{R}_N'$. The required bounds on $|\mathcal{R}_N|$ and $|\Q'|$ now follow from Claim \ref{claim:QandRsmall}. This concludes the proof of Lemma \ref{lem:subset_Q}.
\end{proof}
Using Lemma \ref{lem:subset_Q}, it is now straightforward to prove Lemma \ref{lem:main}.
\begin{lemma:main:statement}[{Main lemma}] \label{lem:main:exact}
%
\mainlemma
\end{lemma:main:statement}
\begin{proof}
Let \F, \Q, and graph $C \in \lgraphs$ be given. Apply Lemma \ref{lem:subset_Q} with \Q, \F, $G_A := C$, $G_B$ and $G_C$ empty, $\Pi_A  := \F$, $\Pi_B = \Pi_C = \emptyset$ and $R_B = \emptyset$. The lemma gives set $\Q^*\subseteq \Q$, that satisfies the required size bound, since $\td(G_A - S) = \td(C)$, $C$ is connected, $\mu(G_A,\Pi,S) = 0$, $\nu(\Pi_A)$ and $\xi(R_B)$ are bounded by a constant depending only on~$\F$ and~$\td(C)$ (per Claim \ref{claim:parameter_bounds}) and~$|S| = 0$.

Furthermore, for each $R \in \mathcal{R}_{\Q}$ there exists $q \in \Q^*, r\in R$ with $q \leqlb r$. It remains to show that there exists no optimal \F-minor free deletion in $C$ that breaks $\Q^*$. Let $Y \in \minFdelsol(G)$ by any optimal solution, we show that $C-Y$ has some $\Q^*$-minor. 

Let $R \in \mathcal{R}_0$ be the remainder corresponding to $Y$ in $C$.  Let $R' \in \mathcal{R}$ be a minimal remainder with $R' \subseteq R$ and with corresponding solution $Y'$. Since no optimal solution breaks $\Q$ in $C$ by assumption, $C-Y'$ has a $q$-minor for some $q \in \Q$. Since $\extend_{+0}(\Q) = \Q$ by definition, and $\Q \subseteq \multipieces_{+0}(\Q)$ since \Q has an empty boundary, it follows that $q \in R'$. Hereby, $R' \in \mathcal{R}_{\Q}$. Thus, there exist $q^* \in \Q^*$ and $r \in R'$ such that $q^*\leql r$. It is easy to see from the definition of $R'$ and $R'\subseteq R$ that thereby $q^* \leql C-Y$, as required.
\end{proof} 

\section{Kernelization lower bound} \label{sec:lb}
In this section we prove for every fixed~$\eta$ that is sufficiently large, that the degree of any polynomial bounding the kernel size for \textsc{Vertex Cover} parameterized by a treedepth-$\eta$ modulator must be exponential in~$\eta$. This follows by analyzing the treedepth of the gadgets used in existing lower bound constructions for \textsc{Vertex Cover} that targeted other structural parameterizations (cf.~\cite[Thm. 5.3]{Jansen13} and~\cite[Thm. 2]{FominS16}).

For an integer~$t$, define a \emph{path of~$t$ triangles} as the graph~$P^\Delta_t$ obtained from~$t$ vertex-disjoint triangles~$\{a_1, b_1, c_1\}, \ldots, \{a_t, b_t, c_t\}$ by adding the edges~$\{b_i, a_{i+1}\}$ for~$i \in [t-1]$. A \emph{clause gadget of size~$t$}~\cite[Def. 5.10]{Jansen13} is the graph~$C_t$ obtained from a path of~$t$ triangles by  adding an extra vertex~$z$ adjacent to~$a_1$ and~$b_t$.

\begin{proposition} \label{prop:treedepth:path}
A path of~$2^r$ triangles has treedepth at most~$r+3$.
\end{proposition}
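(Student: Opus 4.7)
My plan is a straightforward induction on $r$, using the recursive characterization of treedepth from Definition~\ref{def:treedepth}. The base case $r=0$ is immediate: a single triangle is $K_3$, which has treedepth exactly $3=0+3$. For the inductive step with $r \geq 1$, I will set $s := 2^{r-1}$ and remove the ``middle'' vertex $b_s$ from $P^\Delta_{2^r}$. Since $P^\Delta_{2^r}$ is connected, the recursive definition gives
\[
\td(P^\Delta_{2^r}) \;\leq\; \td(P^\Delta_{2^r} - \{b_s\}) + 1.
\]
The goal is then to show $\td(P^\Delta_{2^r} - \{b_s\}) \leq r+2$.

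The key observation is that $b_s$ is a cut vertex whose removal splits the graph into two parts, each of which is a \emph{subgraph} of a path of $2^{r-1}$ triangles. The right component $R$ is induced on $\{a_{s+1},b_{s+1},c_{s+1},\ldots,a_{2^r},b_{2^r},c_{2^r}\}$; relabelling indices, it is literally isomorphic to $P^\Delta_{2^{r-1}}$. The left component $L$ is induced on $\{a_1,b_1,c_1,\ldots,a_{s-1},b_{s-1},c_{s-1},a_s,c_s\}$, i.e.\ what one obtains from $P^\Delta_s = P^\Delta_{2^{r-1}}$ by deleting its last $b$-vertex; hence $L$ is a subgraph of $P^\Delta_{2^{r-1}}$. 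Since treedepth is monotone under taking subgraphs, the inductive hypothesis gives $\td(L),\td(R) \leq (r-1)+3 = r+2$, and since the treedepth of a disconnected graph is the maximum over its components, $\td(P^\Delta_{2^r}-\{b_s\}) \leq r+2$, as desired.

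There is no real obstacle; the only point that merits care is checking that $L$ really is a subgraph of $P^\Delta_{2^{r-1}}$. This is pure bookkeeping on the edge set: the edges of $L$ are the triangles on $\{a_i,b_i,c_i\}$ for $i<s$, the linking edges $\{b_i,a_{i+1}\}$ for $i<s-1$, the edge $\{b_{s-1},a_s\}$, and the edge $\{a_s,c_s\}$, every one of which is present in $P^\Delta_{2^{r-1}}$. Once this is verified, the stated bound $\td(P^\Delta_{2^r}) \leq r+3$ follows at once from the inequality above.
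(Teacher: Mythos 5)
Your proof is correct and takes essentially the same approach as the paper: induction on $r$, removing a middle vertex to split the path of triangles into two subgraphs of $P^\Delta_{2^{r-1}}$, then invoking subgraph-monotonicity of treedepth. The only cosmetic differences are that the paper starts the base case at $r=1$ and deletes the middle $a$-vertex ($a_{1+t/2}$) instead of the middle $b$-vertex ($b_s$); both choices yield the same split.
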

\begin{proof}
Proof by induction on~$r$. For~$r=1$, it is easy to verify that the path of two triangles has treedepth four. For~$r \geq 2$ we consider a path~$P^\Delta_t$ of~$t = 2^r$ triangles. Observe that by deleting vertex~$a_{1+t/2}$ in~$P^\Delta_t$, it splits into two connected components that are both subgraphs of~$P^\Delta_{t/2}$. By definition, the treedepth of~$P^\Delta_t$ is at most one plus the maximum treedepth of the two components. Since treedepth does not increase when taking subgraphs, by induction both components have treedepth at most~$(r-1) + 3$. Hence~$P^\Delta_{2^r}$ has treedepth at most~$r+3$.
\end{proof}

Proposition~\ref{prop:treedepth:path} implies that a clause gadget of size~$2^r$ has treedepth at most~$r+4$; the extra vertex that is added to~$P^\Delta_{2^r}$ to form~$C_{2^r}$, increases the treedepth by at most one. Using this gadget, we can prove Theorem~\ref{thm:lowerbound}.

\begin{thm:lowerbound:statement}
\lowerbound
\end{thm:lowerbound:statement}
\begin{proof}
We use the following result by Dell and van Melkebeek~\cite{DellM14}: for each constant~$q \geq 3$,~$\varepsilon > 0$, and decision problem~$L$, there is no polynomial-time algorithm that transforms an instance of \textsc{$q$-cnf-sat} with~$n$ variables into an equivalent instance of~$L$ of bitsize~$\Oh(n^{q-\varepsilon})$, unless \containment.

In earlier work~\cite[Thm. 5.3]{Jansen13}, the first author gave a polynomial-time construction that achieves the following. Starting from a \textsc{$q$-cnf-sat} instance~$\phi$ with~$n$ variables clauses of size~$q$, it constructs a graph~$G$, integer~$k$, and vertex set~$X \subseteq V(G)$ such that the following holds:
\begin{enumerate}
	\item $|X| = 2n$,
	\item $G$ has an independent set of size~$k$ if and only if~$\phi$ is satisfiable, and
	\item each connected component of~$G-X$ is a clause gadget of size~$q$.
\end{enumerate}
Since the dual of an independent set is a vertex cover, it follows that~$G$ has a vertex cover of size~$|V(G)| - k$ if and only if~$\phi$ is satisfiable.

Using this construction we complete the proof. Suppose that for some~$\eta \geq 6$ and~$\varepsilon > 0$, there is a kernel of bitsize~$\Oh(|X|^{2^{\eta - 4}-\varepsilon})$ for \textsc{Vertex Cover} parameterized by a treedepth-$\eta$ modulator. Let~$q := 2^{\eta - 4} \geq 2^2 \geq 4$. Then we can compress an $n$-variable instance~$\phi$ of~\textsc{$q$-cnf-sat} into an instance of~\textsc{Vertex Cover} as follows: 
\begin{enumerate}
	\item Use the construction above to transform~$\phi$ into an equivalent instance~$(G,|X|, k' := |V(G)| - k)$ of \textsc{Vertex Cover} in which each connected component of~$G-X$ is a clause gadget of size~$q$.
	\item It follows that~$X$ is a modulator of size~$2n$ to a graph of treedepth~$\eta$. Apply the hypothetical kernel for \textsc{Vertex Cover} parameterized by treedepth-$\eta$ modulator. By assumption, it outputs an equivalent instance of bitsize~$\Oh(|X|^{2^{\eta - 4}-\varepsilon}) = \Oh((2n)^{q - \varepsilon}) = \Oh(n^{q - \varepsilon})$.
\end{enumerate}
By the cited result of Dell and van Melkebeek~\cite{DellM14}, the existence of such a compression for any constant~$q \geq 3$ implies \containment.
\end{proof}

Since a graph with~$n$ vertices can be represented in~$n^2$ bits, the lower bound on bitsize in Theorem~\ref{thm:lowerbound} implies that \textsc{Vertex Cover} parameterized by treedepth-$\eta$ modulator does not have a kernel with~$\Oh(|X|^{2^{\eta-5}-\varepsilon})$ vertices, unless \containment.

\end{document}